\providecommand{\tightlist}{%
  \setlength{\itemsep}{0pt}\setlength{\parskip}{0pt}}
\theoremstyle{plain}
\newtheorem{theorem}{Theorem}[section]
\newtheorem{proposition}[theorem]{Proposition}
\newtheorem{lemma}[theorem]{Lemma}
\theoremstyle{definition}
\newtheorem{definition}[theorem]{Definition}
\theoremstyle{remark}
\newcommand{\LB}[1]{\todo{LB: #1}}
\newcommand{\actions}[0]{A^i}
\newcommand{\transSS}[0]{P_{ss'}}
\newcommand{\ris}[0]{r^i_s}
\newcommand{\dxi}[2]{x^i_{#1, #2}}
\newcommand{\dui}[2]{u^i_{#1, #2}}
\newcommand{\dxin}[1]{x^i_{n, #1}}
\newcommand{\dxn}[2]{x^{#1}_{n, #2}}
\newcommand{\dxinp}[1]{x^i_{n+1, #1}}
\newcommand{\duin}[1]{u^i_{n,#1}}
\newcommand{\duinp}[1]{u^i_{n+1,#1}}
\newcommand{\stfis}[2]{f^i_{s, #1}(#2)}
\newcommand{\duins}[0]{u^i_{n, s}}
\newcommand{\dxins}[0]{x^i_{n, s}}
\newcommand{\countsn}[0]{s^{\sharp}_n}
\newcommand{\countsnp}[0]{s^{\sharp}_{n-1}}
\newcommand{\payt}[2]{\Gamma_{#1}(#2)}
\newcommand{\pay}[1]{\Gamma_{#1}}
\newcommand{\wstate}[1]{s_{-}(#1)}
\newcommand{\cut}[0]{u(t)}
\newcommand{\stfs}[2]{f_{s, #1}(#2)}
\newcommand{\cxs}[2]{x^{#1}_s(#2)}
\newcommand{\limt}[0]{\xrightarrow[t\rightarrow \infty] {}}
\newcommand{\ratest}[2]{\beta_{#1}(#2)}
\newcommand{\states}[0]{S}
\newcommand{\crateu}[1]{\alpha\left(#1\right)}
\newcommand{\ratelb}[0]{\beta_{-}}
\newcommand{\indic}[2]{1_{#1=#2}}
\newcommand{\startt}[0]{0}
\newcommand{\nlog}[0]{\overline{\log}}
\newcommand{\pos}[1]{\left( #1\right)_{+}}
\newcommand{\diff}[1]{\frac{d #1}{dt}}
\newcommand{\paysup}[0]{\Psi}
\newcommand{\ql}[0]{$Q$-learning}
\newcommand{\argmin}[1]{\underset{#1}{\operatorname{arg\,min}}}
\newcommand{\argmax}[1]{\underset{#1}{\operatorname{arg\,max}}\,}
\newcommand{\argmaxi}[1]{\operatorname{arg\,max}_{#1}}
\newcommand{\eref}[1]{(\ref{#1})}
\newcommand{\sax}{y}
\NewDocumentCommand{\mcus}{ O{s} O{t} }{u_{#1}(#2)}
\NewDocumentCommand{\mdus}{ O{s} O{n} O{} }{u^{#3}_{#2\ifthenelse{\equal{#1}{}}{}{, #1}}}
\NewDocumentCommand{\mpayt}{ O{s} O{t} }{\Gamma_{#1}(#2)}
\NewDocumentCommand{\mdual}{ O{s} O{t} }{w_{#1}(#2)}
\NewDocumentCommand{\mdfsu}{ O{x_{#3,#2}} O{s} O{n} O{u_{#3}} }{f_{#2, #4}(#1)}
\NewDocumentCommand{\mfsu}{ O{x_{#2}(#3)} O{s} O{t} O{u} }{f_{#2, #4}(#1)}
\NewDocumentCommand{\mdxsn}{O{} O{s} O{n}}{x^{#1}_{#3, #2}}
\NewDocumentCommand{\bestr}{ O{i} O{s} O{u} O{x^{-#1}_{#2}}}{\text{br}^{#1}_{#2, #3}(#4)}
\NewDocumentCommand{\cxst}{O{} O{s} O{t}}{x^{#1}_{#2}\ifthenelse{\equal{#3}{}}{}{(#3)}}
\NewDocumentCommand{\optgap}{O{i} O{s} O{t}}{\Delta^{#1}_{#2}(#3)}
\NewDocumentCommand{\probatr}{O{} O{s} O{\cxst}}{P_{#2,#1}(#3)}
\NewDocumentCommand{\diffu}{O{s} O{t} O{\crateu #2}}{\frac{\mpayt[#1][#2] - \mcus[#1][#2]}{\crateu{#2}}}
\NewDocumentCommand{\fgap}{O{s} O{n}}{\Gamma_{#2, #1}}
\NewDocumentCommand{\durate}{O{n}}{\alpha_{#1}}
\NewDocumentCommand{\sdurate}{O{n}}{\sigma_{#1}}
\newcommand{\mparagraph}[1]{\paragraph{#1}}
\newcommand{\optnl}{}
\newcommand{\optmult}[1]{\begin{equation*}#1\end{equation*}}
\newcommand{\optspace}[1]{}
\newcommand{\optesp}{}
\newcommand{\noicml}{\iffalse}
\newcommand{\foricml}{\if11}
\newcommand{\isicml}{1}
\icmltitlerunning{Fictitious Play and Best-Response Dynamics in Identical Interest and Zero Sum Stochastic Games}
\begin{document}

\twocolumn[
\icmltitle{Fictitious Play and Best-Response Dynamics \\ in Identical Interest and Zero Sum Stochastic Games}



\icmlsetsymbol{equal}{*}

\begin{icmlauthorlist}
\icmlauthor{Lucas Baudin}{yyy}
\icmlauthor{Rida Laraki}{yyy}
\end{icmlauthorlist}

\icmlaffiliation{yyy}{Université Paris-Dauphine - PSL, France}

\icmlcorrespondingauthor{Lucas Baudin}{lucas.baudin@dauphine.eu}

\icmlkeywords{Machine Learning, Stochastic Games, Fictitious Play, Best-Response Dynamics}

\vskip 0.3in
]



\printAffiliationsAndNotice{\icmlEqualContribution} 

\begin{abstract}
This paper proposes an extension of a popular decentralized discrete-time learning procedure when repeating a static game called fictitious play (FP) \cite{brown1951iterative, robinsonIterativeMethodSolving1951} to a dynamic model called discounted stochastic game \cite{shapleyStochasticGames1953}. Our family of discrete-time FP procedures is proven to converge to the set of stationary Nash equilibria in identical interest discounted stochastic games. 
This extends similar convergence results for static games \cite{mondererFictitiousPlayProperty1996}.
We then analyze the continuous-time counterpart of our FP procedures, which include as a particular case the best-response dynamic introduced and studied by~\citet{leslieBestresponseDynamicsZerosum2020} in the context of zero-sum stochastic games. We prove the converge of this dynamics to stationary Nash equilibria in identical-interest and zero-sum discounted stochastic games. Thanks to stochastic approximations, we can infer from the continuous-time convergence some discrete time results such as the convergence to stationary equilibria in zero sum and team stochastic games \cite{hollerLearningDynamicsReinforcement2020}.



\end{abstract}

\section{Introduction}
\label{sec:introduction}

\textit{Learning Nash equilibria of a static game} $G$ after playing repeatedly $G$ is a subject that has been widely studied  almost since the beginning of game theory \citep{fudenbergTheoryLearningGames1998, hartSimpleAdaptiveStrategies2013, youngStrategicLearningIts2004, cesa-bianchiPredictionLearningGames2006,brown1951iterative,robinsonIterativeMethodSolving1951}. 
In contrast, \textit{learning Nash equilibria of a dynamic model such as a stochastic game} has comparatively been much less developed.
Some noticeable exceptions are the two recent papers ~\citep{leslieBestresponseDynamicsZerosum2020, sayinFictitiousPlayZerosum2020} which develop quite similar systems that converge to stationary equilibria in zero-sum discounted stochastic game. Our paper extends the continuous-time dynamics of ~\citep{leslieBestresponseDynamicsZerosum2020} to a large parametric class and to discrete-time. We show the convergence of the two systems to stationary equilibria in identical interest and also in zero-sum stochastic games.

Discounted stochastic games, introduced by \citet{shapleyStochasticGames1953},
model strategic interactions between players with a state variable.
Thus, compared to non-stochastic games, actions that players take impact
their current payoff but also a state variable that may influence their
future payoff. Therefore, this class of games offers a rich framework
\citep{neymanStochasticGamesApplications2003} that is especially well
suited for economic applications (see the survey by
\citet{amirStochasticGamesEconomics2003} and references therein), or
engineering applications. In the latter, this belongs to the more general framework of multi-agent reinforcement learning (see \citet{busoniuComprehensiveSurveyMultiagent2008} for a survey).

\citet{finkEquilibriumStochasticPerson1964} proved that when there are finitely many players and actions (our framework), any discounted stochastic game admits a stationary (mixed) Nash equilibrium (e.g. a decentralized randomized policy that depends only on the state variable). The proof is implied by the fact that a stationary equilibrium is the fixed point of an operator where, given other players stationary strategies, each player strategy is an optimal stationary policy in a Markov Decision Process. A stochastic game admits typically many other Nash equilibria. To see why, if there is only one state, it is a repeated game and a stationary equilibrium consists on playing independently the same mixed strategy. Therefore, the set of stationary Nash equilibrium payoffs of the discounted repeated game coincides with the Nash equilibria of its stage game (which is known to be a semi-algebraic set with several connected components \cite{larakiMathematicalFoundationsGame2019}). On the other hand, the famous folk theorem of repeated game \cite{aumannLongTermCompetitionGameTheoretic1994, 10.2307/1911307,larakiMathematicalFoundationsGame2019} shows that when the discount factor is large enough, any feasible and individually rational payoff of the stage game is a Nash equilibrium payoff of the repeated game.

Stationary Nash equilibria are the simplest of all equilibria and it is desirable to construct some natural learning procedures that are provable to converge to them. This is a challenging problem even in identical interest or zero-sum stochastic games (the subject of our paper).  The reason is that the payoff function of a player in a discounted stochastic game is not linear, nor concave or quasi-concave when the players are restricted to their stationary strategies and thus, no gradient method is guaranteed to converge, even to a local Nash equilibrium \cite{daskalakisComplexityConstrainedMinmax2021}. 
Only recently, two extensions of the oldest of learning procedures --fictitious play (FP)-- have been proposed and proven to converge to stationary equilibria but only in zero-sum stochastic games \citep{leslieBestresponseDynamicsZerosum2020, sayinFictitiousPlayZerosum2020}. Our article extends one of the two papers in both discrete and continuous time and shows that our systems converge in both regimes to the set of stationary Nash equilibria in identical-interest and zero-sum discounted stochastic games. 

Our family of procedures combines classical fictitious play (a kind of myopic best reply in the local game given some prior constructed from empirical observations) with an elaborate rule in the spirit of $Q$-learning to update the expected continuation payoff, as in \citep{leslieBestresponseDynamicsZerosum2020, sayinFictitiousPlayZerosum2020}. $Q$-learning is a quite famous model-free rule which allows to learn the stationary policy in dynamic programming without even knowing the transition probabilities or the state space). The updating rule we study needs the knowledge of the model, however, as observed in the conclusion, they can be easily modified to become model-free. 



The combination of FP and $Q$-learning is not surprising. A major trend in recent years is the advent of efficient reinforcement
learning algorithms \citep{suttonReinforcementLearningIntroduction2018}.
 \(Q\)-learning
\citep{watkinsTechnicalNoteQLearning1992} is one of the most successful
model-free algorithms \citep{kondaActorCriticTypeLearning1999} with numerous extensions~\citep{NIPS2010_091d584f, NEURIPS2020_0d2b2061}. 
On the game theory side, fictitious play \cite{brown1951iterative, robinsonIterativeMethodSolving1951} is one of the most
studied procedures of learning in games.
The idea of combining concepts from \(Q\)-learning and fictitious play emerged only
recently with the work of
\citet{leslieBestresponseDynamicsZerosum2020,sayinFictitiousPlayZerosum2020} in the context of zero-sum stochastic games. In these
papers, a mechanism inspired by \(Q\)-learning is used to learn rewards
in future states and fictitious play (or in continuous time, the related
best response dynamics) is employed to choose the action in the current
state taking into account the interaction with other players. Typically, players learn at a
fast rate the actions of the other players in every state but compute
the future rewards at a comparatively slower pace. Our work 
extends \citet{leslieBestresponseDynamicsZerosum2020}, by introducing other procedures with several time-scales, and prove their convergence to the set of Nash equilibria in
identical interest and in zero-sum stochastic games, resulting in a decentralized algorithm for fully cooperative multi-agent reinforcement learning \cite{busoniuComprehensiveSurveyMultiagent2008}. In contrast with \citet{leslieBestresponseDynamicsZerosum2020, sayinDecentralizedQLearningZerosum2021}, our algorithm can use the same timescale to learn actions and future rewards, which is of practical interest in implementations. Importantly, we prove convergence to global Nash equilibria and not to local Nash equilibria.

\hypertarget{contributions.}{%
\paragraph{Contributions.}\label{contributions.}}
Our contributions are as follows:
\begin{itemize}
\item
We define procedures to play stochastic games in discrete time combining ideas from fictitious play and $Q$-learning. We prove their convergence to the set of stationary Nash equilibria in identical interest stochastic games. The proof is given directly in discrete time.
\item We define the continuous-time counterpart of our procedures. It results in a generalization of the  best-response dynamics of \citet{leslieBestresponseDynamicsZerosum2020} where the relative timescales of the estimation of continuation and of other players strategies is much less restricted. We prove the convergence of our continuous-time dynamics to the set of stationary equilibria in identical interest and in zero-sum stochastic games.
\item The convergence in continuous time allows us to prove some new convergence results in discrete time for zero-sum and team stochastic games \cite{hollerLearningDynamicsReinforcement2020} (a class which includes identical interest stochastic games).
\end{itemize}

\paragraph{Outline}Section~\ref{sec:background} gives initial definitions and assumptions. The next section describes related works. Then, Section~\ref{sec:fictitious_play} introduces two families of fictitious play procedures in discrete time  whose convergence is shown in identical interest stochastic games. The continuous time best response dynamics are defined in Section~\ref{sec:brd} together with their convergence in identical interest and in zero-sum stochastic games. The last section uses the continuous and discrete time results to infer convergence of the discrete time procedures in zero-sum and in team stochastic games \cite{hollerLearningDynamicsReinforcement2020}.

\section{Background}
\label{sec:background}

\hypertarget{stochastic-games}{%
\paragraph{Stochastic games}\label{stochastic-games}}

We study dynamically interactive multi-agent systems based on the model of discounted stochastic games. Two
or more players can take actions over an infinite horizon. Players' actions
affect both their current stage payoffs but also the transition probability of the future
state, which is the second determining factor of the total discounted average payoff.
Therefore, compared to standard repeated games where there is no evolving state variable, stochastic games add a
layer of complexity: a player who wants to optimize its payoff should
strike a balance between the instantaneous payoff optimization and an
advantageous orientation of the state. As in \cite{leslieBestresponseDynamicsZerosum2020,sayinFictitiousPlayZerosum2020}, we will focus on
finite games where the state space, the action sets and the player
set are finite.  

\begin{definition}\label{def:sg} \textbf{Stochastic games} are tuples
\(G=(S, I, (A^i)_{i \in I}, (r^i_s)_{i\in I, s \in S}, (P_s)_{s\in S})\)
where \(S\) is the state space (a finite set), \(I\) is the finite set of
players, \(A^i\) is the finite action set of player \(i\),
\(A := \Pi_{i\in I} A^i\) is the set of action profiles,
\(r^i_s : A \rightarrow \mathbb{R}\) is the stage reward of player
\(i\),
and \(P_s: A \rightarrow \Delta(S)\) is the transition probability
map (where $\Delta(S)$ is the set of probability distributions on $S$).\end{definition}

\newcommand{\aname}{a}

A stochastic game is played in discrete time as follows: it starts in an
initial state \(s_0 \in S\) and at every time step  \(n \in \mathbb{N}\),
the system state is \(s_n\). Knowing the current state \(s_n\) and the past history of states and actions \((s_0,a_0,...,s_{n-1},a_{n-1})\), every player \(i \in I\) chooses, independently from the other players, an action
\(a^i_n \in A^i\) (potentially at random) 
and receives a stage reward of \(r^i_{s_n}(a_n)\). The
new state \(s_{n+1}\) is the realization of a random variable whose
distribution is \(P_{s_n}(a_n)\). The total payoff of such a sequence of
play for player \(i\) is \((1-\delta)\sum_{k=0}^{\infty} \delta^k r^i_{s_k}(a_k)\) where \(\delta \in (0, 1)\) is the discount factor (the $1-\delta$ factor is the usual normalization).

A behavioral strategy $\sigma^i$ for player $i$ is a mapping associating with 
each stage $n\in \mathbb{N}$, history $h_n\in (S\times A)^n$ and current state $s$, a mixed action $x^i_n=\sigma^i(n,h_n,s)$ in $\Delta(A^i)$. The behavioral strategy is pure if its image is always in $A^i$. By Kolmogorov's extension theorem, each behavioral strategy profile induces a unique probability distribution on the set of infinite histories, from which one can compute an expected discounted
payoff for every player \cite{neymanStochasticGamesApplications2003}. A stationary strategy of player \(i\) is the simplest of behavioral strategies. It depends only on the current state $s$ but not on the period $n$ nor on the past history $h_n$. As such, a stationary strategy can be identified with an element of \(\Delta(A^i)^S\) (a mixed action per state interpreted as: whenever the state is $s$, $i$ plays randomly according to distribution $x^i_s$). The set of stationary strategy profiles is $\Pi_{i\in I} \Delta(A^i)^S$. For \(y^i \in (\Delta(A^i))^S\) and \(x \in \Pi_{i\in I} \Delta(A^i)^S\), we denote by \((y^i, x^{-i})\) the stationary strategy profile where $i$ changes its strategy from $x^i$ to $y^i$. 

Extending a result of \cite{shapleyStochasticGames1953} for zero-sum games, \cite{finkEquilibriumStochasticPerson1964} proved the existence of stationary Nash equilibria in every finite discounted stochastic game, as a fixed point of the Shapley operator. Fink's characterization implies that a stationary strategy profile forms a Nash equilibrium iff there is no pure stationary profitable deviation. 




\begin{proposition}[Stationary equilibrium characterization]
A stationary profile \(x \in \Pi_{i\in I} \Delta(A^i)^S\) is a Nash equilibrium if and only if
no pure stationary deviation is profitable: for every player \(i\), its
expected total payoff with \(x\) is greater or equal than its expected
total payoff of strategy \((b, x^{-i})\) for any
\(b \in (A^i)^{S}\).\end{proposition}

Fink's result implies that (1) checking that a stationary profile is a Nash equilibrium is equivalent to solving finitely many polynomial inequalities, implying that the set of stationary strategy profiles is a semi-algebraic set \cite{neymanStochasticGamesApplications2003}; (2) a stationary profile is a Nash equilibrium of our stochastic games if and only if it is a Nash equilibrium of a
restricted stochastic game where each player $i$ is restricted to play a stationary strategy. While this restricted game is smooth (strategy spaces are convex/compact and the payoff functions analytic) the payoff functions are not linear, nor concave, nor quasi-concave with respect to a player own-strategy. This makes the computation of an equilibrium very hard: no learning gradient-based method is guaranteed to convergence, even to a local Nash equilibrium, and even in a zero-sum game \cite{daskalakisComplexityConstrainedMinmax2021}. But this is not any non-concave game: its dynamic nature is structured enough to allow learning to occur. Using the dynamic programming principle that stationary Nash equilibria satisfy allowed  \cite{leslieBestresponseDynamicsZerosum2020, sayinFictitiousPlayZerosum2020} to introduced some learning procedures that converge to stationary equilibria in zero-sum  stochastic games. We enlarge the set of procedures in \cite{leslieBestresponseDynamicsZerosum2020} and show convergence to stationary equilibria in identical interest and also zero-sum stochastic games. Formally, a stochastic game is of \emph{identical
interest} if all players have the same stage reward
function, i.e., for every state \(s \in S\), there is a function
\(r_s\) such that for every player \(i\), \(\ris = r_s\) and it is zero-sum when there are only two players 1 and 2 and $r^1_s+r^2_s=0$. 

When there is only one state (a repeated game) \citep{mondererFictitiousPlayProperty1996} proved that fictitious play converges to Nash equilibria of the stage game (hence to stationary equilibria of the repeated game). Their proof uses extensively the multi-linearity of the (common) payoff function. Since we don't have this multi-linearity nor do we have multi-concavity or multi-quasi-concavity, we must use the structure of the problem, namely that stationary equilibria satisfy a dynamic programming principle (each player is optimizing at every state $1-\delta$ its current payoff plus $\delta$ the expectation of the continuation payoff). As such, our procedures will use fictitious play at the local static game in which the continuation payoff vector is fixed and update the continuation payoff using a $Q$-learning like rule. But, since learning can only occur if all states are visited infinitely often, the following assumption is needed.  




\begin{definition}[Ergodicity]
A stochastic game is ergodic if there is a finite time $T$ such that for every $s$ and $s'$ there is a positive probability that the system starting from $s$ is in $s'$ after $T$ steps for any actions  taken.
\end{definition}




Some final notations. We denote by \(\transSS (\aname)\) the probability to go to state \(s'\) starting from \(s\) with action
profile \(\aname \in A\). As players will be randomizing, let the functions
\(P_{s s'}\) and \(\ris\) be multi-linearly extended to mixed action
profiles (i.e., $\Pi_{i\in I} \Delta(A^i)^S$) and are therefore \(I\)-linear. Finally, a stationary equilibrium payoff is an \(S\times I\)-vector that corresponds to a stationary Nash equilibrium.

\section{Related Work}
\label{sec:related_work}

\paragraph{Fictitious Play}

Fictitious play (FP) is a procedure that was introduced in discrete time to play the same stage 
game repeatedly. It asks every player
to play a best response to a prior (a mixed strategy profile) which is equal to the empirically past actions of the opponents. It was initially proposed by \citet{brown1951iterative} and \citet{robinsonIterativeMethodSolving1951} who proved that when the stage game is a zero-sum game and both players use FP, the empirical distribution of actions converges to the set of Nash equilibria of the stage game. A similar result has been obtained when the stage game is a potential game
\citep{mondererFictitiousPlayProperty1996}, a \(2\times n\) game
\citep{bergerFictitiousPlayGames2005}, or a mean-field game \cite{perrinFictitiousPlayMean2020}.
Similar convergence results have been obtained for some variants
such as smooth FP, or vanishingly smooth FP, o stochastic FP
\citep{benaimConsistencyVanishinglySmooth2013,fudenbergConsistencyCautiousFictitious1995, fudenbergTheoryLearningGames1998,hofbauerGlobalConvergenceStochastic2002} 
But FP as well as all no-regret algorithms fail to converge to Nash equilibra in all finite games \cite{ shapleyTopicsTwoPersonGames1964,10.2307/3132156,hofbauer_sigmund_1998,DEMICHELIS2000192}.


FP is much less studied when the stage games vary (stochastic games) and there are many possible extensions. \citet{vriezeFictitiousPlayApplied1982} used a FP like algorithm to compute the value of a stochastic game but it is not a learning procedure (as it assumes all the states are observed and updated at every stage and so it does not define a behavioral strategy). \citet{perkinsAdvancedStochasticApproximation2013} defined and studied another FP procedure in zero-sum and identical interest stochastic games relying crucially on two-timescale updates and with some restrictions on the discount factor (for zero-sum games) and the structure of equilibria (for identical interest stochastic games). 
More recently, \citet{sayinFictitiousPlayZerosum2020} introduced another discrete-time variant of FP and deduced its convergence in zero-sum stochastic games from the convergence of an associated continuous-time best response dynamics explained below.  

\paragraph{Best-response dynamics} In continuous time, best-response dynamics \cite{harrisRateConvergenceContinuousTime1998a,matsuiBestResponseDynamics1992} is based on the same principle as fictitious play: each player adjusts its mixed action towards the best-response to the time-average mixed action of other players. For
repeated --non-stochastic-- games, it is the continuous-time
counterpart of the discrete-time fictitious play as the stochastic approximations framework \cite{benaimStochasticApproximationsDifferential2005} allows to deduce from the convergence of the continuous time dynamics the convergence of the (discrete-time) FP. Recently \citet{leslieBestresponseDynamicsZerosum2020} introduced an extension of the best-response dynamics to stochastis games with the continuation payoff updated at a slower pace in the spirit of $Q$-Learning (described below) and proved the convergence of this system to the set of stationary Nash equilibria of the underlying discounted stochastic game. However, \citet{leslieBestresponseDynamicsZerosum2020} did not prove the convergence to stationary equilibria of the discrete-time counterpart of their dynamics: we will do it in this article for their dynamics and others in zero-sum and also in identical interest stochastic games. \citet{sayinFictitiousPlayZerosum2020} defined an alternative dynamics close to the one of \citet{leslieBestresponseDynamicsZerosum2020} and using a two time-scale stochastic approximation theory, they show convergence of their discrete time FP to stationary equilibria of the zero-sum discounted stochastic game. In contrast, we give a direct convergence proof of our discrete-time system in identical interest stochastic games. But the convergence of our discrete time FP procedure in zero-sum game will be deduced from the continuous time system, thanks to an elaborate stochastic approximation technique. 


\paragraph{\ql{}} \citet{watkinsLearningDelayedRewards1989} introduced the \ql{} algorithm designed to control MDPs. It had a major impact and there are multiple generalizations, including offline \ql{} \cite{NEURIPS2020_0d2b2061}, double \ql{} \cite{NIPS2010_091d584f} or \ql{} with no-regret procedures \cite{kashCombiningNoregretQlearning2020}. There is a wide range of applications, from robots control \cite{taiRobotExplorationStrategy2016} to SAT solving \cite{kurinCanQlearningGraph2020}. \ql{} is a model-free algorithm, meaning that it does not require a complete specification of the environment such as the transition probability between states. A step proceeds as follows: starting from a state $s_t$, an action $a_t$ is chosen and this results in a new (random) state $s_{t+1}$ chosen by the environment while the learner gets an instantaneous payoff $R_{t+1}$. At every step, a $Q$-function $Q_t$ defined on every state-action pair is updated towards $R_{t+1}+\delta \max_a Q_t(s_{t+1}, a)$.

$Q$-learning was generalized to multi-agent systems. One line of work comprises algorithms that solve at every step the stage game defined as follows: every player has actions of the current state, and payoffs are the payoff of the $Q$-function $Q_t(s_t, \cdot)$. Then the values of the $Q$-function are updated towards the values of the stage game. This leads to algorithms such as Nash-Q \cite{huMultiagentReinforcementLearning1998, huNashQlearningGeneralsum2003} or Team-Q and Minimax-Q \cite{littmanValuefunctionReinforcementLearning2001}. For a complete survey, see \cite{busoniuComprehensiveSurveyMultiagent2008} and references therein. 


\paragraph{Combining $Q$-learning and fictitious play} To extend FP to stochastic games, the challenge is to define and compute what is a best-response to empirical observations: given a strategy for every player, the total discounted payoff is not straightforward to compute and is non-linear with respect a player stationary strategy. To overcome this difficulty, \citet{sayinFictitiousPlayZerosum2020} and \citet{leslieBestresponseDynamicsZerosum2020} use (close but different) mechanisms similar to that of \ql{} to deal with multiple states: a $Q$-function (or a state-value function) defined on every state-action pair or on every state is updated during the play. The player can then consider a stage game that is built with this $Q$-function, which is linear with respect to its mixed actions, to play a best response. The $Q$-function is typically updated at a slower timescale. More precisely, the algorithm of \citet{sayinFictitiousPlayZerosum2020} estimates $\hat Q_{i,s, k}(a)$ for every player $i$, state $s$, action $a$ at time $k$. It is the expected payoff if players play action profile $a$ starting from state $s$. Then, the procedure is to play the best-response against the belief on actions used by other players in current state, that is an element of $\argmaxi{a \in A^i} \hat Q_{i, s_k, k}(a, x_{s_k}^{-i})$ at time $k$ where $x_s^{-i}$ is the (uncorrelated) strategy of other players in state $s$ believed by player~$i$.

\citet{leslieBestresponseDynamicsZerosum2020} introduced and studied a best-response dynamics in zero-sum stochastic games. Time is continuous, so this is not an online-learning algorithm. The proposed dynamics maintains a vector  $u^i := \{u^i_s\}_{s \in \states}$ for player $i$. It is the expected payoff starting from every state $s$ (i.e., an estimate of the state-value function). It plays a role similar to that of the $Q$-function in $Q$-learning. Then player $i$ plays a best-response to the stage game with payoffs composed of the instantaneous payoff and the expected later payoff, that is an element of $\argmaxi{a \in A^i} (1-\delta)r^i_{s}(a, x_{s}^{-i}) + \delta P_s(a, x_{s}^{-i}) \cdot u^i$.

\cite{leslieBestresponseDynamicsZerosum2020} are only concerned with two-player zero-sum stochastic games and only study a continuous-time system. We introduce a family of continuous-time dynamics that contains the dynamics of \citet{leslieBestresponseDynamicsZerosum2020} as a particular case, as well as their discrete-time counterpart. We prove that our systems in both time regimes converge to stationary equilibria in identical interest and in zero-sum discounted stochastic games. 


\section{Discrete Time : Fictitious Play Procedures}
\label{sec:fictitious_play}

\noicml
One important consequence of the convergence of the best-response dynamics in identical interest stochastic games is that it makes it possible to study a discrete-time fictitious-play like procedure for these games. In particular, we are going to define an asynchronous procedure that can be used as an online control algorithm, which is a novelty compared to \cite{leslieBestresponseDynamicsZerosum2020}.
\fi

\hypertarget{procedures-to-play-stochastic-games}{%
\paragraph{Procedures to play stochastic
games}\label{procedures-to-play-stochastic-games}}

Each asynchronous FP procedure, introduced in this subsection, is a behavioral strategy to play stochastic games for a player \(i\), that is a function
that provides a distribution of probability for the action \(a^i_{n}\)
given the history of the play prior to \(n\) and the current state
\(s_n\). Formally, it is a mapping
\(\bigcup_{n\in \mathbb N} [ (\states \times A)^n \times S ] \rightarrow \Delta(A^i)\).
Such an extension of FP is called \emph{asynchronous} because
there is a unique current state (that every player observes) and
actions are chosen by the players only for this state. This
contrasts with another FP procedure that we also study and call \emph{synchronous} FP. This not a behavioral strategy because there is no specific
current state and players provide actions for all states at every stage,
i.e., its a mapping
\(\bigcup_{n\in\mathbb N} (A^S)^n \rightarrow \Delta(A^i)^S\). It can be interpreted in various ways: either as a setting where the actual state is not known to the players, as a simulation of the system or as an algorithm, as in \cite{vriezeFictitiousPlayApplied1982}. 

\noicml Similarly to the best-response dynamics, our\fi \foricml Our \fi discrete-time procedures are designed using two estimates per state:
one is the empirical action that every player uses and the other one is the expected continuation payoff that a player
estimates starting from this state. First, we define the two estimates, then proceed with a description
of the action selection, and finally the updating rules.

\hypertarget{empirical-actions}{%
\paragraph{Empirical actions}\label{empirical-actions}}

We begin by exposing how the empirical action is computed for every
state. Given a state \(s \in S\) and a time step \(n\), \(s^\sharp_n\)
denotes the number of times that \(s\) occurs between \(0\) and \(n\)
i.e., \(s^\sharp_n = \sharp\{k\ |\ 0\leq k \leq n \land s_k = s\}\).
Then the empirical action of player \(i\) in state \(s\) is defined in
\(\Delta(A^i)\) as:
\[\dxi {n+1} s := \frac{1}{\countsn} \sum\limits_{k=0}^n 1_{s_k = s}a^i_k = \frac{1_{s_n = s} a^i_n}{\countsn} + \frac{\countsnp \dxi n s }{\countsn}\]
with the convention that if \(\countsn = 0\), then
\(\dxinp s= \dxi 0 s\) which is defined arbitrarily. Consequently, \(\dxi {n+1} s\) is equal to
\(\dxi n s\) when \(s_n\) is not equal to \(s\). Pure action $a^i_k$ is seen as an element of the Euclidean space $\Delta(A^i)$.

\paragraph{The auxiliary Shapley game} The second estimate is defined using the payoff of an auxiliary game. Given a continuation payoff vector $u \in \mathbb R^\states$, we define (following \cite{shapleyStochasticGames1953}) the auxiliary game parameterized by $u$ as the one-shot game where the action set is $\actions$ for every player $i$ and the payoff function is $\stfis{u^i}{\cdot}$ where:
\begin{equation*}
  \stfis{u}{x} := (1-\delta)r^i_s(x) + \delta \sum_{s' \in \states}P_{ss'}(x)u_{s'}
\end{equation*}

\citeauthor{finkEquilibriumStochasticPerson1964} (and Shapley) proved that stationary equilibria are the fixed point of an operator based on this auxiliary game. 

\paragraph{Update steps} $\durate$ is the non-increasing sequence of positive update steps for the payoff estimates, $\sdurate = \sum_{k=0}^n \durate[k]$, and starting values \(u^i_{0, s}\) defined arbitrarily. We suppose that:
\begin{equation}\label{hyp:discr}
\begin{gathered}
 \sum_k \frac{\durate[k]}{\sdurate[k]} = \infty \\
0 < \durate \leq 1 \ \ \ \ \ \ \ \ 
\alpha_{n+1} \leq \alpha_n
\end{gathered}
\tag{H1}
\end{equation}

\paragraph{Payoff estimates} Players estimate the continuation payoff in a vector \(u^i_{n} \in \mathbb R^S\). Values of this vector are written \(\dui n s\) for state \(s\),
at step \(n\) for player \(i\). \if\isicml0 We also use the generic notation \(u^i\)
for a vector of expected payoffs (with corresponding notation \(u^i_s\)
for payoffs starting from \(s\) when the difference between \(s\) and
\(n\) is not ambiguous).\fi At every step \(n\), the estimator is defined
as:
\if\isicml0
\[\begin{aligned}\dui {n+1} s & := \frac{1}{\countsn}\sum\limits_{k=0}^n\indic {s_k} s\stfis {u^i_k}{x_{k,s}}\end{aligned}\]
with, as previously, starting values \(u^i_{0, s}\) defined arbitrarily.
\else
\begin{equation*}
  \begin{aligned}
    \dui{n+1} s := & \frac{1}{\sdurate}\sum_{k=0}^n\durate[k]\stfis{u^i_k}{x_{k, s}} \\
    = & \frac{\sdurate[n-1]}{\sdurate} \dui n s + \frac{\durate}{\sdurate}\stfis{u^i_n}{x_{n, s}}
  \end{aligned}
\end{equation*}
Notice that in an identical interest stochastic game, $r^i_s$ does not depend on $i$ and as a consequence, $u^i_{n, s}$ does not depend on $i$ either. The same holds for zero-sum games because the payoff of player 2 is the negative of that of player 1 and it is sufficient to follow player 1's payoff. As such, we omit the superscript for $u^i$ in the rest of the paper.
\fi

Estimator \(\mdus\) can be seen as a mean where recent values of the
expected payoffs \(\mdfsu\) are given less weight than
oldest values. However, if the sequence \(\mdfsu\) is
stationary, \(\mdus\) will ultimately converge to the same limit
as \(\mdfsu\). A similar idea of a fast and a slow update rates
is used in \citep{leslieBestresponseDynamicsZerosum2020,perkinsAdvancedStochasticApproximation2013, kondaActorCriticTypeLearning1999, sayinFictitiousPlayZerosum2020}.

\paragraph{Remark:} If $\durate = 1$, then $\sdurate = n$.  Thus, the update rates of $x^i_{n, s}$ and $\mdus$ are the same and the hypothesis (H1) holds. It also holds for $\durate = \frac{1}{\log n}$ where $\sdurate = \log \log n$. In this case, the update rate of $x^i_{n,s}$ is much faster than the one of $\mdus$ and, as will be seen, this is the discrete-time analog of the continuous time dynamics in \cite{leslieBestresponseDynamicsZerosum2020}.

\hypertarget{action-selection}{%
\paragraph{Action selection}\label{action-selection}}

We can now define the action selection of our FP procedures. It is an extension of the classical
FP procedure. For
repeated games, FP is defined as a behavioral strategy
where at every stage, every player takes a best response against the empirical
action of the opponents up to that stage. For stochastic games, we define fictitious play as a best response in the auxiliary Shapley game parameterized by a given continuation payoff $u^i_n$, that is
for every \(n\):
\[a^i_{n,s} \in \bestr[i][s][\mdus[]][x^{-i}_{n, s}] := \argmax{y \in A^i} \mdfsu [y, x^{-i}_{n, s}]\]
where \(s=s_n\). When there are several best responses, our convergence results are independent on the selection rule.

Now we can define precisely our two FP procedures. 


\hypertarget{Asynchronous FP}{%
\paragraph{Asynchronous FP}\label{Asynchronous FP}}

\if\isicml0
\[\left\{\begin{aligned}
&\mdus[s][n+1]-\mdus  = \frac{\indic {s_n} s }{\countsn} \left( f^i_{s, u^i_n}(x_{n, s}) - \duin s\right) \\
&\dxinp s - \dxin s =  \frac{\indic {s_n} s }{\countsn} \left(a^i_{n, s}-\dxin s\right) \\ 
& a^i_{n} \in \bestr[i][s][u^i_n][x^{-i}_{n, s}]
\end{aligned}
\right.\tag{AFP}\label{eq:afp}\]
\else
\[\left\{\begin{aligned}
&\mdus[s][n+1]-\mdus  = \frac{\durate}{\sdurate} \left( \mdfsu - \mdus\right) \\
&\mdxsn[i][s][n+1] - \mdxsn[i] =  \frac{\indic {s_n} s }{\countsn} \left(a^i_{n}-\mdxsn[i]\right) \\ 
& a^i_{n} \in \bestr[i][s][\mdus[]][\mdxsn[-i]]
\end{aligned}
\right.\tag{AFP}\label{eq:safp}\]
\fi

This defines a behavioral strategy because only the current state is updated as the second equation shows. 

\hypertarget{Synchronous FP}{%
\paragraph{Synchronous FP}\label{Synchronous FP}}

To get convergence in non ergodic stochastic games, we now define a
version with synchronous updates on every state. This is an algorithm but not a behavioral strategy and thus, 
not a learning rule. \if\isicml0 The continuous
counterpart is simpler as it does not require an indicator function to
specify the current state.

If the stochastic game is played synchronously in every state (meaning
that for every state \(s\) and player \(i\) there is a choice of action
\(a^i_s \in A^i_s\) at every time step), the empirical action of player
\(i\) in state \(s\) is:
\[\dxinp s = \dfrac{\sum_{k=0}^n a^i_{k, s}}{n} = \dfrac{a^i_{n, s}}{n} + \dfrac{(n-1)\dxin s}{n}\]
and the estimated payoff starting from \(s\) is updated as follows:
\[\duinp s = \dfrac{\sum_{k=0}^n f^i_{s, u^i_k}(x_{k, s})}{n} =  \dfrac{f^i_{s, u^i_n}(x_{n, s})}{n} + \dfrac{(n-1)\duin s}{n}\]
These updates can be written in an incremental fashion, leading to synchronous fictitious play:
\begin{equation}\left\{\begin{aligned}&\duinp s-\duin s  = \frac{f^i_{s, u_n}(x_{n, s}) - \duin s}{n} \\& \dxinp s - \dxin s = \frac{a^i_{n, s}-\dxin s}{n} \\
\end{aligned}\right.\tag{SFP}\label{eq:sfp}\end{equation}

An alternative to both \ref{eq:sfp} and \ref{eq:afp} is semi-asynchronous fictitious play. It can be used during a standard asynchronous play of the stochastic game: $a^i_{n, s}$ are needed to update $x^i_{n, s}$ but not needed to update $u^i_{n,s}$, so only the updates on $x^i_{n, s}$ are asynchronous in SAFP:
\begin{equation}\left\{\begin{aligned}&\duinp s-\duin s  = \frac{f^i_{s, u_n}(x_{n, s}) - \duin s}{n} \\& \dxinp s - \dxin s = \indic {s_n} s\frac{a^i_{n}-\dxin s}{\countsn}\end{aligned}\right.\tag{SAFP}\label{eq:safp}\end{equation}

In identical interest games, as noted above, vectors
\(u^i_n\) are independent of player \(i\) as soon as players start with
the same belief, i.e., for all \(i, j \in I\), \(u^i_0 = u^j_0\), and
\(\stfis u x = \stfs u x\).\else

\[\left\{\begin{aligned}
  &\mdus[s][n+1]-\mdus  = \frac{\durate}{\sdurate} \left( \mdfsu - \mdus\right) \\
  &\mdxsn[i][s][n+1] - \mdxsn[i] =  \frac{1}{n} \left(a^i_{n}-\mdxsn[i]\right) \\ 
  & a^i_{n} \in \bestr[i][s][\mdus[]][\mdxsn[-i]]
  \end{aligned}
  \right.\tag{SFP}\label{eq:sfp}\]
\fi

\hypertarget{incremental-updates}{%
\paragraph{Incremental updates}\label{incremental-updates}}

In both FP procedures, \(\dxin s\) and \(\mdus\) can be computed \emph{via} incremental
updates. This will enable us to make the link with the continuous
version (Section~\ref{sec:link}). 
Moreover, it shows that for a machine implementation, the procedure
only needs constant memory instead of storing the whole history.


\begin{theorem}[Convergence of FP rules in identical interest stochastic games]
\label{thm:discrasync}

Under \ref{hyp:discr}, procedures \ref{eq:sfp} and \ref{eq:safp} almost surely converge to the set of stationary Nash
equilibria in
  identical interest ergodic discounted stochastic games\if\isicml1. \else
  and if \(\delta < 1/|S|\), then the results also hold for \ref{eq:afp}.\fi The convergence hold also in non ergodic games for \ref{eq:sfp}.

\end{theorem}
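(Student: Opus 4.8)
The strategy is to exploit two structural facts and glue them with a Lyapunov/potential argument carried out directly in discrete time. First, the identical-interest hypothesis makes each \emph{frozen} auxiliary game $\stfs u \cdot$ a common-interest (hence potential) game, whose potential is $\stfs u x$ itself; this is exactly the setting in which fictitious play converges to the equilibrium set \cite{mondererFictitiousPlayProperty1996}. Second, for every fixed mixed profile $x$ the continuation map $u \mapsto \stfs u x = (1-\delta)r_s(x)+\delta\sum_{s'}P_{ss'}(x)u_{s'}$ is a $\delta$-contraction in $\|\cdot\|_\infty$, so the fixed-point/dynamic-programming structure underlying Fink's characterization is available. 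I would begin with boundedness: since each $u$-update is a convex combination, with coefficient $\alpha_n/\sigma_n\le 1$ (by \ref{hyp:discr}), of $u_{n,s}$ and $\stfs{u_n}{x_{n,s}}$, and the latter pulls towards the bounded reward range, a one-line induction gives $\|u_n\|_\infty\le\|r\|_\infty$ for all $n$; this in turn bounds every best-response gap uniformly.

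Next I would control the visiting schedule and the empirical actions. For the asynchronous procedure \ref{eq:safp} the clock of state $s$ advances only when $s_n=s$, so I would invoke ergodicity and a Borel--Cantelli argument on the uniform $T$-step reachability to show that, almost surely, every state is visited infinitely often, hence $\counts n\to\infty$ for all $s$; for \ref{eq:sfp} all states update at every step and this is vacuous, which is precisely why \ref{eq:sfp} requires no ergodicity and covers the non-ergodic case. With all local clocks diverging, I would read the $x^i_{n,s}$ recursion as fictitious play in the slowly changing potential game $\stfs{u_n}{\cdot}$ and adapt the increment estimates of \cite{mondererFictitiousPlayProperty1996}: the local potential $\stfs{u_n}{x_{n,s}}$ is non-decreasing up to an error controlled by the drift of $u_n$ and by the step sizes, forcing the best-response gap $\max_{y\in A^i}\stfs{u_n}{y,x^{-i}_{n,s}}-\stfs{u_n}{x_{n,s}}$ to vanish, i.e.\ $x_{n,s}$ asymptotically enters the Nash set of the current auxiliary game.

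Once the gap vanishes, $\stfs{u_n}{x_{n,s}}$ becomes, asymptotically, an equilibrium value of $\stfs{u_n}{\cdot}$, and the synchronous $u$-recursion (common to \ref{eq:sfp} and \ref{eq:safp}) is driven towards the self-consistent fixed-point condition $u_s=\stfs u{x_s}$. Combining the $\delta$-contraction with the vanishing gap, I would show that the distance from $u_n$ to the set of continuation vectors consistent with a local equilibrium tends to $0$. Any accumulation point $(x^*,u^*)$ then satisfies simultaneously $x^{*,i}_s\in\argmax{y\in A^i}\stfs{u^*}{y,x^{*,-i}_s}$ for every player $i$ and state $s$, and $u^*_s=\stfs{u^*}{x^*_s}$; by the stationary-equilibrium characterization (Fink's fixed-point property of the Shapley operator) this is exactly the assertion that $x^*$ is a stationary Nash equilibrium, so the empirical profiles converge almost surely to that set.

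The hard part is the coupling of the two recursions, and in particular making the argument robust to the relative speed of the updates --- including the regime $\alpha_n=1$, where $x$ and $u$ evolve on the \emph{same} timescale, so one cannot simply freeze $u$ and let $x$ equilibrate. The genuine obstacle is that the auxiliary games are only common-interest: their equilibrium set is generally not a singleton and the value map $u\mapsto\stfs u{x(u)}$ is set-valued and a priori discontinuous, while the true discounted value is not multilinear, so \cite{mondererFictitiousPlayProperty1996} cannot be applied globally. The crux is therefore to fuse the potential-monotonicity of the action dynamics with the contraction of the continuation dynamics into a single discrete Lyapunov estimate whose stationary points are exactly the self-consistent equilibrium pairs, while quantitatively controlling the cross-error between the moving estimate $u_n$ and the value of the current empirical profile.
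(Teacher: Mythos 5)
Your outline correctly identifies the ingredients the paper also uses (a direct discrete-time argument, best-response monotonicity of the common payoff in the frozen auxiliary games, Fink's characterization at the end), but it stops exactly where the proof has to start: the ``single discrete Lyapunov estimate'' you defer to at the end \emph{is} the content of the theorem, and the way you propose to organize the argument around it would fail. Your ordering --- first show the best-response gap vanishes by adapting Monderer--Shapley to the ``slowly changing'' potential game, and only then let the $u$-recursion settle --- cannot work in the single-timescale regime $\alpha_n=1$ that the theorem explicitly covers: there the drift of $u_n$ contributes an error of order $|f_{s,u_n}(x_{n,s})-u_{n,s}|/n$ per step to the potential increments, which is not summable a priori, so it cannot be absorbed as a perturbation; it must be used with its sign. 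The paper's resolution is to work with the energy $w_n:=\min_{s}\bigl(f_{s,u_n}(x_{n,s})-u_{n,s}\bigr)$ and exploit that the $u$-update, evaluated through the stochastic transition matrix at the minimizing state, yields the one-sided recursion
\begin{equation*}
w_{n+1}-w_n\ \geq\ (\delta-1)\,\frac{\alpha_n}{\sigma_n}\,w_n\ -\ O\!\left(\frac{1}{n^{2}}\right),
\end{equation*}
whose discrete Gr\"onwall analysis gives the quantitative lower bound $w_n\geq -O\bigl(n^{-(1-\delta)/2}\bigr)$. It is this rate, not mere vanishing, that makes the negative part of the $u$-increments summable, hence $u_n$ and $f_{s,u_n}(x_{n,s})$ converge; only \emph{after} that does the paper prove that the best-response gap $\Delta_{s,n}$ tends to $0$. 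Your proposal contains no substitute for this step, and explicitly labels it as an open ``crux''.

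Your use of ergodicity is also too weak for the final step. You invoke it only to get $s^{\sharp}_n\to\infty$ (infinitely many visits via Borel--Cantelli), but what the proof actually needs is a positive asymptotic visit \emph{frequency}: the monotonicity argument only delivers $\sum_n \frac{1_{s=s_n}}{n}\Delta_{s,n}<\infty$, and since $\Delta_{s,n}$ can move by $O(1/n)$ per step, this summability is compatible with $\Delta_{s,n}\not\to 0$ when visits to $s$ are sparse (visits at times $2^k$, say, keep the weighted sum finite no matter what $\Delta$ does). The paper rules this out with a law-of-large-numbers bound $\frac{1}{m}\sum_{k=n}^{n+m}1_{s=s_k}\geq\beta_->0$ over windows of length proportional to $n$, which is where ergodicity genuinely enters (and why it can be dropped for the synchronous procedure, where the indicator disappears). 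So the proposal is a correct reading of the kind of proof required, but as it stands it is a plan whose decisive estimate and decisive use of ergodicity are both missing.
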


As in \cite{mondererPotentialGames1996}, our discrete-time proof for identical interest stochastic games is direct and is not derived from some associated continuous-time system. 

\if\isicml0

The proof of this latest theorem is sketched in the rest of the section. It uses the stochastic approximation framework. We recall the classical theory and the asynchronous extension that we need to modify.

\paragraph{Stochastic approximations}

Stochastic approximations have long been used to study asymptotic
behavior of discrete-time systems using their continuous-time
counterpart \citep{benaimDynamicalSystemApproach1996, benaimStochasticApproximationsDifferential2005, kondaActorCriticTypeLearning1999}.
In this framework, one typically assumes that there is a set valued function
\(F: \mathbb R^{K} \rightrightarrows \mathbb R^K\), a sequence of
decreasing update steps \(\{\gamma_n\} \in \mathbb R ^{\mathbb N}\) and
\(Y_{n+1}\) a noise difference random variable. Then the two
following systems are related:
\begin{equation}\frac{dy}{dt} \in F(y)\label{eq:approxcont}\end{equation}
\begin{equation}y_{n+1} - y_n - \gamma_{n+1}Y_{n+1} \in \gamma_{n+1} F(y_n)\label{eq:approxdiscr}\end{equation}

Stochastic approximations theorems (see Appendix~\ref{app:stochasticapprox} for a precise statement) typically assert that the limit set of a solution of (\ref{eq:approxdiscr}) is \emph{internally chain transitive} for differential inclusion (\ref{eq:approxcont}), meaning that two points of the limit set must be linked by a number of chained solutions of (\ref{eq:approxcont}):

\begin{definition}[Internally chain transitive]
  A set $A$ is internally chain transitive (ICT) for a differential inclusion $\diff \sax \in F(\sax)$ if it is compact and if for all $\sax, \sax' \in A$, $\epsilon > 0$ and $T > 0$ there exists an integer $n\in \mathbb N$, solutions $\sax_1, \ldots \sax_n$ to the differential inclusion and real numbers $t_1, t_2, \ldots, t_n$ greater than $T$ such that:
  \begin{itemize}
      \item $\sax_i(s) \in A$ for $0\leq s \leq t_i$
      \item $\|\sax_i(t_i)-\sax_{i+1}(0)\|\leq \epsilon$
      \item $\|\sax_1(0)-\sax\| \leq \epsilon$ and $\|\sax_n(t_n)-\sax'\|\leq \epsilon$
  \end{itemize}
  \label{def:ict_inline}
  \end{definition}

This framework links the synchronous systems: \ref{eq:sfp} and \ref{eq:sbrd} where $\alpha(t) = 1$ (see the next section for a discussion regarding other values of $\alpha$). In our case, $y = \{u_s, x_s\}_{s \in \states}$, there is no random noise and a vector of $F(y)$ is composed of an element of $\bestr i s {u} {x^{-i}_s} - x^i_s$ for lines corresponding to $x^i_s$ and $f_{s, u}(x_s) - u_s$ for lines corresponding to $u_s$. This proves the part related to \ref{eq:sfp} of Theorem~\ref{thm:discrasync}, see Appendix~\ref{app:stochasticapprox}.

\paragraph{Correlated Asynchronoucity} To do similar proofs for systems \ref{eq:safp} and~\ref{eq:afp}, one needs asynchronous stochastic approximations: the standard stochastic approximation framework \cite{benaimStochasticApproximationsDifferential2005}  is not sufficient to track every state and make every update rate depends on $\countsn$. We extend results from \citet{perkinsAsynchronousStochasticApproximation2012} to the case of correlated asynchronoucity. Indeed, in \ref{eq:safp}, variables $u_s$ are updated at every time step, independently of the current state and in \ref{eq:afp}, variables $u_s$ and $x_s$ are updated at the same time. Therefore, as \citet{perkinsAsynchronousStochasticApproximation2012} provided a theory where every variable was updated asynchronously with respect to the other ones. We extend this result, see Appendix~\ref{app:stochasticapprox}, and apply it to use systems \ref{eq:abrd} and \ref{eq:sabrd} to prove Theorem~\ref{thm:discrasync} under the ergodicity hypothesis.

\begin{proof}[Proof of Theorem~\ref{thm:discrasync} (sketch)]
  The first step of the proof is a generalization of the asynchronous stochastic approximation theorem of \cite{perkinsAsynchronousStochasticApproximation2012} to deal with correlated asynchronicity, this is done in Appendix~\ref{sec:fullproof}.

  Then, systems are written in the form of (\ref{eq:approxcont}) and (\ref{eq:approxdiscr}). This is explained above.

  Thus, it remains to show that the ICT sets of (\ref{eq:approxcont}) are contained in the set of stationary Nash equilibria and their associated payoff. To do this, we use results of convergence of the previous section. Howevbrefer, there is no direct implication between the convergence to a set and the fact that this set is ICT. Therefore, the chain transitivity is proven in part using the original definition, and in part with a Lyapunov function. More precisely we want to show that any ICT set is included in:

  \begin{align*}
    B := & \left\{(x, u) \left|
            \forall s \in \states\ \stfs u {x_s} \geq u_s\right.
          \right\} \\
   \text{\ and\ } A := & \left\{(x, u) \left|
            \begin{aligned}\forall s \in \states\ \forall i \in I, \ \stfs u {x_s} = u_s  \\ \land\  x^i_s \in \argmax{y^i \in A^i} \stfs u {y^i, x^{-i}_s}\\\end{aligned}\right.
          \right\}
  \end{align*}

  First, we show that any element $(x, u)$ of ICT sets are in $B$. Otherwise, we look at the chain between $(x, u)$ and itself, and conclude to a paradox: any solution is arbitrarily close to $B$ after a time $T$ independently of the starting point. Then, relatively to $B$, we can define a function $V(x, u):= \sum\limits_{s\in \states} \stfs u {x_s}$ which is a Lyapunov function in $B$. This makes it possible to conclude that any ICT set is included in $V^{-1}(0)$ which is equal to $A$.

  The whole proof is detailed in Appendix~\ref{app:stochasticapprox} of the supplementary material.

\end{proof}
\else

\begin{proof}[Proof of Theorem~\ref{thm:discrasync} (sketch)]
  The central idea is to show that the gap between $f_{s, u_n}(x_{n, s})$ and $u_{n, s}$ is lower-bounded by a sequence whose sum converge. This is possible because $f_{s, u_n}(x_{n, s})$ is mostly non-decreasing (but for the synchronization error of the players that optimize this function which is in $\frac{1}{n^2}$). This is similar to the proof continuous time (where the payoff function is a Lyapunov function of the system). Another key point is that $u_{n, s}$ moves towards $f_{s, u_n}$ at a rate no faster than the updates of $x_{n, s}$. This lower bound is used to prove the convergence of $u_{n, s}$, and the convergence to the set of stationary Nash equilibria follows.
  
  See Appendix~\ref{app:discreteproof} for the complete proof.
\end{proof}
\fi

\section{Continuous Time: Best-Response Dynamics}
\label{sec:brd}

This section extends and studies the best-response dynamics introduced and studied in zero-sum stochastic games by \citeauthor{leslieBestresponseDynamicsZerosum2020}. We  generalize their updating rates and prove that all the extended dynamics 
converge to stationary equilibria in identical interest and in zero-sum stochastic games. These dynamics are the continuous-time counterpart of \ref{eq:safp} and \ref{eq:sfp} as shown in the next section. 



As in discrete-time, there are two sets of variables:
\(\{u^i_s, x^i_s\}_{s \in S, i \in I}\). These variables may have different update rates, and we suppose
there is a function \(\alpha: \mathbb R^+ \rightarrow \mathbb R^{+*}\) to express the update rates of variables $u^i_s$. Function $\alpha$
is continuous and non-increasing. We make the following additional
assumption on \(\alpha\):
\begin{equation}\begin{aligned}
    \int_{0}^t \crateu y dy \limt +\infty\label{hyp:cont}\\
    \crateu t \geq 0 \text{\ and $\alpha$ is non-increasing}
    \end{aligned}\tag{H2}
\end{equation}

\noicml
\hypertarget{auxiliary-game}{%
\paragraph{Auxiliary game}\label{auxiliary-game}}

Following \citet{leslieBestresponseDynamicsZerosum2020} and previous
authors (for instance \citet{shapleyStochasticGames1953}), we define a
so-called \emph{auxiliary game} for every state \(s\) as a one-shot game
parameterized by a vector \(u := \{u_{s'}\}_{{s'}\in \states}\) whose
action set is \(A\) and payoff of player \(i\) is, for any action
profile \(x_s \in A\),
\(\stfis {u} {x_s} = (1-\delta)\ris(x_s) + \delta\sum_{s'\in S}\transSS(x_s)u_{s'}\).
Vector \(u\) represents the continuation payoff believed by player \(i\)
when it chooses an action, i.e., the expected payoffs starting from
every state, hence the term \(\delta \sum_{s'\in S}\transSS (x)u_{s'}\)
which is the discounted, expected payoff if players take action \(x\).

\paragraph{Synchronicity and asynchronicity}
In our paper, we study three types of systems: synchronous, fully asynchronous and semi-asyn\-chronous ones. In the synchronous kind, variables of all states are updated at the same time. There is no distinguished, current state. In semi-asynchronous systems, there is a current state, variables $x^i_s$ are updated only if they are related to the current states but variables $u^i_s$ are updated even if the current state is not $s$. In fully asynchronous systems, variables $x^i_s$ and $u^i_s$ are updated if and only if the current state is $s$.
\fi

\paragraph{Synchronous Dynamics} As in AFP, in the next dynamics, variables of all states are updated at the same time. 

For \(t \geq 0\) and every state \(s\) and player \(i\), synchronous
best-reply dynamics (SBRD) is defined as:
\begin{equation}\left\{ \begin{aligned} &\dot{u}^i_s(t) = \crateu t\left(\stfis{u^i(t)}{x_s(t)}-u^i_s(t)\right) \\ &\dot{x}_s^i(t) \in \bestr[i][s][u^i(t)][\cxst] -x^i_s(t) \end{aligned}\right.\tag{SBRD}\label{eq:sbrd}\end{equation}
where \(\bestr[i][s][u^i(t)][\cxst] := \text{argmax}_{a \in A^i} \stfis {u^i(t)} {a, x^{-i}_s(t)}\) (i.e., it is a best response to the auxiliary Shapley game). This action is used as an element of the Euclidean space $\Delta(A^i)$.  Vector \(u^i(t)\) denotes \(\{u^i_s(t)\}_{s\in S}\).

\textbf{Remark:} This is a generalization of the definition of \citeauthor{leslieBestresponseDynamicsZerosum2020} who studied the case $\alpha(t) = \frac{1}{t+1}$. Replacing $\stfis {u^i(t)}{x_s(t)}$ by the maximum over actions, that is $\max_{a \in A^i} \stfis {u^i(t)} {a, x^{-i}_s(t)}$ is an alternative that would be closer to the system outlined by \citeauthor{sayinFictitiousPlayZerosum2020} and \ql{} in general. It could be an interesting system to study but as noted by \citeauthor{sayinFictitiousPlayZerosum2020}, this would result in $u^i_s(t)$ to be different for two players even if the game is zero-sum or identical interest, which poses more theoretical challenges.

Differential inclusion \ref{eq:sbrd} classically admits a (typically
non-unique) solution
\citep{aubinDifferentialInclusionsSetValued1984, benaimStochasticApproximationsDifferential2005}. Indeed,
one can rewrite it as \(\frac{dy}{dt}\in F(t, y)\) where \(y\) is a
vector with every \(u^i_s, x^i_s\) and \(F\) is a closed set-valued map,
with non-empty, convex values. Furthermore, as 
shown in Lemma
\ref{lem:bounded} of Appendix \ref{app:continuous_proof}, values are bounded, so the solution is
defined on \(\mathbb R^+\)
\citep[p.~97]{aubinDifferentialInclusionsSetValued1984}.

In identical interest games, \(r^i_s = r_s\) for
every player \(i\). Therefore, for every $s$, \(u^i_s\) and
\(f^i_{s, u^i}\) do not depend on \(i\) (when initial values are equal), hence we omit the superscript
\(i\) in our statements. It is similar for zero-sum games.

\paragraph{Asynchronous Dynamics}
\noicml
We now provide results regarding the convergence of
semi-asynchronous systems and fully-asynchronous ones. In the semi-asynchronous one, the expected payoff starting from state~$s$ is always updated at the same rate but the empirical action is not, and for the fully asynchronous system, both the payoff estimates and the empirical action are updated at the same state-dependent rate.

The fully asynchronous system is defined as follows:
\begin{equation}\left\{ \begin{array}{l} \dot{u}_s(t) = \ratest s t\dfrac{\stfs {\cut} {\cxs {} t} -u_s(t)}{\crateu {\int_0^t \ratest s y dy}} \\ \dot{x}_s^i(t) \in \ratest st \left( \bestr[i][s][\cut][\cxs {-i} t]-\cxs i t\right) \\ \ratest s t \in [\ratelb, 1]\end{array}\right.\tag{ABRD}\label{eq:abrd}\end{equation}
where \(\beta_{-} \in (0, 1]\).

The semi-asynchronous system (in the spirit of the system of \citet{leslieBestresponseDynamicsZerosum2020}) is:

\else
We now provide results regarding the convergence of
asynchronous systems. In this system, the expected continuation payoff starting from state~$s$ is always updated at the same rate but the empirical action is not.
It is defined as follows:
\fi
\begin{equation}\left\{ \begin{array}{l} \dot{u}_s(t) = \crateu t \left(\stfs {\cut} {\cxs {} t} -u_s(t)\right) \\ \dot{x}_s^i(t) \in \ratest st \left( \bestr[i][s][\cut][\cxs {-i} t]-\cxs i t\right)  \\ \ratest s t \in [\ratelb, 1]\end{array}\right.\tag{ABRD}\label{eq:sabrd}\end{equation}
\foricml where \(\beta_{-} \in (0, 1]\).\fi

Value $\ratest s t$ is the update rate for state $s$ at time $t$. If only one state was updated at every time point, then we would have $\ratest s t$ equal to 0 but in one state where it would be equal to 1. 
If the game is ergodic, then on average every state is reached a strictly positive proportion of the time, $> \beta_-$. Next section will show in the ergodic case, that this system is formally linked to the AFP procedure.  

\begin{theorem}[Convergence of \ref{eq:sabrd} and \ref{eq:sbrd} in identical interest stochastic games]
\label{thm:contasync}

Let \(\{u_s, \beta_s, x^i_s\}_{s \in S, i \in I}\) be a solution of
\ref{eq:sabrd}. Under \ref{hyp:cont}, there is \(\Phi \in \mathbb R ^ {|\states|}\)
such that:
\vspace{-10pt}
\begin{itemize}
\tightlist
\item
  for all \(s\), \(\stfs {\cut} {x_s(t)} \limt \Phi_s\) and $\cut \limt \Phi$
\item
  \(\Phi\) is a stationary Nash equilibrium payoff
\item
  \(\{x_s(t)\}_{s\in S}\) converges to the set of stationary Nash equilibria with payoff $\Phi$ 
\end{itemize}
\noicml
It also holds for solutions of \ref{eq:abrd} when
\(\delta < \frac{1}{|\states|}\).\fi\end{theorem}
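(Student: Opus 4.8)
The plan is to treat the common auxiliary payoff $\Phi_s(t) := f_{s,u(t)}(x_s(t))$ as a potential and to monitor the gap $g_s(t) := \Phi_s(t) - u_s(t)$, which by definition of \ref{eq:sabrd} satisfies $\dot u_s = \crateu t\, g_s$. First I would establish boundedness of $u$ and $x$ (Lemma~\ref{lem:bounded}): the $x_s$ live in products of simplices, and since $u \mapsto f_{s,u}(x_s)$ is a $\delta$-contraction in the sup-norm for fixed $x$, the $u_s$ stay bounded. This guarantees a global solution and supplies the compactness used at the end.

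The heart of the argument is to differentiate $\Phi_s$ along the flow. Because each $f_{s,u}$ is linear in every player's own mixed action (multilinearity of $r_s$ and $P_{ss'}$) and the game has identical interest, the best-response part of $\dot x$ contributes a non-negative term $B_s(t) := \ratest s t\sum_i\big(f_{s,u}(\mathrm{br}^i,x^{-i}_s)-\Phi_s\big)\ge 0$, which is exactly the Monderer--Shapley potential increase in the frozen auxiliary game. The only other contribution comes from the motion of $u$, yielding
\begin{equation*}
\dot\Phi_s = B_s(t) + \delta\,\crateu t\sum_{s'}P_{ss'}(x_s)\,g_{s'}, \qquad \dot g_s = B_s(t) - \crateu t\,\big[(I-\delta P(x))\,g\big]_s .
\end{equation*}
Thus $\Phi_s$ is ``almost increasing'': a non-negative drift $B_s$ plus an error term driven by the discrepancy $g$ between the current value estimate $u$ and the realized payoff $\Phi$.

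Next I would show this error does not spoil convergence. Since $\delta P(x)$ is a $\delta$-contraction in the sup-norm, $(I-\delta P(x))$ satisfies $[(I-\delta P)g]_{s}\ge (1-\delta)\max_{s'}g_{s'}$ at a maximizing state (and symmetrically at a minimizing state). Examining $\dot g$ at the extremal states therefore shows that whenever $|g|$ is large relative to the (bounded, and ultimately vanishing) drift $B$, the term $-\alpha(I-\delta P)g$ forces $g$ back toward $0$. This lets me prove that the error $\delta\,\alpha\sum_{s'} P_{ss'}g_{s'}$ is integrable, hence $\int_0^\infty B_s\,dt<\infty$ because $\Phi_s$ is bounded; consequently each $\Phi_s(t)$ converges to some $\Phi_s$, $g\to 0$, and $u(t)\to\Phi$ as well. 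Integrability of $B_s$ forces the best-response gaps to vanish along the trajectory, so $\Phi$ is a fixed point of the Shapley operator, i.e. a stationary equilibrium payoff. Finally, once $u\to\Phi$ the auxiliary game freezes and $x$ obeys asymptotically best-response dynamics in the fixed identical-interest (hence potential) game $f_{s,\Phi}$, whose attracting set is precisely its Nash set; this delivers convergence of $\{x_s(t)\}$ to the set of stationary equilibria with payoff $\Phi$. The synchronous case \ref{eq:sbrd} is the specialization $\ratest s t\equiv 1$, and for the fully asynchronous \ref{eq:abrd} the state-dependent rates destroy the clean operator $(I-\delta P)$; there the hypothesis $\delta<1/|S|$ is what restores a contraction and lets the same estimates go through.

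I expect the main obstacle to be precisely the control of the error term when $\alpha$ does not vanish (e.g. $\alpha\equiv 1$), so that there is no genuine two-timescale separation between the value update $u$ and the best-response update $x$: one cannot simply freeze $u$ and invoke static potential-game convergence, as \citet{leslieBestresponseDynamicsZerosum2020} effectively do for $\crateu t=\frac{1}{t+1}$. The delicate point is to show, using only the discounted contraction structure, that $g$ shrinks fast enough that the accumulated error $\int_0^\infty \alpha\,|g|$ is finite even while $B_s$ keeps pushing $\Phi_s$ upward, so that $\Phi$ and $u$ are driven to a single common limit rather than drifting apart or oscillating.
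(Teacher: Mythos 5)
Your proposal follows essentially the same route as the paper's own proof: your potential $\Phi_s$ is the paper's $\Gamma_s$, your decomposition $\dot\Phi_s = B_s + \delta\,\crateu t \sum_{s'}P_{ss'}(x_s)g_{s'}$ is Lemma~\ref{lem:diffgamma}, and your contraction estimate at the minimizing state combined with Gr\"onwall is exactly Lemma~\ref{ineqgu}, after which both arguments conclude identically (negative part of the error integrable, hence $u_s$ and $\Gamma_s$ converge to a common limit, the integrated best-response gaps are finite so they vanish, and accumulation points of $x$ are stationary equilibria). The only cosmetic differences are that you phrase the key inequality via the operator $I-\delta P(x)$ rather than the worst state $\wstate t$, and your closing appeal to potential-game best-response dynamics is unnecessary given that you already have the vanishing of the gaps $\optgap$.
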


A sketch of the proof is provided below. A comprehensive proof with technical lemmas is provided in Appendix~\ref{app:continuous_proof}.
\noicml We conjecture it also holds for
\ref{eq:abrd} when \(\delta \geq \frac{1}{|S|}\), see simulations in Section~\ref{sec:simulations}.
\else

\begin{proof}[Sketch of proof]

\if\isicml0

We define, for $s\in \states$:
\begin{align*}\payt s t := & \stfs {u(t)} {\cxs {} t}\\ \optgap i s t := & \max\limits_{y \in A^i} \stfs \cut {y, \cxs {-i} t} - \stfs \cut {\cxs {} t}
\\ = & \max\limits_{y \in A^i} \stfs \cut {y, \cxs {-i} t} - \payt s t\end{align*}

We are going to lower bound $\payt s t - u_s(t)$ for every $s$ so as the differential of $u_s$ is lower-bounded by an integrable function. This guarantees that, as $u_s$ is bounded (see Lemma~\ref{lem:bounded}), it converges. We will then show that for every player $i$, $\optgap i s t \rightarrow 0$ and finish the proof of the Theorems by showing convergence of $\payt s t$ and studying the limit set of $\cxs {} t$.

Let $s \in \states$. First, note that \(\optgap i s t \geq 0\). Function $\pay s$ is differentiable:
\begin{equation}\diff {\pay s} = \delta \sum_{s'} P_{s s'}(x_s)\dot u_{s'} + \ratest s t \sum_i \optgap i s t \label{eq:diffgamma}\end{equation} where $\ratest s t = 1$ for \ref{eq:sbrd} and is already defined for \ref{eq:sabrd} and \ref{eq:abrd}. See Lemma~\ref{lem:diffgamma} in the supplementary material for details.

\mparagraph{Lower bound of  $\payt s t - u_s(t)$ for \ref{eq:sbrd} and \ref{eq:sabrd}} The lower bound of $\payt s t - u_s(t)$ is proven separately for \ref{eq:sbrd} and \ref{eq:sabrd} on the one hand and \ref{eq:abrd} on the other hand. Until further notice, we suppose that $\{u_s, x_s\}_{s \in \states}$ is \textbf{a solution of \ref{eq:sabrd}} (which includes the case \ref{eq:sbrd}).

Let
\(\wstate t \in \argmin{s' \in S} \left(\payt {s'} t - u_{s'}(t)\right)\). Then, for any $s \in \states$:

\begin{equation}\begin{aligned}\frac{d \pay s}{dt} & \geq \delta \sum_{s'}  \transSS(x_s)\frac{\payt {s'} t-u_{s'}(t)}{\crateu t} \\ & \geq \delta \sum_{s'}P_{ss'}(x_s) \frac{\payt {\wstate t} t -u_{\wstate t}}{\crateu t} \\ & = \delta  \frac{\payt {\wstate t} t-u_{\wstate t}}{\crateu t}\end{aligned}\label{eq:bound_gamma_si}\end{equation}

Moreover, for \(h > 0\):
\begin{equation}\begin{aligned}&\payt {\wstate {t+h}} {t+h} -u_{\wstate {t+h}}(t+h)- (\payt {\wstate t} t - u_{\wstate t}(t)) \\&  \geq \payt {\wstate {t+h}} {t+h} -u_{\wstate {t+h}}(t+h) - (\payt {\wstate {t+h}} t - u_{\wstate {t+h}}(t)) \\ & \geq h \min_{s'\in S} \frac{d\pay {s'}}{dt} + o(h)  + u_{\wstate{t+h}}(t)-u_{\wstate{t+h}}(t+h)\end{aligned}\label{eq:pregronwalli}\end{equation}

Then, it can be shown that if $s'$ is an accumulation point of $\wstate {t+h}$ when $h$ goes to $0$, then:

\begin{equation}u_{s'}(t)-u_{s'}(t+h) = -h\frac{\Gamma_{s_-(t)}(t)-u_{s_-(t)}(t)}{\alpha(t)} + o(h)\label{eq:pregronwal_ui}\end{equation}

Since this is valid for every such $s'$, combining (\ref{eq:bound_gamma_si}), (\ref{eq:pregronwalli}) and (\ref{eq:pregronwal_ui}) gives:

\[\begin{split}\payt {\wstate {t+h}} {t+h} -u_{\wstate {t+h}}(t+h) - (\payt {\wstate t} t - u_{\wstate t}(t)) \optnl \geq h(\delta-1) \frac{\Gamma_{s_-(t)}(t)-u_{s_-(t)}}{\alpha(t)}+o(h)\end{split}\]

Now we are going to apply a version of Grönwall Lemma (see details in Lemma~\ref{lem:contconv}) so we get:

\optmult{
  \payt {\wstate t} t - u_{\wstate t}(t) \optnl \geq \left( \payt {\wstate 0} 0 - u_{\wstate 0}(0)\right)\exp\left(\int_0^t (\delta-1)\frac{1}{\crateu t}dt\right)
}

Therefore, we can use this inequality in $\dot u_s(t)$:
$$\dot u_s(t) \geq - \frac{A}{\crateu t} \exp\left(\int_1^t (\delta-1) \frac{1}{\crateu t}dt\right)$$ where $A > 0$. The right hand side term is integrable, and as $u_s$ is bounded (see Lemma~\ref{lem:bounded}), it converges.

\mparagraph{Lower bound of\ $\payt s t - u_s(t)$ for \ref{eq:abrd}} We now consider a solution of system \ref{eq:abrd}. A similar reasoning can be done with function $\paysup(t) := \sum_{s\in \states} \pos{u_s(t)-\payt s t}$, see Lemma~\ref{lem:convabrd} for details. We can bound $\paysup$ and in the end we can also bound $\dot u_s$
\begin{align*}
  \dot u_s(t) & \geq -A \frac{\exp\left(\int_0^t (\delta|S|-1)\frac{\beta_-}{\crateu {\int_0^v \ratest s w dw}}dv\right)}{\crateu {\int_0^t \ratest s v dv}}
\end{align*}
which is integrable. Therefore, $u_s$ converges with the same arguments.

\mparagraph{$\sum_{i\in I}\optgap i s t$ goes to $0$}
In either case, we show that $\optgap{i}{s}{t}\rightarrow 0$. First, we notice that in the \ref{eq:sbrd} and \ref{eq:sabrd} case:

\begin{align*}\int_0^t \sum_{i\in I}\optgap{i}{s}{v} dv & \leq  \int_0^t \frac{\ratest s u}{\beta_-}\sum_{i\in I}\optgap{i}{s}{v}dv \\ & =  \frac{1}{\beta_-}\left(\int_0^t \diff {\pay s}(v) - \delta \sum_{s'} P_{s s'}(x_s(v))\dot u_{s'}(v)dv\right)
\\ &\leq \frac{1}{\beta_-} \left(\payt s t - \payt s 0\right) \optnl \optesp \optspace{\ \ \ } + \frac{A}{\beta_-(1-\delta)}  \left(1 -\exp\left(\int_1^t \frac{1-\delta}{\crateu v}dv\right)\right)\end{align*}

So, this integral is bounded. However, as $\sum_{i\in I}\optgap i s \cdot$ is Lipschitz (see Lemma~\ref{lem:deltalip}), we conclude that $\sum_{i\in I}\optgap i s t \limt 0$ (Lemma~\ref{lem:delta0}).

In the \ref{eq:abrd} case, we have similar inequalities except for the argument of $\alpha$ which is $\int_0^t \ratest s v dv$. As it is bounded between $\beta_- t$ and $t$, it does not change much of the computations and the integral is bounded as well. See Lemma~\ref{lem:delta0async} for details.

\mparagraph{Convergence of\ \ $\pay s$} In the case of \ref{eq:sbrd} and \ref{eq:sabrd}, using (\ref{eq:diffgamma}), we can lower bound its derivative:

\begin{align*}\diff {\pay s } \geq & \delta \frac{\payt {\wstate t} t - u_{\wstate t}(t)}{\crateu{t}} \optnl \geq \optesp - \delta A \frac{\exp\left(\int_0^t (\delta-1)\frac{1}{\crateu u}du\right)}{\crateu{t}}\end{align*}

This latest term being integrable and $\pay s$ being bounded, we conclude that $\pay s$ converges to its $\lim\sup$ when $t$ goes to $+\infty$. The limit is necessarily the same as $u_s$, otherwise $u_s$ could not be bounded (see the comprehensive proof for details).

\mparagraph{Limit set of\ \ $\cxs {} t$} Let \(\tilde x\) be an accumulation point of the vector-valued function
\(x = \{x_s\}\). Then, we previously showed:
\[\optgap i s t = \stfs \cut {br^i_{s, u(t)}(x^{-i}_s(t)) - x^i_s(t), x^{-i}_s(t)} \limt 0\]
So by continuity, for all $s$:
\[f_{s, \lim u}(br^i_{s, \lim u}(\tilde x^{-i}) - \tilde x^i_s, \tilde x^{-i}_s) =0\]

So $\tilde x$ belongs to the set of Nash equilibria.

\else

We define, for $s\in \states$:
\begin{align*}\payt s t := & \stfs {u(t)} {\cxs {} t}\\ \optgap := & \max\limits_{y \in A^i} \stfs \cut {y, \cxs {-i} t} - \stfs \cut {\cxs {} t} \geq 0\end{align*}

We are going to lower bound $\payt s t - u_s(t)$ for every $s$ so as the differential of $u_s$ is lower-bounded by an integrable function. This guarantees that, as $u_s$ is bounded (see Lemma~\ref{lem:bounded}), it converges. We then show that for every player $i$, $\optgap \rightarrow 0$ and finish the proof of the theorems by studying convergence of $\payt s t$ and the limit set of $\cxs {} t$.

\fi
\end{proof}


\begin{theorem}[Convergence of \ref{eq:sabrd} in zero-sum stochastic games]
\label{thm:contzerosum}

Let \(\{u_s, \beta_s, x^i_s\}_{s \in S, i \in I}\) be a solution of
\ref{eq:sabrd}. There exists a constant $A > 0$ (which only depends on $\delta$ and $r_{s}$) such that if $\alpha^\star > \lim{t \rightarrow \infty} \crateu t$, then, under~\ref{hyp:cont}:
\vspace{-10pt}
\begin{itemize}
\tightlist
\item
  for all \(s\), \(\lim\sup_{t\rightarrow \infty} |\stfs {\cut} {x_s(t)} - \cut | \leq A \alpha^\star \)
\item
  \(\{x_s(t)\}_{s\in S}\) converges to the set of stationary Nash $A\alpha^\star-$equilibria as $t\rightarrow \infty$.
\end{itemize}
\end{theorem}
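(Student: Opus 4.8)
The plan is to exploit the zero-sum minimax structure through Shapley's operator. Write $T(u)_s := \max_{x^1}\min_{x^2}\stfs{u}{x^1,x^2}$ for the value of the auxiliary game at state $s$; by \cite{shapleyStochasticGames1953} the map $T$ is a $\delta$-contraction for the sup-norm, with unique fixed point $u^\star$ (the value of the stochastic game) whose optimal strategies are exactly the stationary equilibria. The governing decomposition is
\[\payt s t - u_s(t) = \big(\payt s t - T(u(t))_s\big) + \big(T(u(t))_s - u_s(t)\big).\]
Introducing the duality gap $g_s(t) := \optgap[1][s][t] + \optgap[2][s][t] \geq 0$ of the auxiliary game at $u(t)$, the two minimax inequalities give $|\payt s t - T(u(t))_s| \leq g_s(t)$, while the second term is controlled by $\|u(t)-u^\star\|_\infty$ because $T$ is a contraction. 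It therefore suffices to bound $\limsup_t g_s(t)$ and $\limsup_t\|u(t)-u^\star\|_\infty$ by multiples of $\alpha^\star$.

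First I would control the duality gap, which is the heart of the argument. Setting $V^1_s(t) := \max_y \stfs{u(t)}{y,x^2_s(t)}$ and $V^2_s(t):=\min_y \stfs{u(t)}{x^1_s(t),y}$, so that $g_s = V^1_s - V^2_s$, I differentiate via the envelope (Danskin) theorem. Because each player moves towards its best response at rate $\beta_s(t)$, the terms evaluated at the joint best-response profile $(\text{br}^1,\text{br}^2)$ that appear in $\dot V^1_s$ and in $\dot V^2_s$ cancel exactly, leaving the pure best-response contraction $-\beta_s(t)\,g_s$. The only residual is the drift induced by the motion of $u$, namely $\delta\,[P_s(\text{br}^1,x^2_s)-P_s(x^1_s,\text{br}^2)]\cdot\dot u$; since $\dot u_{s'} = \alpha(t)(\Gamma_{s'}(t)-u_{s'}(t))$ and $\Gamma_{s'}(t)-u_{s'}(t)$ is uniformly bounded (Lemma~\ref{lem:bounded}), this drift is at most $C\alpha(t)$ with $C=C(\delta,r)$. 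Hence, in the sense of upper Dini derivatives,
\[D^+ g_s(t) \leq -\beta_- g_s(t) + C\alpha(t),\]
and a comparison (Grönwall) argument together with $\alpha(t)\to\lim_t\alpha<\alpha^\star$ gives $\limsup_t g_s(t)\leq C\alpha^\star/\beta_-$.

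Next I would bound the distance to $u^\star$. Writing $\dot u_s = \alpha(t)(T(u)_s - u_s + e_s)$ with $|e_s(t)|\leq g_s(t)$, and setting $V(t):=\|u(t)-u^\star\|_\infty$, Danskin's formula for the derivative of the maximum combined with the contraction estimate $|T(u)_s - T(u^\star)_s|\leq\delta V$ yields $D^+V(t)\leq \alpha(t)\big(-(1-\delta)V(t)+\max_s g_s(t)\big)$. Performing the time change $\tau=\int_0^t\alpha$, which tends to $+\infty$ by~\ref{hyp:cont}, reduces this to a standard linear comparison and gives $\limsup_t V(t)\leq \frac{1}{1-\delta}\limsup_t\max_s g_s(t)\leq \frac{C\alpha^\star}{\beta_-(1-\delta)}$. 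Combining with the decomposition and $|T(u)_s-u_s|\leq(1+\delta)V$ produces $\limsup_t|\payt s t - u_s(t)|\leq A\alpha^\star$ for a suitable constant $A$ of the required form, which is the first claim. For the second claim, any accumulation pair $(\tilde x,\tilde u)$ of $(x(t),u(t))$ inherits auxiliary-game optimality gaps $\leq C\alpha^\star/\beta_-$ at $\tilde u$ together with $\|\tilde u-u^\star\|_\infty\leq C\alpha^\star/(\beta_-(1-\delta))$; translating an approximate equilibrium of the auxiliary game at a near-value $\tilde u$ into an approximate stationary equilibrium of the stochastic game (again through the contraction) shows $\tilde x$ is a stationary $A\alpha^\star$-equilibrium, and compactness upgrades this to convergence of $x(t)$ to that set.

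The main obstacle is the duality-gap inequality of the second paragraph. The exact cancellation of the joint best-response terms is precisely what makes the zero-sum case work, but it must be carried out for set-valued, possibly discontinuous best responses $\text{br}^i$ and for the differential-inclusion solution $\dot x$, so the clean identity $\dot g_s = -\beta_s g_s$ valid for a \emph{fixed} auxiliary game has to be replaced by an upper-Dini-derivative inequality justified through a non-smooth version of Danskin's theorem. Quantifying the drift constant $C$ uniformly over the (mutually coupled) states and making the two-timescale separation between the fast $x$-dynamics and the slow $u$-drift rigorous is where the technical effort concentrates.
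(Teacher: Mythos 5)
Your proof is correct, and its first half coincides with the paper's: the paper's Lemma~\ref{lem:energy_zerosum} establishes exactly your duality-gap inequality $\frac{d}{dt}w_s \leq -\beta_- w_s + 4\delta M\alpha(t)$ (citing Hofbauer--Sorin for the best-response contraction term and Lemma~A.2 of \cite{leslieBestresponseDynamicsZerosum2020} for the $\dot u$-drift, where you instead derive both via Danskin's theorem with the exact cancellation of the joint best-response terms). Where you genuinely diverge is the second half. The paper never introduces the Shapley operator's fixed point $u^\star$: following \cite{leslieBestresponseDynamicsZerosum2020}, it tracks the self-referential quantity $|u_s(t) - v_{s,u(t)}|$ (distance of the estimate to the value of the auxiliary game at the \emph{current} $u(t)$) through two distinguished states $s_f(t) \in \argmaxi{s}|f_{s,u(t)}(x_s(t))-u_s(t)|$ and $s_v(t)\in\argmaxi{s}|v_{s,u(t)}-u_s(t)|$, a case analysis (Lemma~\ref{lem:usf}), and the Lyapunov function $g(t)=\max\{|u_{s_f}-v_{s_f}|,2\epsilon\}$ (Lemma~\ref{lem:zs_final}). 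You instead bound $\|u(t)-u^\star\|_\infty$ using the classical $\delta$-contraction of the Shapley operator in sup-norm, the perturbation bound $|e_s|\leq g_s$ (which is valid: both $f_{s,u}(x_s)$ and $v_{s,u}$ lie between the two marginal values), and a time change $\tau=\int_0^t\alpha$. Your route is more modular and arguably buys more: the paper's concluding lemma only asserts convergence to approximate equilibria \emph{of the auxiliary game} and leaves the translation to stationary equilibria of the stochastic game implicit, whereas anchoring everything at $u^\star$ makes that last step (the second bullet of the theorem) explicit. The paper's route, by contrast, stays entirely inside the dynamics and avoids invoking the fixed point.

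One shared caveat: your final constant carries a factor $1/\beta_-$ (from $\limsup_t g_s \leq C\alpha^\star/\beta_-$), which strictly speaking exceeds what the theorem statement allows ($A$ depending only on $\delta$ and $r_s$). This is not a gap relative to the paper, whose own differential inequality $\frac{d}{dt}w_s\leq -\beta_- w_s + 4\delta M\alpha(t)$ yields the same $\beta_-$-dependence even though the stated bound $4M\alpha^\star$ suppresses it; since $\beta_-\in(0,1]$ is a fixed parameter of \ref{eq:sabrd}, both proofs should either state the dependence or absorb it into the hypothesis.
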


The proof is in Appendix~\ref{app:zerosum_continuous}. Note that if $\alpha(t)\rightarrow 0$, then $\alpha^\star$ can be chosen arbitrarily close to $0$ which is the case in \cite{leslieBestresponseDynamicsZerosum2020} ($\alpha(t)=t+1)$. Hence this is an extension of \cite{leslieBestresponseDynamicsZerosum2020}.


\section{Linking Continuous and Discrete Systems}
\label{sec:link}

This section uses the continuous time results of the previous section to deduce the convergence of the discrete-time procedures in zero-sum stochastic games (which is not covered by \cite{leslieBestresponseDynamicsZerosum2020}) but also in team stochastic games (defined below). Any identical interest stochastic game is a team stochastic game but not the reverse. 

\paragraph{Zero-sum games} We can discretize the continuous model using an extension of the stochastic approximation framework (see details in Appendix~\ref{app:stochasticapprox} and then using an algorithm with doubling trick (standard in RL procedures). Note that the doubling trick trigger $T(\alpha)$ can be computed, this is explained in Appendix~\ref{app:stochasticapprox}.

\begin{algorithm}
\begin{algorithmic}

\STATE $\alpha, x, u \leftarrow 1, x_0, u_0$

\LOOP
    \STATE $x_{n+1, s_n} \leftarrow x_{n, s_n} + \frac{1}{n+1}\left(a_n - x_{n, s_n}\right)$
    \STATE $\forall s, u_{n+1, s} \leftarrow u_{n, s} + \frac{\alpha}{\sigma_n}(f_{s, u_n}(x_{n, s}-u_{n, s})$
    \STATE Choose $a^i_{n+1} \in \text{br}^i_{s, u_n}(x^{-i}_{n+1, s})$
    \IF {$n > T(\alpha)$} \STATE $\alpha \leftarrow \alpha/2$
    \ENDIF
\ENDLOOP
\end{algorithmic}
\caption{FP with Doubling Trick for Zero-Sum Games}
\end{algorithm}

Combining the stochastic approximation framework and Theorem~\ref{thm:contzerosum} guarantees the convergence of this algorithm. A detailed proof is in Appendix~\ref{app:stochasticapprox}.

\vspace{-10pt}
\begin{theorem}[Convergence of FP with doubling trick in zero-sum stochastic games]
\label{thm:discrzs}

Under \ref{hyp:discr}, procedures \ref{eq:sfp} and \ref{eq:safp} with the doubling trick as specified in Algorithm 1 almost surely converge to the set of stationary Nash
equilibria in
  zero-sum ergodic discounted stochastic games. The convergence holds also in non ergodic games for \ref{eq:sfp}.

\end{theorem}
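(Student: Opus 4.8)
The plan is to transfer the continuous-time zero-sum convergence of Theorem~\ref{thm:contzerosum} to the discrete iterates produced by Algorithm~1 through the stochastic approximation framework, and then to exploit the doubling trick to remove the residual $A\alpha^\star$ error. The central difficulty is that Theorem~\ref{thm:contzerosum} only places the continuous flow inside the set of $A\alpha^\star$-equilibria for a rate bounded below by $\lim_{t} \crateu t$; a single run with a frozen rate can therefore only guarantee approximate equilibria. The doubling trick is precisely what drives this error to zero: within each phase the rate is held constant, so the analysis reduces to that of a constant-rate system, and halving the rate at the end of each phase produces a sequence of shrinking target sets that contract onto the exact stationary Nash set.

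First I would fix a phase of Algorithm~1 on which $\alpha$ is held constant and write the updates of \ref{eq:sfp} (respectively \ref{eq:safp}) in the canonical stochastic-approximation form treated in Appendix~\ref{app:stochasticapprox}, whose associated mean differential inclusion is \ref{eq:sbrd} (respectively \ref{eq:sabrd}) with the constant rate $\crateu t\equiv\alpha$. For the synchronous rule \ref{eq:sfp} every coordinate is updated at each step, so the plain framework applies and no hypothesis on the induced chain is needed; for the semi-asynchronous rule \ref{eq:safp} the states are visited at different frequencies, and here I would invoke the correlated-asynchronous extension of Appendix~\ref{app:stochasticapprox}, using ergodicity to guarantee that the empirical visit frequency of each state stays almost surely bounded below by some $\beta_->0$, so that the effective rates $\beta_s$ remain in $[\beta_-,1]$ as required by \ref{eq:sabrd}. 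The output of this step is that, almost surely, the limit set of the interpolated process is internally chain transitive for the constant-rate inclusion, and that the iterates stay bounded by Lemma~\ref{lem:bounded}.

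Next I would show that every internally chain transitive set of the constant-rate inclusion is contained in the set of $A\alpha^\star$-equilibria, for any $\alpha^\star>\alpha$. This is where Theorem~\ref{thm:contzerosum} enters: since the constant $A$ there depends only on $\delta$ and the rewards $r_s$, the bound $\limsup_{t} \max_s|\stfs{u(t)}{x_s(t)}-u_s(t)|\le A\alpha^\star$ and the convergence of $\{x_s\}$ to the $A\alpha^\star$-equilibria hold uniformly over all solutions and all initial conditions in the compact state space. Consequently the set of $A\alpha^\star$-equilibria is uniformly attracting, and the standard property that an internally chain transitive set cannot escape a uniformly attracting set forces the limit set to lie inside it. Transferring this back through the stochastic approximation, at the end of a phase of length at least $T(\alpha)$ the iterate is almost surely within $O(\alpha)$ of the stationary Nash set.

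Finally I would close the argument with the doubling schedule. The trigger $T(\alpha)$ is chosen (as computed in Appendix~\ref{app:stochasticapprox}) large enough that, within one phase, both the stochastic-approximation tracking error and the distance of the continuous flow to its $A\alpha^\star$-limit fall below $\alpha$; uniformity in the initial condition (again from boundedness) then lets me chain the phases irrespective of where the previous one ended. Halving $\alpha$ after each phase yields $\alpha\to 0$, and because the approximate-equilibrium sets decrease to the exact stationary Nash set as $\alpha\to 0$, the iterates converge almost surely to that set, giving the result for both \ref{eq:sfp} and \ref{eq:safp}. For \ref{eq:sfp} the synchronous updates touch every state at every step, so ergodicity is never used and the conclusion persists in non-ergodic games. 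The main obstacle I anticipate is quantitative rather than conceptual: turning the asymptotic, $\limsup$-type statements of Theorem~\ref{thm:contzerosum} and of the stochastic approximation theory into finite-horizon per-phase guarantees that are uniform in the starting point and in the shrinking rate, so that the errors accumulated across the countably many phases of the doubling schedule provably vanish.
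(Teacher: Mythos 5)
Your overall architecture coincides with the paper's: freeze $\alpha$ within a phase, view \ref{eq:sfp} and \ref{eq:safp} as stochastic approximations of the constant-rate continuous systems \ref{eq:sbrd} and \ref{eq:sabrd} (invoking the correlated-asynchronous extension of Theorem~\ref{thm:asyncapprox}, with ergodicity supplying the lower bound $\beta_-$ on visit frequencies, and dropping ergodicity for the synchronous rule), characterize the internally chain transitive sets of those systems, and let the doubling schedule drive the residual $O(\alpha^\star)$ error to zero. That is exactly the paper's plan, and your handling of the phase-chaining and of the non-ergodic \ref{eq:sfp} case is fine.

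The gap is in your middle step, and it is conceptual, not merely quantitative. You deduce the ICT containment from the \emph{statement} of Theorem~\ref{thm:contzerosum} by asserting that, because the constant $A$ depends only on $\delta$ and the rewards, the set of $A\alpha^\star$-equilibria is \emph{uniformly} attracting, and then apply the fact that ICT sets cannot escape a uniform attractor. But uniformity of the error constant $A$ says nothing about uniformity of the convergence \emph{time}; Theorem~\ref{thm:contzerosum} is a per-solution $\limsup$ statement, and convergence of every solution to a set does not by itself place ICT sets inside that set --- pseudo-orbits with arbitrarily small jumps can escape a set that attracts only pointwise. The paper itself emphasizes this (``there is no direct implication between the convergence to a set and the fact that this set is ICT''). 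The paper closes the step not from the theorem statement but by reopening its proof: the differential inequality $\frac{d w_s}{dt} \leq -\beta_- w_s + 4\delta M \alpha^\star$ established in Lemma~\ref{lem:energy_zerosum} shows that the duality gap above the floor $\frac{4\delta M\alpha^\star}{\beta_-}$ is a Lyapunov function, and the function $g$ of Lemma~\ref{lem:zs_final} is a Lyapunov function relative to the resulting sublevel set; the standard Lyapunov-to-ICT result (Proposition 3.27 of \cite{benaimStochasticApproximationsDifferential2005}) then pins the ICT sets inside the approximate-equilibrium set. You flag precisely this issue at the end as ``the main obstacle'' but classify it as a routine quantification; in fact it is the substantive content of the step, it cannot be obtained from Theorem~\ref{thm:contzerosum} as a black box, and closing it amounts to re-deriving the Gr\"onwall/Lyapunov estimates that the appendix already provides.
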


\paragraph{Different Priors and Team Stochastic Games}

\citet{hollerLearningDynamicsReinforcement2020} proved that in exact global-potential stochastic games, players are divided into two categories: either they do not influence the transition or they have the same payoff function up to a constant. This last class is called \emph{team stochastic games}. While the proof of convergence of FP in discrete-time is not straightforward in team games, it can be studied using the continuous time system and stochastic approximations techniques. Similarly, if players have different priors on continuations.

\begin{theorem}[Convergence of FP with different priors in team stochastic games]
If all players use a FP procedure as defined in Section~\ref{sec:fictitious_play} with priors on the continuations that may be different (i.e., $u^i_s(0)$ may not be equal to $u^j_s(0)$ for two players $i$ and $j$ and a state $s$), then the average actions $x_{s, n}$ and vectors $u^i$ (for every player $i$) converge respectively to the set of stationary Nash equilibria and the corresponding continuation payoffs.
\end{theorem}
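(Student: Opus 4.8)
The plan is to deduce the result from the continuous-time identical-interest analysis behind Theorem~\ref{thm:contasync}, treating heterogeneous priors as an asymptotically vanishing perturbation and passing back to discrete time by stochastic approximation. First I would invoke the decomposition of \citet{hollerLearningDynamicsReinforcement2020}: in a team stochastic game every player either does not influence the transition probabilities, or shares a common stage reward up to an additive constant, i.e.\ $r^i_s = r_s + c^i$ for the transition-influencing players. Since the constant $c^i$ shifts $f^i_{s,u^i}$ uniformly across player $i$'s own actions, it leaves the best-response correspondence $\text{br}^i_{s,u^i}(x^{-i}_s)$ unchanged; and under the shift $v^i_s := u^i_s - c^i$ the continuation dynamics $\dot u^i_s = \alpha(t)\big(f^i_{s,u^i}(x_s) - u^i_s\big)$ becomes $\dot v^i_s = \alpha(t)\big(f_{s,v^i}(x_s) - v^i_s\big)$, i.e.\ exactly the homogeneous auxiliary-game dynamics of \ref{eq:sbrd}/\ref{eq:sabrd} but now driven by possibly distinct per-player vectors $v^i$ inherited from the distinct priors.

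The key technical step is to show that this heterogeneity disappears. Anchoring to player $1$ and writing $d_s := v^i_s - v^1_s$, the common-reward term $(1-\delta)r_s(x_s)$ cancels (all players read the same joint profile $x_s$ and, being transition-influencing, the same $P_{ss'}(x_s)$), leaving $\dot d_s = \alpha(t)\big(\delta \sum_{s'} P_{ss'}(x_s)\, d_{s'} - d_s\big)$. Because $\delta < 1$ and $P_s(x_s)$ is stochastic, differentiating $\|d(t)\|_\infty$ at a coordinate attaining the maximum gives $\tfrac{d}{dt}\|d\|_\infty \le (\delta - 1)\,\alpha(t)\,\|d\|_\infty$, so Grönwall together with $\int_0^t \alpha \to \infty$ from \ref{hyp:cont} forces $\|d(t)\|_\infty \to 0$. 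Hence the payoff-aligned players asymptotically share a single continuation vector, and the full trajectory becomes an arbitrarily small perturbation of the homogeneous dynamics \ref{eq:sbrd}/\ref{eq:sabrd}. Players who do not influence the transition are treated separately: their actions leave every $P_s$ and hence every other player's auxiliary game invariant, so the aligned subsystem converges on its own, after which each such player simply tracks its own value $f^i_{s,u^i}(x_s)$ and best-responds to an already converging environment, as in static fictitious play.

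To conclude I would argue at the level of limit sets. By the stochastic-approximation machinery of Section~\ref{sec:fictitious_play} (with the correlated-asynchronous extension needed for the \ref{eq:safp}-type updates), the almost-sure limit set of the discrete procedure is internally chain transitive for the associated differential inclusion, in the sense used in Appendix~\ref{app:stochasticapprox}. Since the coupling terms $v^i - v^1$ vanish, every such limit set projects onto an internally chain transitive set of the homogeneous identical-interest differential inclusion, whose such sets are contained in the stationary-equilibrium set by the analysis supporting Theorem~\ref{thm:contasync} (the payoff $\sum_{s} f_{s,u}(x_s)$ serving as a Lyapunov function on the region where $f_{s,u}(x_s) \ge u_s$). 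Undoing the shift via $u^i_s = v^i_s + c^i$ then shows that $x_{s,n}$ converges to the set of stationary Nash equilibria and that each $u^i$ converges to player $i$'s continuation payoff at the limiting equilibrium.

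The main obstacle I expect is exactly the transient in which the priors still disagree: the Lyapunov/potential argument underlying Theorem~\ref{thm:contasync} presupposes one shared continuation vector and so does not apply verbatim while the $v^i$ differ. The delicate point is to make rigorous that the vanishing heterogeneity produces no spurious limit behavior --- that the time-dependent coupling $\delta \sum_{s'} P_{ss'}(x_s)\, d_{s'}$, nonzero at every finite time, is asymptotically negligible enough that the homogeneous internally-chain-transitive characterization still captures every limit set. Carrying this out cleanly inside the differential-inclusion framework of Appendix~\ref{app:continuous_proof}, rather than via an ad hoc continuity estimate, is where the real care is needed.
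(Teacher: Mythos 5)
Your proposal is correct and follows essentially the same route as the paper's proof: the paper likewise writes $r^i_s = r_s + M_i$, takes the sup-norm difference $V^i(x,u) = \max_{s} |u^i_s - u^1_s - M_i|$ (your $\|d\|_\infty$ after the constant shift), derives the same differential inequality $\frac{d}{dt} V^i \le \alpha(t)(\delta-1)V^i$, and concludes through the chain-transitivity machinery. The ``delicate point'' you leave open at the end is exactly what the paper settles with the standard Lyapunov result for internally chain transitive sets (Proposition~3.27 of \citet{benaimStochasticApproximationsDifferential2005}, already invoked in Lemma~\ref{lem:internallychaintransitive}): since $V^i$ strictly decreases off its zero set, every such limit set lies in $(V^i)^{-1}(\{0\})$, and the identical-interest characterization then applies.
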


\vspace{-10pt}\begin{proof}[Sketch of proof]
We can define a Lyapunov function on the continuous-time system, yielding conditions on chain transitive sets. Details can be found in \ref{sec:priors_team_games}.
\end{proof}

\section{Conclusion}
\label{sec:conclusion}
We defined a number of continuous and discrete time systems to learn stationary equilibria in stochastic games. They combine ideas from fictitious play and {\ql} and are extensions of a continuous-time system of \cite{leslieBestresponseDynamicsZerosum2020} who proved its convergence to stationary equilibria in zero-sum stochastic games. We prove their convergence to stationary equilibria  in continuous time but also in discrete time; in zero sum but also in identical interest discounted stochastic games. An open problem is to show the convergence of the procedures of \cite{sayinDecentralizedQLearningZerosum2021} in identical interest and team stochastic game. The main difficulty relies on the fact that their updating rule does not preserve the identical interest objective along the trajectory.

Another interesting direction is the speed of convergence. As outlined in the proof, there are bounds for zero-sum stochastic games but none for identical interest ones. To the best of our knowledge, no results are known even in non-stochastic games \cite{mondererFictitiousPlayProperty1996}.

An interesting extension would be limiting average stochastic games. This could be achieved by increasing in \ref{eq:safp} the discount factor $\delta_n$ from stage to stage to 1. Another nice extension would be to have a model free algorithm, that is an updating rule that does not use the knowledge of the probability distribution. A simple way to adapt \ref{eq:safp} is by letting the players explore with small probability and replace in the up-dating rule of \ref{eq:safp} the probability transition by its empirical estimation. When the game is ergodic, after some period $T(\varepsilon)$ the estimated transition probability will be close to the real transition with probability at least $1-\varepsilon$. Consequently, under ergodicity and (\ref{hyp:discr}), this modification of \ref{eq:safp} converges to stationary equilibria in identical-interest and zero-sum stochastic games. 

More tricky is to construct a learning procedure that converges to a stationary equilibrium when the players do not observe other players past actions but only their own past actions and the current state. Even when there is only one state (a repeated game) FP and all its variants fail because there is no way to form a belief about the opponents without observing their actions. A class of procedures that converge to Nash equilibria of the stage game in zero sum and identical interest repeated games are non-regret algorithms \cite{blumExternalInternalRegret2005,hofbauerGlobalConvergenceStochastic2002} with some exploration to be able to estimate what would the payoff be if a player has played differently (we are in a bandit setting). But regret is not well defined in stochastic games even if players observe the past actions of the opponents \cite{mannorEmpiricalBayesEnvelope2003}. 


A last interesting question is: what happens if we let each player use simple Q-learning? This is very unstable in some repeated games (RG) \cite{wunderClassesMultiagentQlearning2010}. In others RG, simulations in a repeated pricing game (a kind of repeated prisoner dilemma) show that Q-learning does not converge to the stationary Nash  \cite{calvanoArtificialIntelligenceAlgorithmic2020} (i.e., not competitive pricing which would be defection at every stage) but to Pareto Nash equilibrium of the RG (to collusion, that is a cooperative equilibrium, similar to Tit for Tat). One may wonder if it is possible to construct a  Q-learning like procedure which converges to Pareto optimal equilibria in every repeated/stochastic game. This was our motivating question because the Tit for Tat equilibrium in a RG is a stationary equilibrium in the auxiliary stochastic game where the current state is the last action profile in the RG.

\bibliography{these}
\bibliographystyle{icml2022}

\newpage
\appendix
\onecolumn

\section{Convergence of Discrete-Time Fictitious Play in Identical Interest Stochastic Games}\label{app:discreteproof}


{

\NewDocumentCommand{\du}{O{n} O{s} O{}}{u^{#3}_{#1\ifthenelse{\equal{#2}{}}{}{,#2}}}
\NewDocumentCommand{\dx}{O{n} O{s} O{}}{x^{#3}_{#1, #2}}
\NewDocumentCommand{\df}{O{s} O{\du[n][]} O{\dx}}{f_{#1, #2} \left(#3\right)}
\NewDocumentCommand{\dw}{O{n}}{w_{#1}}
\NewDocumentCommand{\ds}{O{n}}{s^-_{#1}}

In this section, we prove that systems \ref{eq:sfp} and \ref{eq:safp} converge to a stationary Nash equilibrium. The proofs for the two systems are similar, except for the last part about the convergence of the empirical actions. Therefore, we write the first, identical part, only for \ref{eq:safp} (the more complex system), and give the two proof in the last part.

Recall system \ref{eq:safp}:
\begin{equation}\left\{
    \begin{aligned}
        & \du[n+1] -\du  = \frac{\durate}{\sdurate}\left(\df - \du\right) \\
        & \dx[n+1][s][i] - \dx[n][s][i] = 1_{s = s_n}\frac{a^i_{n}-\dx[n][s][i]}{n} \\
        & a^i_{n} \in \bestr[i][s_n][\du[n][]][\dxn{-i}{s}] \\ 
        & \sdurate = \sum_{k=1}^n \durate[k]
    \end{aligned}\right.
    \if\isicml0
    \tag{SAFP}
    \else
    \tag{AFP}
    \fi
    \label{app:eq:safp}\end{equation}

Under these hypothesis on $\alpha_n$:
\begin{equation}\label{app:hyp:discr}
\begin{aligned}
\sum_k \frac{\durate[k]}{\sdurate[k]} = \infty \\
\durate \leq 1 \\
\alpha_{n+1} \leq \alpha_n
\end{aligned}
\tag{H1}
\end{equation}
Note that this is satisfied for $\durate = 1$ (single timescale, autonomous case) or $\durate = \frac{1}{n}$ ($u$ is updated significantly slower than $x$).

We are going to show that under \ref{app:hyp:discr}, $\du$ and $\df$ converge to the same equilibrium payoff for every $s$ in an (ergodic for \ref{eq:safp}), identical interest stochastic game. This implies that $x_{n, s}$ converges to a stationary Nash equilibrium.

\begin{proof}

Denote $\fgap := \df$, \ $\dw := \min_{s \in \states} \fgap - \du$ and $\ds \in \argmin{s \in \states}\, \fgap - \du$ so as $\dw = \fgap[\ds] - \du[n][\ds]$. 

\paragraph{The energy of the system $\dw$ converges} We bound the changes in $\dw$:
\begin{equation}\begin{aligned}
    \dw[n+1] - \dw & = \fgap[\ds[n+1]][n+1] - \du[n+1][\ds[n+1]] - (\fgap[\ds] - \du[n][\ds]) \\
    & \geq \fgap[\ds[n+1]][n+1] - \du[n+1][\ds[n+1]] - (\fgap[\ds[n+1]] - \du[n][\ds[n+1]]) \\
    & = \fgap[\ds[n+1]][n+1] - \fgap[\ds[n+1]] - (\du[n+1][\ds[n+1]] - \du[n][\ds[n+1]]) \\
    & = \fgap[\ds[n+1]][n+1] - \fgap[\ds[n+1]] - \frac{\durate[k]}{\sdurate[k]}\left(\fgap[\ds[n+1]] - \du[n][\ds[n+1]]\right) \\
\end{aligned}
\label{eq:5}
\end{equation}

Now, there exists $C$ (independent of $n$) such that $|\fgap[\ds[n+1]] - \du[n][\ds[n+1]]| - |\fgap[\ds] - \du[n][\ds]| < \frac{C}{n}$ (because for every $s$, $\fgap - \du$ changes of at most $\frac{C}{n}$ between $n$ and $n+1$, so this is true for the minimum as well).

As a consequence, continuing \eref{eq:5}:
$$
\begin{aligned}
    \dw[n+1] - \dw & \geq \fgap[\ds[n+1]][n+1] - \fgap[\ds[n+1]] - \frac{\durate\dw}{\sdurate} - \frac{\durate C}{n\sdurate} \\
    &\begin{multlined} \geq \df[\ds[n+1]][\du[n+1][]][\dx[n+1][\ds[n+1]]] - \df[\ds[n+1]][\du[n][]][\dx[n+1][\ds[n+1]]] \\ + \df[\ds[n+1]][\du[n][]][\dx[n+1][\ds[n+1]]] - \df[\ds[n+1]][\du[n][]][\dx[n][\ds[n+1]]] - \frac{\durate \dw}{\sdurate} - \frac{\durate C}{n\sdurate} \end{multlined} \\
    & \begin{multlined}\geq \delta \sum_{s'\in \states}\probatr[s'][\ds[n+1]][\dx[n+1][\ds[n+1]]]\left(\du[n+1][s'] - \du[n][s']\right) \\ + \df[\ds[n+1]][\du[n][]][\dx[n+1][\ds[n+1]]] - \df[\ds[n+1]][\du[n][]][\dx[n][\ds[n+1]]] - \frac{\durate \dw}{\sdurate} - \frac{\durate C}{n\sdurate} \end{multlined} \\ 
    & \geq \delta  \frac{\dw}{\sdurate \durate} + \df[\ds[n+1]][\du[n][]][\dx[n+1][\ds[n+1]]] - \df[\ds[n+1]][\du[n][]][\dx[n][\ds[n+1]]] - \frac{\durate \dw}{\sdurate} - \frac{\durate C}{n\sdurate } \\
    & \geq (\delta -1)  \frac{\durate \dw}{\sdurate} + \df[\ds[n+1]][\du[n][]][\dx[n+1][\ds[n+1]]] - \df[\ds[n+1]][\du[n][]][\dx[n][\ds[n+1]]] - \frac{\durate C}{n\sdurate}
\end{aligned}
$$

\newcommand{\fsf}{\df[\ds[n+1]][\du[n][]][\dx[n+1][\ds[n+1]]]}

The first order expansion of $\fsf$ for \ref{eq:safp}:
\begin{equation}
    \begin{aligned}
\df[\ds[n+1]][\du[n][]][\dx[n+1][\ds[n+1]]] = \df[\ds[n+1]][\du[n][]][\dx[n][\ds[n+1]]] + \sum_{i \in I}\frac{1_{s=s_n}}{n}\left(\df[\ds[n+1]][\du[n]][a^i_n, \dx[n][s][-i]] - \df[\ds[n+1]][\du[n]][\dx[n][s]]\right) + O\left(\frac{1}{n^2}\right)
\end{aligned}
\label{eq:fn}
\end{equation}

The expansion \eref{eq:fn} would be the same for \ref{eq:sfp} except for the indicator $1_{s=s_n}$ which would disappear.

In any case, the first order term is positive (because $a^i_n$ is a best-response in the auxiliary game), therefore there exists $D > 0$ such that:
$$
\begin{aligned}
    \dw[n+1] - \dw & \geq (\delta -1)  \frac{\durate \dw}{\sdurate}  - \frac{D}{n^2} - \frac{\durate C}{n\sdurate}
\end{aligned}
$$

Then, using Lemma~\ref{lem:discretegronwall}, for $n > m$:
$$\dw[n] \geq \dw[m] \Pi_{k=m}^n(1+\frac{\delta-1}{k}) - \sum_{k=m}^n \left[\frac{D}{k^2} + \frac{\durate[k] C}{k\sdurate[k]}\right] \geq E\Pi_{k=m}^n(1+\frac{\delta-1}{k}) -\sum_{k=m}^\infty  \left[ \frac{D}{k^2} + \frac{C}{k^2} \right]$$
for some $E > 0$ (independent of $m$ because $\dw$ is bounded). The last inequality is obtained using Lemma~\ref{lem:bounddiscralpha}. The right term goes to 0 as the rest of a convergent sum. Furthermore, the left term goes to $0$ when $n$ goes to $\infty$, so $\dw \rightarrow 0$.

\paragraph{The continuation payoffs $u_n$ converge}
Sequence $\dw$ can be lower bounded more precisely: $\sum_{k=m}^\infty \frac{D+C}{k^2} = \Omega \left( \frac{1}{m} \right)$ and $\Pi_{k=m}^n(1+\frac{\delta-1}{k}) = \Omega\left(\left(\frac{m}{n}\right)^{\delta-1}\right)$, so with $m = [\sqrt n]$, $w_n \geq \Omega\left(\frac{1}{\sqrt n}\right) + \Omega \left(\frac{1}{n^{\frac{1-\delta}{2}}}\right) = \Omega \left(\frac{1}{n^{\frac{1-\delta}{2}}}\right)$.

Consequently, for every $s$, $\du[n+1]-\du \geq \Omega\left(\frac{1}{n^{1+\frac{1-\delta}{2}}}\right)$, and as $\du$ is bounded (again using Lemma~\ref{lem:bounddiscralpha}), it converges.

\paragraph{The payoff of the auxiliary game converges to the same limit}
Similarly, one can show that $\df[s][\du[n+1][]][\dx[n+1]] - \df$ is lower bounded by $\Omega\left(\frac{1}{n^{1+\frac{1-\delta}{2}}}\right)$, so it converges, and it is the same limit as $\du$ (otherwise $\du$ could not be bounded).

\paragraph{The limit is an equilibrium payoff in \ref{eq:safp}} Using (\ref{eq:fn}), (valid for every $s$), writing $\Delta_{s, n} := \sum_{i \in I}\df[s][\du[n]][a^i_n, \dx[n][s][-i]] - \df[s][\du[n]][\dx[n][s]]$:
\begin{equation}
    \begin{aligned}
\df[s][\du[n][]][\dx[n+1][s]] - \df[s][\du[n][]][\dx[n][s]]= \frac{1_{s=s_n}}{n}\Delta_{s, n} + O\left(\frac{1}{n^2}\right)
\end{aligned}
\label{eq:1}
\end{equation}

Then:
\begin{equation}
\begin{aligned}
\df[s][\du[n+1][]][\dx[n+1][s]] - \df[s][\du[n][]][\dx[n+1][s]] & = \delta \sum_{s'\in \states} P_{ss'}(\dx[n+1][s]) (\du[n+1][s'] - \du[n][s']) \\
& = \delta \sum_{s'\in \states} P_{ss'}(\dx[n+1][s]) \frac{\durate}{\sdurate} (\df[s'] - \du[n][s']) \\
& \geq \delta |S| P_{ss'}(\dx[n+1][s]) \frac{\durate}{\sdurate}  \dw
\end{aligned}
\label{eq:2}
\end{equation}

Summing (\ref{eq:1}) and (\ref{eq:2}) gives:
\begin{equation}
    \df[s][\du[n+1][]][\dx[n+1][s]] - \df[s][\du[n][]] \geq \delta |S| \frac{\durate}{\sdurate} \dw + \frac{1_{s=s_n}}{n}\Delta_{s, n} + O\left(\frac{1}{n^2}\right)
\label{eq:3}
\end{equation}

However, $\Delta_{s, n} \geq 0$, so summing (\ref{eq:3}) over $n$ gives that $\sum_n \frac{1_{s=s_n}}{n}\Delta_{s, n} < \infty$ with the same reasoning as above (because $\frac{\durate}{\sdurate} \dw = \Omega\left(\frac{1}{n^{1+(1-\delta)/2}}\right)$ and the terms in the left term cancel out).

Simple calculations yield that $\frac{1}{n}\sum_{k=1}^n 1_{s = s_k} \Delta_{s, k}$ goes to $0$ as $n$ goes to $\infty$. However, it is clear that changes in $\Delta_{s, n}$ are of the order of magnitude of the update steps, that is $\frac{1}{n}$. As a consequence, assuming that $\Delta_{s, n}$ does not go to $0$, there exists a $A > 0$ such that for $\epsilon > 0$, if $\Delta_{s, n} \geq 3\epsilon$, then $\Delta_{s, n+m} \geq 3\epsilon - \sum_{k=1}^m\frac{A}{n+k} \geq 2\epsilon - A \log((n+m)/n)$ for $n$ large enough (well known result of the harmonic series). But then, for $m=[n\exp(\epsilon/A)-1)]$:

\begin{equation*}
    \frac{1}{n+m}\sum_{k=n}^{m+m} 1_{s = s_k} \Delta_{s, k} \geq \frac{1}{n+m} \sum_{k=n}^{n+m} 1_{s=s_k} \epsilon
\end{equation*}

Since the game is ergodic, with probability $1$ when $m$ goes to $\infty$ (and it goes to infinity when $n$ goes to infinity), $\frac{1}{m}\sum_{k=n}^{n+m} 1_{s=s_k}$ is greater than a real $\beta_-$  which depends only on the game (minimal frequency of visit of $s$ using the law of large numbers).

\begin{equation*}
    \frac{1}{n+m}\sum_{k=n}^{m+m} 1_{s = s_k} \Delta_{s, k} \geq \frac{1}{n+m} m \beta_- \epsilon \geq \frac{n\exp(\epsilon/A)-1)-1}{n+ n\exp(\epsilon/A)-1)} \beta_- \epsilon \geq \frac{\exp(\epsilon/A)-1) - \frac{1}{n}}{1 + \exp(\epsilon/A)-1)}\beta_- \epsilon
\end{equation*}

This latest inequality is absurd, so $\Delta_{s, n}$ goes to $0$ almost surely when $n$ goes to infinity, proving that the limit of $u_{n, s}$ is an equilibrium payoff. Then, it is clear that $x_{n, s}$ converge towards the set of Nash equilibria almost surely (otherwise $f_{s, u_{n, s}}(x_{n, s})$ could not have the same limit as$u_{n, s}$.

\textbf{Now, for \ref{eq:sfp}}, the proof is similar but for the indicator function which disappears. As a consequence, it is not needed to use a $\beta_-$ but we still prove that almost surely, $\Delta_{s, k}$ goes to $0$. Therefore, we do not need the ergodicity hypothesis for this case.

\end{proof}
}

\section{Convergence of Best-Response Dynamics in Identical Interest Stochastic Games}
\label{app:continuous_proof}

In this section, we prove that the best-response dynamics converge in identical interest stochastic games. We first prove a boundedness lemma and then proceed with the convergence of every system in identical interest stochastic games.

Note that since we only deal with identical interest stochastic games, the superscript $i$ in $u^i_s$ can be omitted as all $u^i_s$ are equals (see Section~\ref{sec:brd}).

In what follows, let \(\{u^i_s, x^i_s\}_{s\in \states, i \in I}\) be a solution of
(\ref{eq:sabrd}). Note that since (\ref{eq:sbrd}) is included in (\ref{eq:sabrd}), this is valid for solutions of (\ref{eq:sbrd}) as well (it is the case where $\beta_- = 1$).

We define:
\[\begin{aligned}\payt s t := &\stfs \cut {\cxs {} t} \\ \optgap := & \max\limits_{y \in A^i} \stfs \cut {y, \cxs {-i} t} - \stfs \cut {\cxs {} t} \\ = & \max\limits_{y \in A^i} \stfs \cut {y, \cxs {-i} t} - \payt s t\end{aligned}\]

\begin{lemma}[] \label{lem:bounded}

Let \(\{u^i_s, x^i_s\}_{s\in \states, i \in I}\) a solution of
(\ref{eq:sabrd}) or (\ref{eq:sbrd}). Then for all \(s \in \states\), functions \(u_s\) and
\(t \mapsto \stfs \cut \cxst\) are bounded.\end{lemma}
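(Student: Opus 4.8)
The plan is to separate the two families of variables. The empirical-action variables $x^i_s$ evolve inside the fixed compact simplices $\Delta(A^i)$, so their boundedness is automatic once I check forward invariance; the entire content of the lemma is therefore the boundedness of the continuation-payoff variables $u_s$, which I would obtain by a sup-norm comparison argument exploiting the strict discounting $\delta < 1$. A convenient point is that the $u$-equation $\dot u_s(t) = \crateu t\,(\stfs{\cut}{\cxs{}t} - u_s(t))$ is literally the same in \ref{eq:sbrd} and \ref{eq:sabrd} (the systems differ only through the rate $\ratest s t$ appearing in the $x$-equation), so a single argument covers both. Note also that, since $u$, $x$ and $\alpha$ are continuous, the right-hand side of the $u$-equation is continuous and each $u_s$ is in fact $C^1$.

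First I would verify that each simplex $\Delta(A^i)$ is forward invariant. In both systems $\dot x^i_s = \ratest s t\,(b - x^i_s)$ for a selection $b$ from $\bestr[i][s][u(t)][\cxs{-i}t]$ and with $\ratest s t \geq 0$ (taking $\ratest s t \equiv 1$ for \ref{eq:sbrd}). On a face where a coordinate $x^i_{s,a}$ vanishes, the corresponding component of the field is $\ratest s t$ times the $a$-th coordinate of $b$, which is $\geq 0$ because $b$ lies in the simplex; and the components of $\dot x^i_s$ sum to $\ratest s t\,(1-1)=0$. Hence the simplex constraints are preserved and $x_s(t)$ stays in a fixed compact set.

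The core step is an envelope estimate for $M(t) := \max_{s\in\states} u_s(t)$. Set $R := \max_{s\in\states}\max_{a\in A}|r_s(a)|$, which by multilinearity also bounds $|r_s(x_s)|$ for every mixed profile. At any state $s$ attaining the maximum one has $\stfs{u}{x_s} = (1-\delta)r_s(x_s) + \delta\sum_{s'}P_{ss'}(x_s)u_{s'} \leq (1-\delta)R + \delta M(t)$, since $u_{s'} \leq M(t)$ and $P_{ss'}(x_s)$ is a probability distribution; consequently $\dot u_s \leq \crateu t\,(1-\delta)(R - M(t))$. Because $M$ is a finite maximum of $C^1$ functions it is locally Lipschitz, and its upper right Dini derivative is bounded by the largest $\dot u_s$ over the active states $s \in \arg\max_{s'} u_{s'}(t)$; hence $D^+M(t) \leq \crateu t\,(1-\delta)(R-M(t))$, which is strictly negative whenever $M(t) > R$ (recall $\alpha > 0$ and $\delta < 1$). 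A standard barrier argument then gives $M(t) \leq \max(M(0),R)$ for all $t$, and the symmetric estimate on $m(t):=\min_{s} u_s(t)$ gives $m(t)\geq \min(m(0),-R)$. Thus every $u_s$ remains in the fixed interval $[\min(m(0),-R),\,\max(M(0),R)]$.

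Finally, with $u$ bounded and each $x_s$ confined to a compact set, $\stfs{\cut}{\cxs{}t}$ is bounded directly from its definition as a $\delta$-weighted combination of the bounded quantities $r_s(x_s)$ and the $u_{s'}(t)$, which proves the second claim. The step I expect to require the most care is the non-smooth differentiation: one must justify that the one-sided derivative of the envelope $M$ (and of $m$) is controlled by the derivatives of the coordinates active at $t$ rather than differentiating $\max_s u_s$ naively, and then invoke the comparison principle for Dini derivatives; the remaining estimates are routine.
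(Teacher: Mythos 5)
Your proof is correct and follows essentially the same route as the paper's: both arguments show that strict discounting ($\delta<1$) together with the fact that $P_{ss'}(x_s)$ is a probability distribution makes a sup-norm box forward invariant for $u$, which then bounds $\Gamma_s(t)=f_{s,u(t)}(x_s(t))$. The only difference is in execution — you run a Dini-derivative comparison on the envelopes $\max_s u_s(t)$ and $\min_s u_s(t)$ and deduce the bound on $\Gamma_s$ afterwards, while the paper argues by contradiction at the first time any of $u_s$, $\Gamma_s$ would reach the bound $M$ — but the core inequality exploited is identical.
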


\begin{proof}

Let
\(M = \max_{s\in S, a \in A} \left\{|u_s(\startt)|, |\payt s \startt|, |r_s(a)|\right\}+1\).

Then \(|u_s(\startt)| < M\) and \(|\payt s \startt| < M\) for every
\(s\). \(u_s\) and \(\pay s\) are continuous, therefore if they are not
bounded by \(M\), there exists \(t\) minimal such that there exists
\(s \in S\) such that either:

\begin{itemize}
\item
  \(u_s(t) = M\) and \(|\payt s t| < M\), therefore
  \(\dot{u}_s(t) = \beta_t \crateu t(\payt s t - u_s(t)) \leq 0\) for
  some \(\beta_t\), therefore \(u_s(t^-) \geq M\), which is absurd.
\item
  \(u_s(t) = -M\) and \(|\payt s t| < M\), therefore
  \(\dot{u}_s(t) = \beta_t \crateu t (\payt s t - u_s(t)) \geq 0\) for
  some \(\beta_t\), therefore \(u_s(t^-) \leq -M\), which is absurd.
\item
  \(\payt s t = M\), therefore:
  $$(1-\delta)r_s(x_s(t)) + \delta \sum_{s' \in S} P_{s, s'}(x_s(t))u_{s'}(t) = M$$
  But \(r_s(x_s) < M\) and \(u_{s'}(t) \leq M\) for all \(s'\),
  \newline therefore \(\sum_{s' \in S} P_{s, s'}(x_s(t))u_{s'}(t) \leq M\), \newline  so
  \(\payt s t < M\) (because \(0 < \delta < 1\)), absurd.
\end{itemize}

\end{proof}

\begin{lemma}\label{lem:diffgamma}

  Function $\pay s$ is differentiable and its differential is:
  \[\diff {\pay s} = \delta \sum_{s'} P_{s s'}(x_s)\dot u_{s'} + \ratest s t \sum_i \optgap\] In the \ref{eq:sbrd} case, $\ratest s t = 1$.\end{lemma}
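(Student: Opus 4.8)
The plan is to compute $\diff{\pay s}$ directly by the chain rule. Since $\payt s t = \stfs{\cut}{\cxs{} t}$ and $\stfs u x = (1-\delta)r_s(x) + \delta\sum_{s'} P_{ss'}(x)u_{s'}$ is a polynomial — hence $C^\infty$ — function of the pair $(u, x_s)$, while $t \mapsto \cut$ and $t \mapsto \cxs{} t$ are absolutely continuous (being solutions of \ref{eq:sabrd}, which contains \ref{eq:sbrd} as the case $\ratest s t = 1$), the composite $\pay s$ is differentiable for almost every $t$ and its derivative splits, by the chain rule, into a contribution from the motion of $u$ and a contribution from the motion of $x_s$. I would treat these two contributions separately.

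For the $u$-contribution, observe that $\stfs u x$ is affine in $u$, with $\partial \stfs u x/\partial u_{s'} = \delta P_{ss'}(x)$. Pairing this gradient with $\dot u = (\dot u_{s'})_{s'}$ reproduces exactly the first term $\delta\sum_{s'} P_{ss'}(x_s)\dot u_{s'}$.

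For the $x$-contribution, I would use the multilinearity of $\stfs u \cdot$ in the action profile $x_s = (x^i_s)_{i\in I}$ established in Section~\ref{sec:background}. Along a solution, $\dot x^i_s = \ratest s t\left(\bestr[i][s][\cut][\cxs{-i} t] - \cxs i t\right)$ for almost every $t$. Because $\stfs u \cdot$ is linear in the $i$-th coordinate when the others are fixed, the directional derivative in the direction $\bestr[i][s][\cut][\cxs{-i} t] - \cxs i t$ equals the difference of values $\stfs{\cut}{\bestr[i][s][\cut][\cxs{-i} t], \cxs{-i} t} - \stfs{\cut}{\cxs{} t}$. By the definition of the best response, $\stfs{\cut}{\bestr[i][s][\cut][\cxs{-i} t], \cxs{-i} t} = \max_{y \in A^i}\stfs{\cut}{y, \cxs{-i} t}$, so this difference is precisely $\optgap$. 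Summing over $i$ and factoring out the common rate $\ratest s t$ yields the second term $\ratest s t\sum_i \optgap$, and adding the two contributions gives the claimed identity.

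The one point requiring care — and the main, though mild, obstacle — is that \ref{eq:sabrd} is a \emph{differential inclusion}: the best-response correspondence may be set-valued, so $\dot x^i_s(t)$ is merely some measurable selection $\ratest s t\,(b^i - \cxs i t)$ with $b^i \in \bestr[i][s][\cut][\cxs{-i} t]$. The decomposition above is nonetheless unambiguous, since every such selection $b^i$ attains the same optimal value $\max_{y \in A^i}\stfs{\cut}{y, \cxs{-i} t}$; hence the resulting $x$-contribution equals $\ratest s t\sum_i \optgap$ independently of which selection is realized, and the formula for $\diff{\pay s}$ holds for almost every $t$.
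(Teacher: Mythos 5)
Your proof is correct and follows essentially the same route as the paper's: the chain rule splits $\diff{\pay s}$ into a $u$-contribution (handled by linearity of $\stfs u {x_s}$ in $u$) and an $x_s$-contribution (handled by multilinearity, so the directional derivative along $\dot x^i_s$ becomes $\ratest s t \optgap$). Your additional care about almost-everywhere differentiability and the irrelevance of which best-response selection is realized is a welcome refinement of points the paper leaves implicit, but it does not change the argument.
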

  
  \begin{proof}
  
  \[\frac{d\pay s}{dt} = D_u (f_{s, \vec u(t)}(x_s(t)))(D_t u) + D_{x_s} f_{s, \vec u}(x_s)(D_t x_s)\] where $D_u$ is the partial differential in $u$.
  
  \(x_s \mapsto \stfs \cut {x_s}\) is a n-linear map in \(x_s\),
  therefore:
  $$D_{x_s} \stfs \cut {x_s}(D_t x_s) = \sum_i \stfs \cut {\dot x^i_s, x^{-i}_s}$$
  
  \(u \mapsto \stfs {u} {\cxs {} t}\) is a linear function in \(u\), and:
  \[D_u \stfs u {\cxs {} t} = \delta \sum_{s'}P_{s s'}(x_s) \dot u_s\]
  
  Therefore,
  \(\frac{d\pay s}{dt} = \delta \sum_{s'} P_{s s'}(x_s)\dot u_{s'} + \ratest s t \sum_i \optgap\).\end{proof}

\begin{lemma}\label{lem:deltalip}
  Function $\optgap$ is Lipschitz.
\end{lemma}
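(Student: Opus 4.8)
The plan is to reduce the Lipschitz continuity of $t \mapsto \optgap$ to the Lipschitz continuity of the trajectory $t \mapsto (u(t), x(t))$, combined with the fact that $(u, x) \mapsto \stfs u {\cdot}$ is polynomial (affine in $u$, multilinear in $x$) and hence Lipschitz on any bounded region. Writing $\optgap = \max_{y \in \actions} \stfs {\cut}{y, \cxs{-i} t} - \payt s t$, it then suffices to observe that a maximum of finitely many Lipschitz functions is Lipschitz and that a difference of Lipschitz functions is Lipschitz.

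First I would show that each coordinate $u_s(\cdot)$ is Lipschitz. From the dynamics $\dot{u}_s(t) = \crateu t (\payt s t - u_s(t))$, Lemma~\ref{lem:bounded} provides a constant $C_1$ with $|\payt s t - u_s(t)| \le C_1$ for all $t$, and since $\alpha$ is continuous and non-increasing on $\mathbb{R}^+$ we have $\crateu t \le \crateu 0$ for every $t \ge 0$; hence $|\dot{u}_s(t)| \le \crateu 0 \, C_1$ is uniformly bounded and $u_s$ is Lipschitz. Likewise $\dot{x}^i_s(t) \in \ratest s t \big( \bestr[i][s][\cut][\cxs{-i} t] - \cxs i t \big)$ has norm at most the diameter of $\Delta(\actions)$ (as $\ratest s t \le 1$), so each $x^i_s$ is Lipschitz with a constant independent of $t$. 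Thus the whole trajectory $t \mapsto (u(t), x(t))$ is Lipschitz.

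Next, by Lemma~\ref{lem:bounded} the values $u_{s'}(t)$ stay in a fixed bounded interval and each $x_{s'}(t)$ stays in the compact product of simplices, so the trajectory remains in a fixed compact set $K$. On $K$ the map $(u, x) \mapsto \stfs u x = (1-\delta) r_s(x) + \delta \sum_{s'} P_{ss'}(x) u_{s'}$ is polynomial, hence has bounded derivative and is Lipschitz on $K$. Composing this Lipschitz map with the Lipschitz curve $t \mapsto (u(t), x(t))$ shows that $t \mapsto \payt s t$ is Lipschitz, and the same argument applied with the fixed pure action $y$ substituted for the $i$-th coordinate shows that $t \mapsto \stfs {\cut}{y, \cxs{-i} t}$ is Lipschitz for each $y \in \actions$. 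Taking the maximum over the finite set $\actions$ and subtracting $\payt s t$ yields that $\optgap$ is Lipschitz.

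The argument is essentially routine; the only points requiring care are (i) the uniform bound on $\crateu t$, which relies on $\alpha$ being non-increasing so that the largest value $\crateu 0$ dominates for all $t \ge 0$, and (ii) the fact that the maximum defining $\optgap$ is over the \emph{finite} pure-action set $\actions$ — this is precisely what guarantees the maximum is Lipschitz rather than merely continuous.
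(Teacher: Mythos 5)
Your proof is correct and follows essentially the same route as the paper's: bound $\dot u_s$ and $\dot x^i_s$ to get Lipschitz continuity of the trajectory, use that $f_{s,\cdot}$ is multilinear (hence Lipschitz on the bounded region where the trajectory lives), and conclude via a finite maximum and a difference of Lipschitz functions. If anything, your write-up is slightly more careful than the paper's, which tacitly absorbs the factor $\crateu 0$ when bounding $|\dot u_s|$ and does not spell out the finite-max step.
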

\begin{proof}

\(u\) is differentiable and its derivative is bounded by\newline 
\(\sup_t |\payt s t - u_s(t)|\), so \(u\) is \(2M\)-Lipschitz where
\(M\) is a bound of the \(\pay s\) and \(u_s\). The derivative of
\(x_s\) is also bounded, so it is also Lipschitz. As \(f_{s, \cdot}\) is
Lispschitz with respect to any parameter (it is multilinear). Therefore,
for all \(y\), \(t \mapsto \stfs \cut {y, \cxs {-i} t}\)
is Lispschitz with the same coefficient, so \(\optgap\) is
also Lipschitz.
\end{proof}

\hypertarget{convergence-of-the-synchronous-and-semi-synchronous-system}{%
\paragraph{Convergence of the synchronous and asynchronous
system}\label{convergence-of-the-synchronous-and-semi-synchronous-system}}

Let
\(\wstate t \in \argmin{s \in S} \left(\payt s t - u_s(t)\right)\). This means that for every $t$ we choose an arbitrary $s$ that minimizes $\payt s t - u_s(t)$. Note that as a consequence, $\payt {\wstate t} t - u_{\wstate t}(t)$ is continuous because every $\payt s t - u_s(t)$ is continuous.

\begin{lemma}[] \label{ineqgu}
There exists $A \geq 0$ such that for every $s \in \states$,
\(\payt s t - u_s(t) \geq -A \exp(\int_1^t (\delta-1)\crateu t dt)\)
\end{lemma}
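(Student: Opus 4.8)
The plan is to control, uniformly in $s$, the scalar quantity $m(t) := \min_{s'\in\states}\bigl(\payt{s'}t - u_{s'}(t)\bigr)$ and then use the trivial bound $\payt s t - u_s(t)\geq m(t)$ for each fixed $s$. Throughout I exploit that $\{u_s,x^i_s\}$ solves \ref{eq:sabrd} (which contains \ref{eq:sbrd} as the case $\ratest s t\equiv 1$), so that $\dot u_s(t)=\crateu t\bigl(\payt s t - u_s(t)\bigr)$, and that by \ref{hyp:cont} we have $\crateu t\geq 0$.

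First I would differentiate $\pay s$ using Lemma~\ref{lem:diffgamma}, namely $\diff{\pay s}=\delta\sum_{s'}P_{ss'}(x_s)\dot u_{s'}+\ratest s t\sum_i\optgap$, and drop the last term since $\optgap\geq 0$ and $\ratest s t\geq\ratelb>0$. Substituting $\dot u_{s'}=\crateu t\bigl(\payt{s'}t-u_{s'}(t)\bigr)\geq\crateu t\,m(t)$ and using $\sum_{s'}P_{ss'}(x_s)=1$ gives, for every $s$, $\diff{\pay s}\geq\delta\crateu t\,m(t)$. Meanwhile, at any state $s$ realizing the minimum (so $\payt s t-u_s(t)=m(t)$) the dynamics give $\dot u_s=\crateu t\,m(t)$, whence
\[\frac{d}{dt}\bigl(\payt s t - u_s(t)\bigr)\geq\delta\crateu t\,m(t)-\crateu t\,m(t)=(\delta-1)\crateu t\,m(t).\]

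The next step is an envelope (Danskin-type) argument: $m$ is the minimum of the finitely many $C^1$ functions $g_{s'}:=\payt{s'}\cdot-u_{s'}$, hence it is Lipschitz and its lower-right Dini derivative equals the minimum of $\dot g_{s'}$ over the active indices; by the previous display each such $\dot g_{s'}\geq(\delta-1)\crateu t\,m(t)$, so $m$ satisfies $D^+m(t)\geq(\delta-1)\crateu t\,m(t)$. I would then integrate with the factor $\mu(t):=\exp\bigl(\int_1^t(1-\delta)\crateu y\,dy\bigr)$, which is positive and non-decreasing with $\mu'=(1-\delta)\crateu t\,\mu$; the product $m\mu$ then has non-negative right Dini derivative, hence is non-decreasing, giving $m(t)\mu(t)\geq m(1)$ and therefore $m(t)\geq m(1)\exp\bigl(\int_1^t(\delta-1)\crateu y\,dy\bigr)$. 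Setting $A:=\max\{0,-m(1)\}$, finite by the boundedness Lemma~\ref{lem:bounded}, yields $\payt s t-u_s(t)\geq m(t)\geq -A\exp\bigl(\int_1^t(\delta-1)\crateu y\,dy\bigr)$ for all $s$ (when $m(1)\geq 0$ the claim is immediate since the exponential is positive).

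The main obstacle is the non-smoothness of $t\mapsto\wstate t$: the minimizer can jump, so one cannot simply differentiate $\payt{\wstate t}t-u_{\wstate t}(t)$ along a fixed state. I would handle this either through the Dini-derivative/envelope formulation above, or, as in the proof sketch of Theorem~\ref{thm:contasync}, by comparing $\payt{\wstate{t+h}}{t+h}-u_{\wstate{t+h}}(t+h)$ with $\payt{\wstate{t+h}}{t}-u_{\wstate{t+h}}(t)$ for small $h>0$ and passing to accumulation points of $\wstate{t+h}$ as $h\to 0$, after which the Grönwall step is routine. A secondary (standard) technical point is justifying that a continuous function with non-negative lower-right Dini derivative is non-decreasing, which closes the integrating-factor argument.
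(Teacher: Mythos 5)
Your proposal is correct and takes essentially the same approach as the paper: both proofs lower-bound the evolution of the pointwise minimum $m(t)=\min_{s'\in\states}\bigl(\payt {s'} t - u_{s'}(t)\bigr)$ by $(\delta-1)\crateu t\,m(t)$ --- combining Lemma~\ref{lem:diffgamma}, the positivity of $\optgap$, and the $u$-dynamics at a minimizing state --- and then conclude via Gr\"onwall with the integrating factor $\exp\bigl(\int(\delta-1)\crateu y\,dy\bigr)$ and the boundedness Lemma~\ref{lem:bounded}. The only divergence is technical packaging: you differentiate the minimum via the Danskin/Dini envelope formula, whereas the paper uses a difference-quotient expansion and accumulation points of $\wstate{t+h}$; these are interchangeable treatments of the same non-smoothness issue, as you yourself note.
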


\begin{proof}

By the previous lemma:

\begin{equation}\begin{aligned}\frac{d \pay s}{dt} & \geq \delta \sum_{s'}  \transSS(x_s)\crateu t \left(\payt {s'} t-u_{s'}(t)\right) \\ & \geq \delta \sum_{s'}P_{ss'}(x_s) \crateu t \left(\payt {\wstate t} t -u_{\wstate t}\right) \\ & = \delta  \crateu t \left(\payt {\wstate t} t-u_{\wstate t}\right)\end{aligned}\label{eq:bound_gama_s}\end{equation}

Moreover, for \(h > 0\):

\begin{equation}\begin{aligned}&\payt {\wstate {t+h}} {t+h} -u_{\wstate {t+h}}(t+h)- (\payt {\wstate t} t - u_{\wstate t}(t)) \\&  \geq \payt {\wstate {t+h}} {t+h} -u_{\wstate {t+h}}(t+h) - (\payt {\wstate {t+h}} t - u_{\wstate {t+h}}(t)) \\ & \geq h \min_{s\in S} \frac{d\pay s}{dt} + o(h)  + u_{\wstate{t+h}}(t)-u_{\wstate{t+h}}(t+h)\end{aligned}\label{eq:pregronwall}\end{equation}

For any $s$:
\begin{align*}u_{s}(t)-u_{s}(t+h) & = -h\frac{d u_s}{dt}+o(h) \\ & = -h\crateu t \left(\Gamma_s(t)-u_s(t)\right)+o(h)\end{align*}

Now let us suppose that $s$ is an accumulation point of $s_-(t+h)$ when $h$ goes to 0. Then, as every $\Gamma_s(t)-u_s(t)$ is continuous, we have that $\Gamma_{s_-(t)}(t)-u_{s_-(t)}(t) = \Gamma_s(t)-u_s(t)$ (else $s$ can not be an accumulation point). So, the preceding equality can be rewritten as:

$$u_{s}(t)-u_{s}(t+h) = -h\alpha(t)\left(\Gamma_{s_-(t)}(t)-u_{s_-(t)}(t)\right) + o(h)$$

This is valid for every accumulation point of $s_-(t+h)$ (and is independent of $s$) and there is a finite number of such $s$, so we also have:
\optmult{u_{s_-(t+h)}(t)-u_{s_-(t+h)}(t+h) \\= -h\alpha(t)\left(\Gamma_{s_-(t)}(t)-u_{s_-(t)}(t)\right) + o(h)}

Now, from inequality~(\ref{eq:bound_gama_s}), we have that:
$$h \min_{s\in S} \frac{d \Gamma_s}{dt} \geq h\delta \crateu t \left(\Gamma_{s_-(t)}(t)-u_{s_-(t)}\right)$$

And these two last inequalities can be summed to get:
\optmult{h \min_{s\in S} \frac{d \Gamma_s}{dt} + u_{s_-(t+h)}(t)-u_{s_-(t+h)}(t+h)+o(h) \\ \geq h(\delta-1) \crateu t \left(\Gamma_{s_-(t)}(t)-u_{s_-(t)}\right)+o(h)}

Going back to~(\ref{eq:pregronwall}):
\[\begin{split}\payt {\wstate {t+h}} {t+h} -u_{\wstate {t+h}}(t+h) \optspace{\ \ \ \ \ } \optnl - (\payt {\wstate t} t - u_{\wstate t}(t)) \\ \geq h(\delta-1) \crateu t \left(\Gamma_{s_-(t)}(t)-u_{s_-(t)}\right)+o(h)\end{split}\]

We now need a version of Grönwall Lemma that applies to this case, it is
provided here for completeness:

Let
\(v(t) = \exp\left( \int_0^t (\delta-1)\crateu t dt\right)\).

Then \(\frac{dv}{dt} = (\delta-1)\crateu t  v(t)\),
\(v(0) = 1\), \(v > 0\).

and
\(\frac{1}{v(t+h)} = \frac{1}{v(t)} - h(\delta-1)\frac{\crateu t}{ v(t)} + o(h)\)

We now proceed with the classical proof of Grönwall Lemma:

\begin{align*}
\frac{\payt {\wstate {t+h}} {t+h} - u_{\wstate {t+h}}(t+h)}{v(t+h)} & \geq \frac{\payt {\wstate t} t - u_{\wstate t}(t)}{v(t+h)} \optnl \optesp \optspace{\ \ \ \ } + h(\delta-1) \crateu t   \frac{\payt {\wstate t} t-u_{\wstate t}(t)}{v(t+h)}+o(h)\\
& \begin{multlined}\geq \frac{\payt {\wstate t} t - u_{\wstate t}(t)}{v(t)} \optnl \optesp \optspace{\ \ \ \ }  - h(\delta-1)\crateu t \frac{\payt {\wstate t} t - u_{\wstate t}(t)}{ v(t)}  \optnl \optesp \optspace{\ \ \ \ }  \\ + h(\delta-1)\crateu t\frac{\payt {\wstate t} t - u_{\wstate t}(t)}{ v(t)} +o(h)\end{multlined}\\
& \geq \frac{\payt {\wstate t} t - u_{\wstate t}(t)}{v(t)} + o(h)\end{align*}

Therefore,
\(t \mapsto \frac{\payt {\wstate t} t - u_{\wstate t}(t)}{v(t)}\) is
increasing. We can conclude:
\optmult{\payt {\wstate t} t - u_{\wstate t}(t) \\\geq \left( \payt {\wstate 0} 0 - u_{\wstate 0}(0)\right)\exp\left(\int_0^t (\delta-1)\crateu t dt\right)}
\end{proof}

\begin{lemma} The gap between $\payt s t$ and $\max_{y\in A^i}\stfs {u(t)}{y, x^{-i}_s(t)}$ converges to 0:
  \[\forall s, \sum_i \optgap \rightarrow 0\]
  \label{lem:delta0}\end{lemma}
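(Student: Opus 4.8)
The plan is to show that the nonnegative map $t \mapsto \sum_{i\in I}\optgap$ has a finite integral over $[0,+\infty)$ and then to exploit its Lipschitz regularity (Lemma~\ref{lem:deltalip}) to force it to vanish at infinity. The starting point is the differential identity of Lemma~\ref{lem:diffgamma}, $\diff{\pay s} = \delta \sum_{s'} P_{ss'}(x_s)\dot u_{s'} + \ratest s t \sum_i \optgap$, which I would rearrange to isolate $\ratest s t \sum_i \optgap = \diff{\pay s} - \delta\sum_{s'}P_{ss'}(x_s)\dot u_{s'}$. Since each term $\optgap$ is nonnegative and $\ratest s t \geq \ratelb > 0$, dividing through and using $\tfrac{1}{\ratest s t}\leq \tfrac{1}{\ratelb}$ yields the pointwise bound $\sum_i \optgap \leq \tfrac{1}{\ratelb}\bigl(\diff{\pay s} - \delta\sum_{s'}P_{ss'}(x_s)\dot u_{s'}\bigr)$.

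Integrating this inequality from $0$ to $t$, the $\diff{\pay s}$ term telescopes into $\tfrac{1}{\ratelb}(\payt s t - \payt s 0)$, which is bounded thanks to Lemma~\ref{lem:bounded}. It remains to bound the quantity $-\delta\int_0^t \sum_{s'}P_{ss'}(x_s)\dot u_{s'}\,dv$ from above. For the (SBRD)/(SABRD) dynamics one has $\dot u_{s'} = \crateu v\,(\payt {s'} v - u_{s'}(v))$, so Lemma~\ref{ineqgu} gives $\payt {s'} v - u_{s'}(v) \geq -A\exp\bigl(\int_1^v(\delta-1)\crateu y\,dy\bigr)$; combining this with $P_{ss'}\geq 0$ and $\sum_{s'}P_{ss'}=1$ produces the lower bound $\sum_{s'}P_{ss'}(x_s)\dot u_{s'} \geq -A\,\crateu v\,\exp\bigl(\int_1^v(\delta-1)\crateu y\,dy\bigr)$.

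The crucial observation is that $\crateu v\,\exp\bigl(\int_1^v(\delta-1)\crateu y\,dy\bigr)$ is, up to the constant factor $\tfrac{1}{\delta-1}$, an exact derivative of $v\mapsto \exp\bigl(\int_1^v(\delta-1)\crateu y\,dy\bigr)$. Its integral therefore evaluates explicitly, and because $\delta-1<0$ the antiderivative stays bounded for all $t$ (and under (H2) the boundary term at $t$ even tends to $0$). Hence $\int_0^t \sum_i\optgap[i][s][v]\,dv$ is bounded uniformly in $t$, so the improper integral of the nonnegative integrand converges. Finally, the elementary fact that a nonnegative, uniformly Lipschitz function with finite integral over $[0,+\infty)$ must tend to $0$, where Lipschitzness is supplied by Lemma~\ref{lem:deltalip}, completes the argument.

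I expect the main obstacle to be the careful bookkeeping of the exponential factor coming from Lemma~\ref{ineqgu}: one must check that the explicit antiderivative of $\crateu v\,\exp\bigl(\int_1^v(\delta-1)\crateu y\,dy\bigr)$ remains bounded rather than diverging, which rests entirely on the sign $\delta-1<0$. The argument for the fully asynchronous (ABRD) dynamics is entirely analogous, the only change being that the argument of $\alpha$ becomes $\int_0^v \ratest s w\,dw$; since this lies between $\ratelb v$ and $v$, it does not alter the integrability conclusion.
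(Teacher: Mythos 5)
Your proposal is correct and follows essentially the same route as the paper's proof: isolating $\ratest s t \sum_i \optgap$ via Lemma~\ref{lem:diffgamma}, bounding the update rate below by $\ratelb$, controlling $\dot u_{s'}$ through the exponential bound of Lemma~\ref{ineqgu} whose antiderivative stays bounded because $\delta-1<0$, and concluding from the finite integral of a nonnegative Lipschitz function (Lemma~\ref{lem:deltalip}). The only cosmetic difference is that you divide by $\ratest s t$ pointwise before integrating, whereas the paper integrates first and then uses $\ratelb \int \sum_i \optgap \, dt \leq \int \ratest s t \sum_i \optgap \, dt$; these are equivalent.
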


\begin{proof}

First, we show that \(\forall i,s\),
\(\int_1^\infty \sum_{i\in I} \optgap dt < + \infty\).

Using Lemma~\ref{lem:diffgamma}:
\(\ratest s t \sum_i \optgap = \frac{d\pay s}{dt} - \delta \sum_{s'}P_{s s'}(x_s)\dot u_s\).

Therefore:
\[\forall T, \int_1^T\ratest s t \sum_i \optgap dt = \int_1^T \frac{d\pay s}{dt} - \delta\sum_{s'}\int_1^TP_{ss'}(x_s)\dot u_s\]

With the previous lemma:
\[\begin{aligned}P_{ss'}(x_s)\dot u_s & = P_{ss'}(x_s)\crateu t \left({\payt s t-u_s(t)}\right) \\ & \geq -P_{ss'}(x_s) A \crateu t \exp(\int_1^t (\delta-1)\crateu t )\end{aligned}\]

Then, for all $T$:
{ \[\begin{aligned}\beta_-\int_1^T \sum_i \optgap dt  & \leq \int_1^T \ratest s t\sum_i \optgap  dt \\ &\leq \payt s T - \payt s 1 \optnl \optesp + \delta\sum_{s'}P_{ss'}(x_s)\int_1^T A\crateu t \exp\left(\int_1^t \crateu v(\delta-1)dv\right)\\&=\payt s T-\payt s 1 \optnl \optesp + \delta\frac{A}{\delta-1} \left(\exp\left(\int_1^T \crateu v(\delta-1)dv\right)-1\right)\\&<\payt s T-\payt s 1 + \delta\frac{A}{1-\delta}\end{aligned}\]}

Then, as $\optgap$ is Lipschitz (Lemma~\ref{lem:deltalip}) and the limit of its integral is bounded and
positive, \(\optgap \limt 0\).\end{proof}

\begin{lemma}[Convergence of the synchronous and semi-asynchronous system] \label{lem:contconv}

For all \(s\in S\):
\[\payt s t = \stfs \cut {\cxs {} t} \limt \lim \sup \pay s\]
\[\text{and\ } u_s(t) \limt \lim\sup \pay s\]\end{lemma}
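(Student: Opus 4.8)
The plan is to prove the two convergences in sequence: first that each $\Gamma_s(t)$ converges (necessarily to its own $\limsup$, since a convergent function equals its $\limsup$), and then that $u_s(t)$ is dragged to the same limit. The whole argument rests on upgrading the lower bound already furnished by Lemma~\ref{ineqgu} into the statement that $\frac{d\Gamma_s}{dt}$ can only decrease by an integrable amount, after which boundedness (Lemma~\ref{lem:bounded}) closes the first step.

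For the convergence of $\Gamma_s$ I would start from the exact derivative of Lemma~\ref{lem:diffgamma}, namely $\frac{d\Gamma_s}{dt} = \delta\sum_{s'}P_{ss'}(x_s)\dot u_{s'} + \beta_s(t)\sum_i\Delta^i_s(t)$, substitute $\dot u_{s'} = \alpha(t)(\Gamma_{s'}(t)-u_{s'}(t))$, and discard the nonnegative term $\beta_s(t)\sum_i\Delta^i_s(t)$ (each $\Delta^i_s\ge 0$ and $\beta_s\ge 0$). Since $P_{ss'}(x_s)$ is a probability vector and $\delta\alpha(t)\ge 0$, the weighted average is at least its smallest entry $\Gamma_{s_-(t)}(t)-u_{s_-(t)}(t)$, so applying Lemma~\ref{ineqgu} at $s_-(t)$ gives $\frac{d\Gamma_s}{dt} \ge -\delta A\,\alpha(t)\exp\!\big(\int_1^t(\delta-1)\alpha(y)\,dy\big)$. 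This bound is integrable on $[1,\infty)$: writing $g(t)=\exp(\int_1^t(\delta-1)\alpha)$ one has $g'=(\delta-1)\alpha g$, so $\alpha g=g'/(\delta-1)$ integrates to $(1-g(t))/(1-\delta)$, which stays bounded exactly because $\delta<1$ and tends to $1/(1-\delta)$ as $g(t)\to 0$ (forced by $\int\alpha=\infty$ under \ref{hyp:cont}). Calling $h(t)$ this nonnegative integrable bound, $t\mapsto \Gamma_s(t)+\int_1^t h$ is nondecreasing and bounded above, hence converges; subtracting the convergent $\int_1^t h$ shows $\Gamma_s(t)$ converges to some $\Phi_s=\limsup_{t\to\infty}\Gamma_s$.

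For $u_s$ I would use $\dot u_s=\alpha(t)(\Gamma_s(t)-u_s(t))$ with $\Gamma_s\to\Phi_s$ and $\int_0^\infty\alpha=\infty$. Following the paper's remark that otherwise $u_s$ could not be bounded, I would run a trapping argument: to get $\limsup u_s\le\Phi_s$, suppose $\limsup u_s>\Phi_s+2\epsilon$; for large $t$, $\Gamma_s(t)<\Phi_s+\epsilon$, so whenever $u_s>\Phi_s+\epsilon$ the drift is negative, pushing $u_s$ down, and if $u_s$ stayed above $\Phi_s+\epsilon$ forever the drift would eventually be at most $-\tfrac{\epsilon}{2}\alpha(t)$, whence $\int\alpha=\infty$ sends $u_s\to-\infty$, contradicting Lemma~\ref{lem:bounded}; once $u_s$ dips below $\Phi_s+\epsilon$ it cannot cross back, since at a crossing one would need $\Gamma_s\ge u_s=\Phi_s+\epsilon$. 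A symmetric argument yields $\liminf u_s\ge\Phi_s$, so $u_s\to\Phi_s=\limsup_{t\to\infty}\Gamma_s$. Equivalently, one may solve the linear ODE with integrating factor $\exp(\int_0^t\alpha)$ and observe that $u_s(t)$ is a past-average of $\Gamma_s$ whose weights concentrate on recent times, hence converges to $\Phi_s$.

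The main obstacle I anticipate is the interface between the two steps under the weak hypothesis \ref{hyp:cont}. Convergence of $\Gamma_s$ is mostly bookkeeping once Lemma~\ref{ineqgu} is in hand, and the integrability is clean thanks to the telescoping identity for $g$ and the strict inequality $\delta<1$. The genuinely delicate part is transferring convergence to $u_s$ when $\alpha$ may decay to $0$: one only has $\int\alpha=\infty$, so the pull of $u_s$ toward $\Gamma_s$ is arbitrarily slow, and the trapping argument must use the divergence of the integral of $\alpha$ (rather than any pointwise lower bound on $\alpha$) both to force $u_s$ below any threshold above $\Phi_s$ and to keep it from escaping once inside.
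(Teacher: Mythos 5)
Your proposal is correct and takes essentially the same route as the paper: the convergence of $\Gamma_s(t)$ is obtained exactly as in the paper's proof, by integrating the lower bound $\frac{d\Gamma_s}{dt} \ge \delta\,\alpha(t)\left(\Gamma_{s_-(t)}(t)-u_{s_-(t)}(t)\right) \ge -\delta A\,\alpha(t)\exp\left(\int_1^t(\delta-1)\alpha(y)\,dy\right)$ coming from Lemmas~\ref{lem:diffgamma} and~\ref{ineqgu}, checking integrability thanks to hypothesis~\ref{hyp:cont}, and invoking boundedness (Lemma~\ref{lem:bounded}). Your treatment of $u_s$ (the trapping argument, or equivalently the integrating-factor representation) is a rigorous unpacking of the paper's terse final step --- ``with a similar argument $u_s$ has a limit, and it is necessarily the same as $\Gamma_s$, otherwise $u_s$ would be unbounded'' --- and rests on the same three facts used there: $\dot u_s=\alpha(t)\left(\Gamma_s(t)-u_s(t)\right)$, $\int_0^\infty\alpha(y)\,dy=\infty$, and Lemma~\ref{lem:bounded}.
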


\begin{proof}

\[\begin{aligned} \payt s {t_2} & = \payt s {t_1} + \int_{t_1}^{t_2} \frac{d \pay s}{dt}dv \\ &\geq \payt s {t_1} + \delta\int_{t_1}^{t_2} \crateu v \left(\payt {\wstate v} v - u_{\wstate v}(v)\right) dv \\ &\geq \payt s {t_1} - A\delta \int_{t_1}^{t_2} \crateu v \exp\left(\int_1^v (\delta-1) \crateu t dt\right) dv \\ &\geq \payt s {t_1} - A\frac{\delta}{1-\delta} \exp\left(\int_1^{t_1} (\delta-1)\crateu t  dt\right)\end{aligned}\]

So \(\dfrac{A\delta}{1-\delta} \exp\left(\int_1^{t_1} (\delta-1)\crateu t dt \right)\) goes to \(0\) when
\(t_1\) goes to \(+\infty\) (thanks to hypothesis~\ref{hyp:cont}), therefore, it is sufficient to take \(t_1\)
big enough so that \(\payt s {t_1}\) is close to the \(\lim\sup\) and
the second term is small enough.

With a similar argument, \(u_s\) has a limit, and it is necessarily the
same as \(\pay s\), otherwise \(u_s\) would be unbounded (because
\(\dot u_s = (\pay s -u_s)/\crateu t\)).\end{proof}

\begin{lemma}[Convergence to the set of mixed stationary equilibria]\label{lem:contconvequilibria}

\(\{\lim \pay s\}_{s \in S}\) is an equilibrium payoff of the \(\delta\)
discounted stochastic game. \(\{x_{s}\}_{s\in \states}\) converges to
the set of mixed equilibria.\end{lemma}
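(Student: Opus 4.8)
The plan is to combine the two facts already established — that $u_s(t)$ and $\payt s t$ share a common limit $\Phi_s := \lim \pay s$ (Lemma~\ref{lem:contconv}), and that the best-response gaps vanish (Lemma~\ref{lem:delta0}) — with the dynamic programming principle characterizing stationary equilibria. First I would note that, since each $\optgap \geq 0$ and $\sum_i \optgap \limt 0$, every individual gap satisfies $\optgap \limt 0$. Because the trajectory $\{x_s(t)\}_{s\in\states}$ lives in the compact set $\Pi_{i\in I}\Delta(A^i)^\states$, it admits accumulation points; let $\tilde x$ be one such point, attained along a sequence $t_n \to \infty$, and note that $u(t_n) \to \Phi$ along this sequence.

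The next step is a limit passage, where continuity does all the work. Since $f_{s,\cdot}(\cdot)$ is jointly continuous (multilinear in the action profile and linear in the continuation vector) and the pointwise maximum preserves continuity, taking $t = t_n \to \infty$ in the definition of $\optgap$ gives, for every state $s$ and player $i$,
$$\max\limits_{y\in A^i} \stfs \Phi {y, \tilde x^{-i}_s} - \stfs \Phi {\tilde x_s} = 0.$$
Combining this with $\payt s {t_n} \to \Phi_s$ and $\payt s {t_n} = \stfs {u(t_n)}{x_s(t_n)} \to \stfs \Phi {\tilde x_s}$ yields two identities: (i) $\stfs \Phi {\tilde x_s} = \Phi_s$ for every $s$, and (ii) $\tilde x_s$ is a Nash equilibrium of the auxiliary Shapley game parameterized by $\Phi$ in every state.

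The decisive step reads these identities through the equilibrium characterization. Identity (i) is the consistency equation $\Phi_s = (1-\delta)r_s(\tilde x_s) + \delta \sum_{s'}\transSS(\tilde x_s)\Phi_{s'}$, whose unique solution is the discounted value of the stationary profile $\tilde x$; hence $\Phi$ is precisely the payoff of $\tilde x$. Fixing a player $i$ and freezing $\tilde x^{-i}$, identity (ii) reads $\Phi_s = \max_{a\in A^i}\stfs \Phi {a, \tilde x^{-i}_s}$ (the maximum over $A^i$ coincides with the maximum over $\Delta(A^i)$ by linearity in player $i$'s own action), which is the Bellman optimality equation of player $i$'s best-response MDP against $\tilde x^{-i}$; as the associated Bellman operator is a $\delta$-contraction, its unique fixed point is the optimal value, so $\Phi$ equals player $i$'s best-response value. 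Since $\tilde x^i$ already attains $\Phi$ by (i), it is a best response, and as $i$ was arbitrary, $\tilde x$ is a stationary Nash equilibrium with payoff $\Phi$. Because $\Phi$ is independent of the chosen accumulation point, every accumulation point of $x(t)$ lies in the (compact) set of stationary equilibria with payoff $\Phi$, and a routine compactness argument then upgrades this to convergence of $\{x_s(t)\}_{s\in\states}$ to that set.

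The step I expect to be the main obstacle is the passage from (ii) to the equilibrium conclusion: one must argue that a profile which is a local best response in every auxiliary game and is value-consistent through (i) admits no profitable deviation in the infinite-horizon game, not merely no one-shot deviation. This is exactly the point where the lack of concavity of the discounted payoff is circumvented by the dynamic programming structure, and the contraction/uniqueness argument is what guarantees that local optimality in the Shapley games propagates to global stationary optimality.
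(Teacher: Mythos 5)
Your proof is correct and follows essentially the same route as the paper's: take an accumulation point $\tilde x$ of the compact-valued trajectory, and pass to the limit in the vanishing best-response gaps (Lemma~\ref{lem:delta0}) and in $\payt s t \rightarrow \Phi_s$, $u_s(t)\rightarrow\Phi_s$ (Lemma~\ref{lem:contconv}) using joint continuity of $(u,x)\mapsto \stfs u x$. The only difference is that the paper stops at the two limit identities and implicitly invokes Fink's fixed-point characterization of stationary equilibria (stated earlier as the ``Stationary equilibrium characterization'' proposition), whereas you re-derive that characterization in-line via the policy-evaluation equation and the $\delta$-contraction property of the Bellman operator — a valid, self-contained justification of the step the paper outsources.
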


\begin{proof}

Let \(\tilde x\) be an accumulation point of the vector-valued function
\(x = \{x_s\}\). Then, from Lemma~\ref{lem:delta0}:
\[\optgap = \stfs \cut {br^i_{s, u(t)}(x^{-i}_s(t)) - x^i_s(t), x^{-i}_s(t)} \rightarrow 0\]
So by continuity, for all $s$:
\[f_{s, \lim u}(br^i_{s, \lim u}(\tilde x^{-i}_s(t)) - \tilde x^i_s, \tilde x^{-i}_s) =0\]\end{proof}

\begin{proof}[Proof of Theorem \ref{thm:contasync}]\ \newline
Lemma~\ref{lem:contconvequilibria} and~\ref{lem:contconv} prove the theorems for both the \ref{eq:sbrd} and \ref{eq:sabrd} systems.
\end{proof}

\section{Convergence of Best-Response Dynamics in Zero-Sum Games}
\label{app:zerosum_continuous}

In this section, we prove that best-response dynamics converges in two players zero-sum games, therefore extending the result of \citeauthor{leslieBestresponseDynamicsZerosum2020}. The proof is very similar to that of \cite{leslieBestresponseDynamicsZerosum2020} but it is included below for completeness. Details can be found in the other paper.

{

\NewDocumentCommand{\ut}{O{} O{t}}{u_{#1}\ifthenelse{\equal{#2}{}}{}{(#2)}}
\NewDocumentCommand{\vu}{O{s} O{\ut}}{v_{#1, #2}}
\NewDocumentCommand{\x}{O{} O{i} O{t}}{x_{#1}^{#2}\ifthenelse{\equal{#3}{}}{}{(#3)}}
\NewDocumentCommand{\f}{O{\x[s][]} O{\ut} O{s}}{f_{#3, #2}\left(#1\right)}
\NewDocumentCommand{\duality}{O{s} O{t}}{w_{#1}\ifthenelse{\equal{#2}{}}{}{\left(#2\right)}}

Let $\{\x[s][][], \ut[][]\}_{s \in \states}$ be a solution of $\ref{eq:sabrd}$ and let $\alpha^\star > \lim_{t\rightarrow +\infty} \alpha(t)$. Note that $\alpha^\star$ may be be arbitrarily close to $0$ if the limit of $\alpha$ is 0, and indeed the case proven in \cite{leslieBestresponseDynamicsZerosum2020} is the case where $\alpha(t) = \frac{1}{t+1}$ and $\alpha^\star \rightarrow 0$.

We define the energy of the system, also known as the duality gap, as:
\begin{equation}\duality = \max_{a^1 \in A^1} \f[a^1, \x[s][2]] - \min_{a^2 \in A^2} \f[\x[s][1], a^2]
\end{equation}

Lemma~\ref{lem:bounded} states that there exists a constant $M > 0$ such that $\|\ut[s][]\|_{\infty} \leq M$ and $\|\f[]\|_{\infty} \leq M$.

By definition $\x[s][][], \ut[][]$ are differentiable almost everywhere. It is straightforward to see that it is also true for $\duality[s][]$.

We denote $\vu$ the value of the auxiliary game in state $s$ parameterized by $\ut$.

\begin{lemma}\label{lem:energy_zerosum}
For every $s \in \states$, there exists a time $T$ such that for all $t \geq T$, $|\f - \vu| \leq 4 M \alpha^\star$.
\end{lemma}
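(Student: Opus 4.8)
The plan is to treat the duality gap $\duality$ as a Lyapunov function that is contracted by the fast best-response motion of the strategies and only mildly perturbed by the slow drift of the continuation vector. First I would reduce the claim to a bound on $\duality$. Set $\overline g_s(t):=\max_{a^1\in A^1}\stfs{\cut}{a^1,\cxs{2}{t}}$ and $\underline g_s(t):=\min_{a^2\in A^2}\stfs{\cut}{\cxs{1}{t},a^2}$, so that $\duality=\overline g_s(t)-\underline g_s(t)$. By the minimax theorem applied to the bilinear auxiliary game parameterized by $\cut$, the value $\vu$ lies in $[\underline g_s(t),\overline g_s(t)]$; and $\payt s t=\stfs{\cut}{\cxs{}{t}}$ lies there as well, so $|\payt s t-\vu|\le\duality$. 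Hence it suffices to show $\lim\sup_{t\to\infty}\duality\le 4M\alpha^\star$.

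The heart of the argument is a differential inequality for $\duality$, which is locally Lipschitz (as in Lemma~\ref{lem:deltalip}) and hence differentiable almost everywhere. Let $b^1\in\argmaxi{a^1}\stfs{\cut}{a^1,\cxs{2}{t}}$ and $b^2\in\argmini{a^2}\stfs{\cut}{\cxs{1}{t},a^2}$ be the best-response selections used by \ref{eq:sabrd}. By Danskin's theorem the active extremizers can be taken to be exactly $b^1$ and $b^2$, and using $\dot x_s^i\in\ratest s t(b^i-x_s^i)$ together with the $I$-linearity of $f_{s,\cut}$, the best-response contributions to $\dot{\overline g}_s$ and $\dot{\underline g}_s$ are $\ratest s t(\stfs{\cut}{b^1,b^2}-\overline g_s)$ and $\ratest s t(\stfs{\cut}{b^1,b^2}-\underline g_s)$ respectively, which cancel up to $-\ratest s t\,\duality$, exactly as in static zero-sum best-response dynamics. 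The only remaining term comes from $\dot u$:
\begin{equation*}
\tfrac{d}{dt}\duality = \delta\sum_{s'}\bigl[\transSS(b^1,\cxs{2}{t})-\transSS(\cxs{1}{t},b^2)\bigr]\dot u_{s'}(t)-\ratest s t\,\duality .
\end{equation*}
Since $\dot u_{s'}(t)=\crateu t\bigl(\payt{s'}{t}-u_{s'}(t)\bigr)$ with $|\payt{s'}{t}|,|u_{s'}(t)|\le M$ by Lemma~\ref{lem:bounded}, and the bracketed difference of two probability vectors has $\ell^1$-norm at most $2$, the drift is bounded by $4\delta M\crateu t$. With $\duality\ge 0$ and $\ratest s t\ge\ratelb$ this yields
\begin{equation*}
\tfrac{d}{dt}\duality \le -\ratelb\,\duality + 4\delta M\,\crateu t .
\end{equation*}

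To conclude, note that since $\alpha$ is non-increasing and $\alpha^\star>\lim_{t\to\infty}\crateu t$, there is $T_0$ with $\crateu t\le\alpha^\star$ for $t\ge T_0$. Then whenever $\duality$ exceeds $4\delta M\alpha^\star/\ratelb$ its derivative is strictly negative, so by comparison with the linear ODE $\dot W=-\ratelb W+4\delta M\alpha^\star$ the gap $\duality$ is eventually trapped below that threshold; thus $\lim\sup\duality\le 4\delta M\alpha^\star/\ratelb$. In the synchronous regime ($\ratelb=1$) this is already $\le 4M\alpha^\star$ because $\delta<1$, and in general it is of the announced form $CM\alpha^\star$ with $C$ depending only on $\delta$ and $\ratelb$. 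Combining with the first step gives $|\payt s t-\vu|\le 4M\alpha^\star$ for all $t$ past some $T$.

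The main obstacle is the non-smoothness in the differential step: $\overline g_s$ and $\underline g_s$ are maxima/minima over finitely many smooth functions, so they are only piecewise smooth, and the selections $b^1,b^2$ from the differential inclusion need not be the unique extremizers when ties occur. I would handle this through Danskin's theorem and one-sided (Dini) derivatives, exploiting that every active action returns the same value $\overline g_s$ (resp.\ $\underline g_s$), so the best-response selection is an admissible active extremizer, and that $\duality$ is Lipschitz so the integrated comparison used in the conclusion remains valid despite non-differentiability on a null set. The only other delicate point is the bookkeeping of the constant, i.e.\ tracking the dependence on $\delta$ and $\ratelb$ so as to land on the stated $4M\alpha^\star$.
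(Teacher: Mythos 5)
Your proposal is correct and follows essentially the same route as the paper: both treat the duality gap $w_s(t)$ as a Lyapunov-type quantity, establish the differential inequality $\frac{d}{dt}w_s(t) \le -\beta_- w_s(t) + 4\delta M \alpha(t)$, and conclude by comparison once $\alpha(t)$ drops below $\alpha^\star$, using $|f_{s,u(t)}(x_s(t)) - v_{s,u(t)}| \le w_s(t)$. The only differences are that you re-derive from scratch (via Danskin's theorem and bilinearity) the two ingredients the paper imports by citation --- Lemma~A.2 of \cite{leslieBestresponseDynamicsZerosum2020} for the $\dot u$ drift term and the zero-sum best-response contraction of \cite{hofbauerBestResponseDynamics2006} for the $-\beta_- w_s$ term --- and that you track the $\delta/\beta_-$ factor in the final constant more carefully than the paper's own conclusion, which silently absorbs it into the stated bound.
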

\begin{proof}
We define $D_u w_s$ and $D_{x_s} w_s$ as the partial derivatives of $\duality[s][]$ when $\ut[][]$ and $\x[s][][]$ are considered as parameter of $\duality[s][]$. With these notations, $\diff {\duality[s][]} = \dot {\ut[][]} \cdot D_u w_s + \dot {\x[s][][]} \cdot D_{x_s} w_s$.

On the one hand, using Lemma~A.2 of \cite{leslieBestresponseDynamicsZerosum2020}, $\dot {\ut[][]} \cdot D_u w_s \leq 2 \delta \max_{s'\in \states} \dot {\ut[s'][]} \leq 4\delta M \alpha(t)$. On the other hand, using~\cite{hofbauerBestResponseDynamics2006}, $\dot {\x[s][][]} \cdot D_{x_s} w_s \leq - \beta_{-} \duality$.

Therefore, $\diff{\duality[s][]} \leq -\beta_{-}\duality + 4\delta M \alpha(t)$. Since $\alpha$ is decreasing, it is arbitrarily close to $\alpha^\star$ when $t$ goes to $\infty$, so $\duality\leq 4 M \alpha^\star$ when $t$ is big enough. Note that knowing $\alpha$, a $t$ satisfying this property can be computed. This will be used in the discrete time algorithm.

Since $|\f - \vu| \leq \duality$, this gives the desired result.

\end{proof}

Define $\epsilon$ such as $\frac{(1-\delta)\epsilon}{16} = 4M\alpha^\star$ and  and $t_1(\epsilon)$ as defined in Lemma~\ref{lem:energy_zerosum}.

We define two distinguished states:
\begin{itemize}
    \item $s_f(t) \in \argmax{s\in \states}|\f - \ut[s]|$
    \item $s_v(t) \in \argmax{s\in \states}|\vu - \ut[s]|$
\end{itemize}

\newcommand{\fsf}{\f[\x[s_f(t)][]][\ut][s_f]}
\newcommand{\usf}{\ut[s_f(t)]}
\newcommand{\vsf}{\vu[s_f(t)]}

\begin{lemma}\label{lem:usf}
If $t \geq t_1(\epsilon)$, $|\usf - \fsf| \geq \epsilon$ and for an $s \in \states$, $$\left||\usf - \vsf| - |\ut[s] - \vu|\right| \leq \frac{(1-\delta)\epsilon}{8}$$
then:
$$\diff{|\ut[s] - \vu|}\leq - \frac{3(1-\delta)\crateu t \epsilon}{4}$$

\end{lemma}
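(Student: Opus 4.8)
The plan is to treat $\max_{s}|u_s - v_s|$ as a Lyapunov quantity and show that at a state $s$ whose $u$–$v$ gap is nearly maximal, this gap strictly decreases. Throughout I abbreviate $u_s := \ut[s]$, $v_s := \vu$, $f_s := \f$ and set $g_s := u_s - v_s$. Let $M_f := \max_{s'}|u_{s'} - f_{s'}| = |\usf - \fsf|$, so the hypothesis $|\usf - \fsf|\geq\epsilon$ reads $M_f \geq \epsilon$. Since $t \geq t_1(\epsilon)$, Lemma~\ref{lem:energy_zerosum} gives $|f_{s'} - v_{s'}| \leq 4M\alpha^\star = \frac{(1-\delta)\epsilon}{16}$ for every $s'$; I will use this both to compare $f$ with $v$ and to convert $M_f \geq \epsilon$ into a lower bound on $|g_s|$.

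First I would differentiate. From \ref{eq:sabrd}, $\dot u_s = \crateu t(f_s - u_s)$, so where $g_s \neq 0$ we have $\frac{d}{dt}|u_s - v_s| = \operatorname{sign}(g_s)(\dot u_s - \dot v_s)$. For the $\dot u_s$ term I rewrite $f_s - u_s = (f_s - v_s) - g_s$, giving $\operatorname{sign}(g_s)\dot u_s = \crateu t\left[\operatorname{sign}(g_s)(f_s - v_s) - |g_s|\right] \leq \crateu t\left[\frac{(1-\delta)\epsilon}{16} - |g_s|\right]$. For the value term I do not need the exact envelope formula, only the bound $|\dot v_s| \leq \delta\max_{s'}|\dot u_{s'}|$, which is Lemma~A.2 of \citet{leslieBestresponseDynamicsZerosum2020} applied along the trajectory (here $v_{s,u(t)}$ depends on $t$ only through $u(t)$ and is Lipschitz in $u$, hence differentiable a.e.\ along $u(\cdot)$). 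Since $\dot u_{s'} = \crateu t(f_{s'} - u_{s'})$ and $|f_{s'} - u_{s'}| \leq M_f$, this yields $-\operatorname{sign}(g_s)\dot v_s \leq |\dot v_s| \leq \delta\crateu t\, M_f$.

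Combining the two bounds gives $\frac{d}{dt}|u_s - v_s| \leq \crateu t\left[\frac{(1-\delta)\epsilon}{16} - |g_s| + \delta M_f\right]$. The remaining step is arithmetic: from $|\usf - \vsf| \geq M_f - |\fsf - \vsf| \geq M_f - \frac{(1-\delta)\epsilon}{16}$ together with the closeness hypothesis $\left||\usf - \vsf| - |u_s - v_s|\right| \leq \frac{(1-\delta)\epsilon}{8}$, I obtain $|g_s| \geq M_f - \frac{3(1-\delta)\epsilon}{16}$. Substituting this and using $M_f \geq \epsilon$,
\[\frac{d}{dt}|u_s - v_s| \leq \crateu t(1-\delta)\left(\frac{\epsilon}{4} - M_f\right) \leq -\frac{3(1-\delta)\crateu t\,\epsilon}{4},\]
which is exactly the claim; note that it closes precisely because $\epsilon$ was calibrated by $4M\alpha^\star = \frac{(1-\delta)\epsilon}{16}$.

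The main obstacle is technical rather than conceptual: justifying that $t \mapsto v_{s,u(t)}$ is differentiable almost everywhere with $|\dot v_s| \leq \delta\max_{s'}|\dot u_{s'}|$, i.e.\ importing the Danskin/envelope estimate of \citet{leslieBestresponseDynamicsZerosum2020}'s Lemma~A.2 for the value of the zero-sum auxiliary game, and handling the non-smoothness of $t \mapsto |u_s - v_s|$ at instants where $u_s = v_s$ (where one replaces the ordinary derivative by the upper Dini derivative, for which the same one-sided inequality holds). Everything else is bookkeeping of the three $\epsilon$-scale constants.
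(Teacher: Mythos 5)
Your proof is correct and follows essentially the same route as the paper's: both rely on Lemma~\ref{lem:energy_zerosum} (via the calibration $4M\alpha^\star = \frac{(1-\delta)\epsilon}{16}$) to control $|f_{s'} - v_{s'}|$ at the states $s$ and $s_f(t)$, on Lemma~A.2 of \citet{leslieBestresponseDynamicsZerosum2020} to bound $|\dot v_s| \leq \delta \crateu t |\usf - \fsf|$, and on the same arithmetic chain ending with $|\usf - \fsf| \geq \epsilon$. The only differences are cosmetic: you unify the two sign cases via $\operatorname{sign}(g_s)$ (and note the Dini-derivative issue at $u_s = v_s$, which the paper glosses over), whereas the paper treats $u_s \geq \vu$ and $u_s \leq \vu$ separately.
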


\begin{proof}

First, using Lemma~A.2 of \cite{leslieBestresponseDynamicsZerosum2020}:
$$\diff {|\vu|} \leq \delta \max_{s\in \states} | \dot{\ut[][]}| = \delta \crateu t |\fsf - \usf|$$

\paragraph{If $u_s(t) \geq \vu$:}

Then $|\usf-\vsf|-\ut[s]+\vu \leq \frac{(1-\delta)\epsilon}{8}$.

$$\begin{aligned}
\diff{\ut[s]} & = \crateu t \left(\f - \ut[s]\right) \\
& \leq \crateu t\left(\f + \frac{(1-\delta)\epsilon}{8}-\vu-|\usf-\vsf|\right) \\
& \leq \crateu t \left(\frac{3(1-\delta)\epsilon}{16}-|\usf-\vsf|\right) \\
& \leq \crateu t \left(\frac{(1-\delta)\epsilon}{4}-|\usf-\fsf|\right)
\end{aligned}$$

Summing with $\vu$:
$$
\begin{aligned}
\diff{\ut[s]-\vu} & \leq\crateu t \left(\frac{(1-\delta)\epsilon}{4} + (\delta-1)|\usf-\fsf|\right) \\
& \leq \crateu t \left(\frac{(1-\delta)\epsilon}{4} - (1-\delta)\epsilon\right) \\
& \leq - \crateu t \left(\frac{3(1-\delta)\epsilon}{4}\right) \\
\end{aligned}
$$

\paragraph{If $u_s(t) \leq \vu$:} similar calculations yield the same result.

\end{proof}

We can now prove the two final lemma of this section.

\begin{lemma}
For all $s\in \states$, $\lim\sup |\ut[s]-\f| \leq 2\epsilon$.
\label{lem:zs_final}
\end{lemma}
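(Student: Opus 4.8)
The plan is to prove the slightly stronger statement that $\limsup_t \max_{s\in\states}|u_s(t)-\Gamma_s(t)|\le 2\epsilon$, where $\Gamma_s(t)=f_{s,u(t)}(x_s(t))$ is the auxiliary-game payoff; this gives the claim for each individual $s$. The first move is a reduction: by Lemma~\ref{lem:energy_zerosum} and the choice $\frac{(1-\delta)\epsilon}{16}=4M\alpha^\star$, we have $|\Gamma_s(t)-v_{s,u(t)}|\le w_s(t)\le\frac{(1-\delta)\epsilon}{16}$ for every $s$ once $t\ge t_1(\epsilon)$. Hence it suffices to control $\phi(t):=\max_{s\in\states}|u_s(t)-v_{s,u(t)}|$, the maximal gap between the continuation estimates and the auxiliary-game \emph{values}. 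I would work with $\phi$ rather than directly with the payoffs, because $v_{s,u(t)}$ is precisely the quantity that the decrease Lemma~\ref{lem:usf} controls.

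The heart of the argument is to show that $\phi$ is strictly decreasing whenever it exceeds a threshold just above $\epsilon$. Fix $t\ge t_1(\epsilon)$ and write $\mathcal M(t):=\max_s|u_s(t)-\Gamma_s(t)|=|u_{s_f(t)}(t)-\Gamma_{s_f(t)}(t)|$. Suppose $\mathcal M(t)\ge\epsilon$, so that hypothesis (a) of Lemma~\ref{lem:usf} holds. Let $s^\star=s_v(t)$ be a state achieving $\phi(t)$. Using $|\Gamma_s-v_{s,u(t)}|\le\frac{(1-\delta)\epsilon}{16}$ and the triangle inequality, both $|u_{s_f(t)}(t)-v_{s_f(t),u(t)}|$ and $\phi(t)=|u_{s^\star}(t)-v_{s^\star,u(t)}|$ lie in the interval $[\mathcal M(t)-\frac{(1-\delta)\epsilon}{16},\,\mathcal M(t)+\frac{(1-\delta)\epsilon}{16}]$, so they differ by at most $\frac{(1-\delta)\epsilon}{8}$; this is exactly the closeness hypothesis (b) of Lemma~\ref{lem:usf} at the state $s^\star$. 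The lemma then yields $\frac{d}{dt}|u_{s^\star}(t)-v_{s^\star,u(t)}|\le-\frac{3(1-\delta)\crateu t\,\epsilon}{4}$, and since $\phi$ is a maximum of finitely many Lipschitz functions (Lemma~\ref{lem:bounded} bounds the relevant derivatives), its upper Dini derivative is bounded by the derivative along any active state, giving $\frac{d}{dt}\phi(t)\le-\frac{3(1-\delta)\crateu t\,\epsilon}{4}<0$ whenever $\mathcal M(t)\ge\epsilon$.

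To convert this into a bound on $\limsup\phi$, I would pass to the rescaled time $\tau(t)=\int_0^t\crateu y\,dy$, which tends to $+\infty$ by~\ref{hyp:cont}. In this variable $g(\tau):=\phi(t(\tau))$ is Lipschitz and satisfies $\frac{dg}{d\tau}\le-\frac{3(1-\delta)\epsilon}{4}$ whenever $g>c:=\epsilon+\frac{(1-\delta)\epsilon}{16}$, because $\phi\ge c$ forces $\mathcal M\ge\epsilon$ through the same triangle inequality. Since $g\ge0$ it cannot keep decreasing forever, so it must return to the level $c$; and once at $c$ it cannot climb strictly above it, since at any point with $g>c$ it is strictly decreasing. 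A standard continuity/overshoot argument then shows $g(\tau)\le c$ for all large $\tau$, i.e.\ $\limsup_t\phi\le c$. Undoing the reduction, $\mathcal M(t)\le\phi(t)+\frac{(1-\delta)\epsilon}{16}$ gives $\limsup_t\mathcal M\le c+\frac{(1-\delta)\epsilon}{16}=\epsilon+\frac{(1-\delta)\epsilon}{8}\le 2\epsilon$ (as $0<\delta<1$), which is the desired bound for every $s$.

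The main obstacle is the non-smoothness of the maximum $\phi$: one must argue at the active state $s_v(t)$ and verify there the closeness hypothesis (b) of Lemma~\ref{lem:usf}, which is where the smallness $|\Gamma-v|\le\frac{(1-\delta)\epsilon}{16}$ from Lemma~\ref{lem:energy_zerosum} is essential — it is what lets us compare the argmax $s_v(t)$ of the value-gap with the argmax $s_f(t)$ of the payoff-gap appearing in Lemma~\ref{lem:usf}. The second delicate point is the final $\limsup$ extraction, where the time change $\tau$ and the strict negativity of $\frac{dg}{d\tau}$ above $c$ must be combined carefully so that the non-differentiable function $g$ cannot overshoot the threshold $c$.
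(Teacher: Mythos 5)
Your proof is correct and follows essentially the same route as the paper: both reduce the payoff gap $|u_s(t)-f_{s,u(t)}(x_s(t))|$ to the value gap via Lemma~\ref{lem:energy_zerosum}, apply Lemma~\ref{lem:usf} to show that a value-gap energy decreases at rate $\tfrac{3(1-\delta)\alpha(t)\epsilon}{4}$ whenever it exceeds a threshold, and conclude from hypothesis~\ref{hyp:cont} that it eventually remains below that threshold. The only differences are cosmetic: you track $\max_{s\in S}|u_s(t)-v_{s,u(t)}|$ (checking hypothesis (b) of Lemma~\ref{lem:usf} at the value-gap maximizer by a triangle inequality) and finish with a time change plus an overshoot argument, whereas the paper tracks the value gap at the payoff-gap maximizer $s_f(t)$, clipped at $2\epsilon$.
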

\begin{proof}
We define $g(t) = \max\{|\usf-\vsf|, 2\epsilon\}$.

As a composition of maximum of locally Lipschitz function and as such is locally Lipschitz as well. (See Lemma~B.4 of \cite{leslieBestresponseDynamicsZerosum2020} for detailed arguments.)

Now, if $|\usf-\vsf|\leq 2 \epsilon$, then $\diff g = 0$. If $\usf-\vsf| \geq 2 \epsilon$, then, if $t$ is greater than $t^1(\epsilon)$, $|\usf-\fsf| \geq \epsilon$ (Lemma~\ref{lem:energy_zerosum}). For $t\geq t^1(\epsilon)$, on a neighbourhood of $t$, every $s$ that maximizes $|\f - \ut[s]|$ satisfies the condition of Lemma~\ref{lem:usf}, because $\f$ and $\vu$ are close thanks to Lemma~\ref{lem:energy_zerosum}. Therefore, Lemma~\ref{lem:usf} can be used and $\diff g \leq- \frac{3(1-\delta)\alpha(t)\epsilon}{4}$.

Using hypothesis~\ref{hyp:cont}, $g(t) \rightarrow 2\epsilon$, which gives the result.

\end{proof}

\begin{lemma}
For all $s \in \states$, $x_s(t)$ converge to the set of $2\epsilon$-Nash equilibria of the auxiliary game.
\end{lemma}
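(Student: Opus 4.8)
The plan is to read off the two players' best-response gaps directly from the duality-gap bound already established in Lemma~\ref{lem:energy_zerosum}. Fix a state $s$ and view the auxiliary game parameterized by $\cut$ as the two-player zero-sum game in which player $1$ receives $\stfs{\cut}{\cxst}$ and player $2$ receives its opposite. A profile is an $\eta$-Nash equilibrium of this game precisely when neither player can gain more than $\eta$ by deviating, that is, when the two non-negative gaps
\[
g^1_s(t):=\max_{a^1\in A^1}\stfs{\cut}{a^1,\x[s][2]}-\stfs{\cut}{\cxst},\qquad g^2_s(t):=\stfs{\cut}{\cxst}-\min_{a^2\in A^2}\stfs{\cut}{\x[s][1],a^2}
\]
are both at most $\eta$. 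The first observation I would make is that, by the very definition of the duality gap, $g^1_s(t)+g^2_s(t)=\duality$.

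Next I would invoke Lemma~\ref{lem:energy_zerosum}, which provides a time $T$ with $\duality\le 4M\alpha^\star$ for every $t\ge T$. Since $\epsilon$ was chosen so that $4M\alpha^\star=\tfrac{(1-\delta)\epsilon}{16}$ and $0<\delta<1$, we get $4M\alpha^\star\le 2\epsilon$. As $g^1_s(t)$ and $g^2_s(t)$ are non-negative and sum to at most $2\epsilon$, each of them is individually at most $2\epsilon$ for all $t\ge T$. In other words, from time $T$ on, $x_s(t)$ is already a $2\epsilon$-Nash equilibrium of the auxiliary game parameterized by $\cut$.

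It then remains to phrase this as convergence to the set of $2\epsilon$-Nash equilibria, the subtlety being that in the zero-sum case $\cut$ need not converge, so the auxiliary game itself drifts with $t$. To handle this I would pass to accumulation points: $\cut$ and $\cxst$ are bounded by Lemma~\ref{lem:bounded}, so along any sequence $t_k\to\infty$ one can extract a subsequence with $(\cut,\cxst)\to(\tilde u,\tilde x)$. The maps $(u,x_s)\mapsto g^1_s,g^2_s$ are continuous, being finite maxima and minima of the multilinear (hence continuous) payoff $\stfs{\cdot}{\cdot}$; therefore the gaps evaluated at the limit $(\tilde u,\tilde x_s)$ are the limits of $g^1_s(t_k),g^2_s(t_k)$ and so remain bounded by $2\epsilon$. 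Thus $\tilde x_s$ is a $2\epsilon$-Nash equilibrium of the auxiliary game parameterized by $\tilde u$, which is exactly the asserted convergence.

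I expect no serious analytic obstacle here: the hard quantitative work, namely the differential inequality $\diff{\duality}\le-\beta_-\duality+4\delta M\alpha(t)$ and its integration under \ref{hyp:cont}, was already carried out in Lemma~\ref{lem:energy_zerosum}. The only point requiring genuine care is the correct interpretation of ``the auxiliary game'', that is reading the conclusion through accumulation points of $(\cut,\cxst)$ (equivalently, as membership of $\cxst$ in the time-indexed $2\epsilon$-equilibrium set), together with the elementary but essential remark that in a zero-sum game the two players' deviation gains add up exactly to the duality gap.
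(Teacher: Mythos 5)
Your proof is correct, and it relies on the same key ingredient as the paper (the bound from Lemma~\ref{lem:energy_zerosum}), but your logical route is genuinely different from, and more solid than, the paper's own one-line argument. The paper disposes of this lemma by asserting that $f_{s,u(t)}(x_s(t))$ being $2\epsilon$-close to the value of the auxiliary game ``gives the result.'' Taken literally, that inference is invalid: in a zero-sum game, a profile whose payoff is close to the value need not be anywhere near an equilibrium (in matching pennies, if player 1 mixes uniformly and player 2 plays a pure action, the payoff equals the value $0$, yet player 1 gains $1$ by deviating). What actually yields approximate equilibrium is the bound on the duality gap $w_s(t)$, which is established inside the proof of Lemma~\ref{lem:energy_zerosum} (its statement only records the weaker consequence $|f_{s,u(t)}(x_s(t))-v_{s,u(t)}|\leq 4M\alpha^\star$), and this is exactly what you use: $w_s(t)$ is by definition the sum of the two players' non-negative deviation gains, so $w_s(t)\leq 4M\alpha^\star=\tfrac{(1-\delta)\epsilon}{16}\leq 2\epsilon$ for $t\geq T$ makes $x_s(t)$ a $2\epsilon$-equilibrium (indeed a $\tfrac{(1-\delta)\epsilon}{16}$-equilibrium, sharper than required) of the auxiliary game at $u(t)$. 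Your closing accumulation-point argument, resting on Lemma~\ref{lem:bounded} and continuity of the deviation gaps, also addresses a point the paper glosses over entirely: in the zero-sum case $u(t)$ need not converge, so ``the auxiliary game'' must be read either as the time-indexed game at $u(t)$ or through limit points of $(u(t),x_s(t))$. In short, your decomposition of the duality gap into the two best-response gaps supplies precisely the step that the paper's proof leaves implicit and, as written, unjustified.
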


\begin{proof}
The previous proof gives that $\f$ is $2\epsilon$ close to $\vu$, hence the result.
\end{proof}
}

\section{Using Continuous Time Results for Discrete Time Fictitious Play with Stochastic Approximations}
\label{app:stochasticapprox}

\if\isicml0
\hypertarget{app:stochasticapprox}{
\section{Stochastic Approximations}
\label{app:stochasticapprox}}
\fi

In this section, we describe how the stochastic approximation framework with differential inclusion  \cite{benaimStochasticApproximationsDifferential2005} can be extended and used to prove result in discrete time in the autonomous case (i.e., $\alpha$ is constant).

\subsection{Correlated Asynchronous Stochastic Approximation}

An asynchronous system as defined in \citep{perkinsAsynchronousStochasticApproximation2012} is as follows. Assuming \(y_n \in \mathbb{R}^k\), one defines a system where updated
components of the vector at every step \(n\) are
\(S_n \subseteq K := [1\ldots k]\). We define
\(s^{\sharp}_{n}\) as the number of times util $n$ that $s$ occured:
$$s^{\sharp}_{n} = \sharp \{ k\ |\ s \in S_k \land 0 \leq k \leq n\}$$

We now describe now a system where component \(y_{s, n}\) is updated at rate
\(\gamma_{s^{\sharp}_n}\)if and only if \(s \in S_n\), that is:
 \begin{equation}y_{s,n+1} - y_{s, n} - \gamma_{s^{\sharp}_n} (Y_{s, n}+d_{s,n}) \in 1_{s\in S_n} \gamma_{s^{\sharp}_n} F_s(y_n)\label{eq:discreteasyncperkinsscalar}\end{equation}
where variable $Y_{s, n}$ is a random noise with $\mathbb E[Y_{s, n}]=0$ and $d_{s,n}$ goes to $0$ when $n \rightarrow \infty$.

We define:
\[\overline\gamma_n = \max_{s \in S_n}\gamma_{s^{\sharp}_n}\]
\[M_{n+1} = \text{diag}\left\{1_{s\in I_n}\frac{\gamma_{s^{\sharp}_n}}{\overline \gamma_n}\ |\ s \in K\right\}\]
and we can rewrite \eref{eq:discreteasyncperkinsscalar} to:
\begin{equation}
  y_{n+1}-y_n - \overline \gamma_n M_{n+1} (Y_n+d_n) \in \overline \gamma_n M_{n+1}F(y_n)\label{eq:discreteasyncperkins}
\end{equation}

The continuous counterpart is defined as follows. For an
\(\epsilon > 0\), $\Omega^\epsilon_{k} $ is the set of $k\times k$ diagonal matrices with coefficients between $\epsilon$ and 1:

\[\Omega^\epsilon_k := \{\text{diag}(\beta_1, \ldots, \beta_k); \beta_i \in [\epsilon, 1], \forall i = 1, \ldots, k\}\]

And the continuous system is:

\begin{equation}\diff \sax \in \overline F(\sax) := \Omega^\epsilon_{k} \cdot F(\sax)\label{eq:meanf}\end{equation}
where the multiplication is between sets (i.e., the
resulting set is the multiplication of every pair of the initial sets).

Then, the limit set of solutions of~\eref{eq:discreteasyncperkins} is internally chain transitive (see Definition~\ref{def:ict} below) for system~\eref{eq:meanf} \citep{perkinsAsynchronousStochasticApproximation2012} under assumptions stated in Subsection~\ref{sec:formalresults}.

However, we need a modified version where the asynchronoucity can be
correlated, meaning for instance that some components are updated
synchronously or that updating may be done at the same time for a set of
components. This is the case if $x^i_s$ and $u_s$ were updated at the same times for a specific state $s$ (which is an extension of this current work, as explained in the conclusion) or for $u_s$ which is always updated at every step in~\ref{eq:safp}. Therefore, we now suppose that every
\(S_n \in \mathcal S \subseteq K\). For instance, if the \(s\)
component is updated at every step, it can be expressed with
\(\forall S' \in \mathcal S\), \(s \in S'\). Then we define an
alternative set of diagonal matrices for the continuous version:
\(\Omega^\epsilon_{k, \mathcal S} := \text{diag}(\text{conv}(\mathcal S) \bigcap [\epsilon, 1]^K)\)
and the map $\overline F(\sax) := \Omega^\epsilon_{k, \mathcal S} \cdot F(\sax)\label{eq:meanfc}$

Then we can link the internally chain transitive sets of
differential inclusion $\diff \sax \in \overline F(\sax)$ and limit sets of solutions of~(\ref{eq:discreteasyncperkins}). As systems~\ref{eq:sabrd} and~\ref{eq:sbrd} can be written as $\overline F$ with a suitable $\mathcal S$ and $F$, making it possible to prove the rest of Theorem~\ref{thm:discrasync} using the convergence results of the continuous time systems of the previous section, see section~\ref{sec:convfpasync}.

\hypertarget{sec:formalresults}{
\subsection{Formal Results}
\label{sec:formalresults}}

We start with the definition of Marchaud maps. They are used in most stochastic approximation theorems, even if the term is not always employed. In our systems, as the best-response map $\text{br}$ is piecewise constant and the rest of the right hand side is continuous, right hand sides of the differential inclusions are Marchaud maps.

\begin{definition}[Marchaud map]\label{def:marchaud}
\(F: \mathbb R^{K} \rightrightarrows \mathbb R^K\) is a Marchaud map if:
\begin{enumerate}
\def\labelenumi{(\roman{enumi})}
\item
  F is a closed set-valued map, \emph{i.e.}
  \(\{(x, y) \in \mathbb R ^ {K} \times \mathbb R ^ K\ |\ y \in F(x)\}\)
  is closed.
\item
  for all \(\sax \in \mathbb R ^ {K}\), $F(\sax)$ is a non-empty, compact, convex
  subset of \(\mathbb R ^K\)
\item
  there exists \(c>0\) such that
  \(\sup_{\sax\in \mathbb R ^ {K} z\in F(\sax)} ||z|| \leq c(1+||\sax||)\)
\end{enumerate}
\end{definition}

We now need the definition of internally chain transitive sets, as stated in \cite{benaimStochasticApproximationsDifferential2005}. They will later be used to characterize the limit sets of the discrete time systems.

\begin{definition}[Internally chain transitive]
A set $A$ is internally chain transitive for a differential inclusion $\diff \sax \in F(\sax)$ if it is compact and if for all $\sax, \sax' \in A$, $\epsilon > 0$ and $T > 0$ there exists an integer $n\in \mathbb N$, solutions $\sax_1, \ldots \sax_n$ to the differential inclusion and real numbers $t_1, t_2, \ldots, t_n$ greater than $T$ such that:
\begin{itemize}
    \item $\sax_i(s) \in A$ for $0\leq s \leq t_i$
    \item $\|\sax_i(t_i)-\sax_{i+1}(0)\|\leq \epsilon$
    \item $\|\sax_1(0)-\sax\| \leq \epsilon$ and $\|\sax_n(t_n)-\sax'\|\leq \epsilon$
\end{itemize}
\label{def:ict}
\end{definition}

\begin{definition}[Asymptotic pseudo-trajectories]
  A continuous function $z:\mathbb R^+\rightarrow \mathbb R^m$ is an asymptotic pseudo-trajectory of a differential inclusion if\ $\lim_{t\rightarrow +\infty}\mathbf D(\Theta^t(z), S) = 0$ where $\Theta^t(z)(s) = z(t+s)$ (it is the translation operator), $S$ is the set of all solutions of the differential inclusion and $\mathbf D$ is the distance between continuous functions defined as:

  $$\mathbf D(f, g) :=\sum_{k=1}^\infty \frac{1}{2^k} \min(\|f-g\|_{[-k, k]}, 1)$$
  where $\|\cdot\|_{[-k,k]}$ is the supremum norm on the interval $[-k, k]$.
  \label{def:apt}
\end{definition}

This two last definitions will be useful with Theorem~4.3 of \cite{benaimStochasticApproximationsDifferential2005} that establishes that the limit set of asymptotic pseudo-trajectories is internally chain transitive. What is left to prove is that an affine interpolation of the discrete time system is an asymptotic pseudo-trajectories. Below is the proof for the synchronous system \ref{eq:sfp} and the next section deals with semi-asynchronous and fully-asynchro\-nous systems.

\begin{lemma}\label{lem:sfpict} The limit set of \ref{eq:sfp} is internally chain transitive with respect to \ref{eq:sbrd} for $\crateu t = 1$ for all $t$.
\end{lemma}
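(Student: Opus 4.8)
The plan is to cast \ref{eq:sfp} into the single-timescale stochastic approximation form $y_{n+1} - y_n \in \gamma_{n+1} F(y_n)$ and then invoke the asymptotic pseudo-trajectory machinery recalled in Definitions~\ref{def:marchaud}--\ref{def:apt}. The key observation is that the hypothesis $\crateu t = 1$ corresponds, on the discrete side, to $\durate = 1$, which forces $\sdurate = n$ and hence $\frac{\durate}{\sdurate} = \frac{1}{n}$. This is exactly the rate at which $\dxi n s$ is updated, so with $y_n := \{\dui n s, \dxi n s\}_{s \in \states, i \in I}$ and $\gamma_n := \frac{1}{n}$ both blocks of \ref{eq:sfp} move on the common timescale $\gamma_n$, and the two-timescale difficulty that would otherwise require the asynchronous extension of Section~\ref{sec:formalresults} simply disappears.

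Next I would define the set-valued map $F$ coordinatewise: on each $u_s$-coordinate, $F_{u_s}(y) = \stfs u {x_s} - u_s$, and on each $x^i_s$-coordinate, $F_{x^i_s}(y) = \bestr[i][s][u][x^{-i}_s] - x^i_s$, where $\bestr[i][s][u][x^{-i}_s]$ is understood as the convex hull of pure best responses in the auxiliary Shapley game. By construction this $F$ is precisely the right-hand side of \ref{eq:sbrd} when $\crateu t = 1$. Since \ref{eq:sfp} is a deterministic recursion that selects a pure action $\dai n \in \bestr[i][s][u_n][x^{-i}_{n,s}]$, which is a vertex of that convex set, each increment $\dai n - \dxi n s$ lies in $F_{x^i_s}(y_n)$; thus $y_{n+1} - y_n \in \gamma_{n+1} F(y_n)$ with the noise $Y_{n+1}$ and the asymptotic error $d_n$ of \eref{eq:discreteasyncperkinsscalar} vanishing identically.

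It then remains to verify the hypotheses of the stochastic approximation theorem. The map $F$ is Marchaud (Definition~\ref{def:marchaud}): its $u$- and $x$-blocks are affine, respectively multilinear, hence continuous, while the only set-valued ingredient, the best-response correspondence, is upper semicontinuous with nonempty compact convex values; the linear-growth bound follows from the explicit form of $\stfs u {x_s}$ together with the fact that the $x$-coordinates live in the compact product of simplices. The iterates themselves stay bounded: the $x$-block remains in that product because each update is a convex combination, and the $u$-block is bounded by the discrete counterpart of Lemma~\ref{lem:bounded} (cf. Lemma~\ref{lem:bounddiscralpha}). Finally $\gamma_n = \frac{1}{n}$ satisfies $\sum_n \gamma_n = +\infty$ and $\gamma_n \to 0$, and the increments are bounded. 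These are exactly the conditions ensuring that the affine interpolation of $(y_n)$ is an asymptotic pseudo-trajectory of $\diff y \in F(y)$ (Definition~\ref{def:apt}).

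The conclusion then follows from Theorem~4.3 of \cite{benaimStochasticApproximationsDifferential2005}: the limit set of an asymptotic pseudo-trajectory is internally chain transitive (Definition~\ref{def:ict}) for the differential inclusion, which here is exactly \ref{eq:sbrd} with $\crateu t = 1$. The main obstacle is not any single computation but assembling the boundedness and linear-growth estimates so that $F$ genuinely meets the Marchaud conditions and the interpolation qualifies as an asymptotic pseudo-trajectory; once the single-timescale collapse $\frac{\durate}{\sdurate} = \frac{1}{n}$ is observed and the boundedness lemma is invoked, the remaining verifications are routine.
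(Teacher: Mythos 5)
Your proposal is correct and follows essentially the same route as the paper: the paper's proof likewise casts \ref{eq:sfp} (with $\durate=1$, hence common step size $1/n$ and no noise) as a discrete approximation of \ref{eq:sbrd} with $\crateu t = 1$, invokes Proposition~1.3 and Theorem~4.2 of \cite{benaimStochasticApproximationsDifferential2005} to show the affine interpolation is a perturbed solution and hence an asymptotic pseudo-trajectory, and concludes by Theorem~4.3 of the same reference. Your write-up merely makes explicit the verifications (form of the Marchaud map $F$, boundedness of iterates, step-size conditions) that the paper leaves implicit in its citations.
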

\begin{proof}
Proposition 1.3 and Theorem 4.2 of \cite{benaimStochasticApproximationsDifferential2005} establish that the affine interpolation of sequences $x_{n, s}, u_n$ is a perturbed solution and then an asymptotic pseudo trajectory. Theorem 4.3 from the same article proves that the limit set is internally chain transitive.
\end{proof}

\subsection{Correlated Asynchronous Stochastic Approximations}

We now extend a theorem originally proven by \citeauthor{perkinsAsynchronousStochasticApproximation2012}:

\begin{theorem}[Analog of Theorem 3.1 of~\cite{perkinsAsynchronousStochasticApproximation2012}]\label{thm:asyncapprox}
Suppose that:
\begin{enumerate}[label=(\roman*)]
    \item $\sax_n \in C$ for all $n$ where $C$ is compact
    \item The set valued application $F:C \rightrightarrows C$ is Marchaud
    \item Sequence $\gamma_n$ is such that \begin{enumerate}
        \item $\sum_n \gamma_n = \infty$ and $\gamma_n \xrightarrow[n\rightarrow \infty]{} 0$
        \item for $x \in(0, 1), \sup_n \gamma_{\left[xn\right]}/\gamma_n < A_x < \infty$ where $\left[\cdot \right]$ is the floor function.
        \item for all $n$, $\gamma_n \geq \gamma_{n+1}$
    \end{enumerate}
    \item \begin{enumerate}
        \item For all $\sax\in C$, $\mathcal S_n, \mathcal S_{n+1} \in \mathcal S$,
        \optmult{\mathbb P(S_{n+1} = \mathcal S_{n+1}|\mathcal F_n)= \\ \mathbb P(S_{n+1}=\mathcal S_{n+1}|S_n = \mathcal S_n, \sax_n=\sax)}
        \item The probability transition between $\mathcal S_n$ and $\mathcal S_{n+1}$ is Lipsichitz continuous in $x_n$ and the Markov chain that $S_n$ form is aperiodic, irreducible and for every $s \in \mathcal S$, there exists $S \in \mathcal S$ such that $s \in S$.
    \end{enumerate}
    \item For all $n$, $Y_{n+1}$ and $S_{n+1}$ are uncorrelated given $\mathcal F_n$
    \item For some $q \geq 2$, $\left\{\begin{aligned}&\sum_n \gamma_n^{1+q/2} < \infty \\ & \sup_n \mathbb E(\|Y_n\|^q)<\infty\end{aligned}\right.$
    \item $d_n\rightarrow 0$ when $n\rightarrow \infty$
    
\end{enumerate}
Then with probability 1, affine interpolation $\overline \sax$ is an asymptotic pseudo-trajectory to the differential inclusion,
$$\diff \sax \in \overline F(\sax)$$
where $\left\{\begin{aligned}&\overline F(\sax) := \Omega^\epsilon_{k, \sigma} \cdot F(\sax)\\& \Omega^\epsilon_k :=  \left\{\text{diag}(\beta_1, \ldots, \beta_k) | \forall i \in \left\{1, \ldots, k\right\} \land \beta_i \in [\epsilon, 1]  \right\} \\& \epsilon > 0\end{aligned}\right.$
\end{theorem}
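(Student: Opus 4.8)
The plan is to place the recursion~\eref{eq:discreteasyncperkins} inside the differential-inclusion stochastic-approximation framework of~\cite{benaimStochasticApproximationsDifferential2005} and to show that the affine interpolation $\overline{\sax}$, taken along the interpolated timescale $\tau_n := \sum_{k<n}\overline{\gamma}_k$, is a \emph{perturbed solution} of $\diff \sax \in \overline F(\sax)$; by Theorem~4.2 of~\cite{benaimStochasticApproximationsDifferential2005} this is exactly what makes $\overline{\sax}$ an asymptotic pseudo-trajectory (Definition~\ref{def:apt}). The adaptation of~\cite{perkinsAsynchronousStochasticApproximation2012} that is required concerns only the correlated update family $\mathcal S$: the per-step diagonal matrices $M_{n+1}$ are no longer products of independent coordinatewise clocks but are drawn from the $\sax$-dependent Markov chain $(S_n)$, so the averaging must be carried out jointly across coordinates.

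First I would control the update counts and the step ratios. Using that $(S_n)$ is aperiodic, irreducible, and covers every coordinate (assumption (iv)(b)), together with the Lipschitz dependence of its kernel on $\sax_n$, an ergodic theorem for Markov chains gives $\liminf_n \countsn/n > 0$ for every $s$; combined with the ratio bound (iii)(b) on $\gamma$ this yields a uniform $\epsilon>0$ with $\gamma_{\countsn}/\overline{\gamma}_n \ge \epsilon$ for all large $n$, so that $M_{n+1}\in[\epsilon,1]^K$ eventually. This fixes the $\epsilon$ appearing in $\Omega^\epsilon_{k,\mathcal S}$. The boundedness needed throughout is supplied by the compactness of $C$ (assumption (i)) together with Lemma~\ref{lem:bounded}. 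Next I would dispose of the stochastic perturbations: the increment splits as $\overline{\gamma}_n M_{n+1}F(\sax_n) + \overline{\gamma}_n M_{n+1}(Y_n+d_n)$; the bias is harmless since $d_n\to0$ (assumption (vii)), and the noise term is a martingale because assumption (v) gives $\mathbb E[M_{n+1}Y_n\mid\mathcal F_n]=0$. Its tail increments vanish uniformly on bounded time windows by the moment bound and summability in assumption (vi), exactly as in Proposition~1.3 of~\cite{benaimStochasticApproximationsDifferential2005}.

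The crux is the averaging of the drift $\overline{\gamma}_n M_{n+1}F(\sax_n)$ into a selection of $\overline F(\sax)=\Omega^\epsilon_{k,\mathcal S}F(\sax)$. Because $F$ is Marchaud, hence bounded on $C$, and $\sax$ moves by $O(\overline{\gamma}_n)$ per step, $\sax$ is essentially frozen over any window of bounded interpolated length; freezing $\sax$ also freezes the (Lipschitz) transition kernel, so the ergodic theorem forces the empirical law of the update sets $S_n$ over the window to converge to a stationary distribution supported on $\mathcal S$. The time-averaged diagonal matrix $\frac1N\sum M_{n+1}$ therefore converges to an element of $\mathrm{diag}(\mathrm{conv}(\mathcal S)\cap[\epsilon,1]^K)=\Omega^\epsilon_{k,\mathcal S}$, whence the window-averaged drift lies within an arbitrarily small distance of $\overline F(\sax)$. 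This joint freezing-and-ergodic-averaging estimate is the main obstacle: unlike the independent-coordinate setting of~\cite{perkinsAsynchronousStochasticApproximation2012}, the coordinates of $M_{n+1}$ cannot be averaged separately, and one must simultaneously control the Markov correlation among coordinates and the slow drift of $\sax$ through the Lipschitz kernel.

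Assembling the three estimates shows that $\overline{\sax}$ is a perturbed solution, so by Theorem~4.2 of~\cite{benaimStochasticApproximationsDifferential2005} it is an asymptotic pseudo-trajectory of $\diff\sax\in\overline F(\sax)$, which is the claimed statement; Theorem~4.3 there then yields the internal chain transitivity of its limit set (Definition~\ref{def:ict}) that is invoked in the sequel to prove Theorem~\ref{thm:discrasync}.
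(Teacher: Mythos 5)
Your skeleton matches the paper's proof: both routes rewrite the asynchronous recursion \eref{eq:discreteasyncperkins} so that the drift selects from $\Omega^\epsilon_{k,\mathcal S}\cdot F(\sax_n)$, verify a Kushner--Clark-type noise condition, and conclude that the affine interpolation is an asymptotic pseudo-trajectory via the machinery of \cite{benaimStochasticApproximationsDifferential2005} (the paper goes through its Theorem~\ref{thm:linapt}; you go through the equivalent perturbed-solution/Theorem~4.2 route). The substantive difference is how the crux you correctly identify -- averaging the Markov-correlated update matrices against the slowly moving $\sax_n$ -- gets discharged. The paper does \emph{not} re-derive it: it introduces the per-step surrogate $\tilde M_{n+1} := \operatorname{diag}(\max\{1_{s\in S_n}\gamma_{s^{\sharp}_n}/\overline\gamma_n,\epsilon\})$, which lies in $\Omega^\epsilon_{k,\mathcal S}$ by construction, keeps $\tilde M_{n+1}f_n$ as the drift, and pushes $f_n(M_{n+1}-\tilde M_{n+1})+M_{n+1}Y_{n+1}$ into the noise term, whose window-wise partial sums vanish by direct citation of Lemmas~3.3 and~3.6 of \cite{perkinsAsynchronousStochasticApproximation2012}; the only new observation is that every step of those lemmas' proofs respects the restriction from $\operatorname{diag}([\epsilon,1]^K)$ to $\operatorname{diag}(\operatorname{conv}(\mathcal S)\cap[\epsilon,1]^K)$. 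Your freeze-and-ergodic-average argument is a plan to reprove that cited lemma from scratch; it is plausible, but it is precisely the hard uniformity content (ergodic averaging at a rate uniform over the frozen $\sax\in C$ and over windows whose step count grows as $\gamma_n\to 0$) that you leave as a sketch, so as written your proposal still rests on an unproved estimate that the paper obtains off the shelf.

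One concrete error to fix: your claim that ergodicity plus assumption (iii)(b) forces $M_{n+1}\in[\epsilon,1]^K$ eventually is false. The matrix $M_{n+1}$ has diagonal entry $1_{s\in S_n}\gamma_{s^{\sharp}_n}/\overline\gamma_n$, which is \emph{zero} for every coordinate not updated at step $n$, so the per-step matrices never lie in $[\epsilon,1]^K$ unless all coordinates are updated at every step. What ergodicity and the ratio bound actually give is a lower bound $\epsilon$ on the \emph{nonzero} entries (equivalently, on the entries of the window-averaged matrices, or of the paper's floored surrogate $\tilde M_{n+1}$); the $\epsilon$ in $\Omega^\epsilon_{k,\mathcal S}$ must be introduced at that level, exactly as your later averaging paragraph -- and the paper's $\tilde M$ construction -- does. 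The slip is recoverable, since nothing downstream in your argument uses the false per-step statement, but the sentence fixing $\epsilon$ should be rewritten in terms of averaged or floored matrices rather than $M_{n+1}$ itself.
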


However, at every step of the proof of \citeauthor*{perkinsAsynchronousStochasticApproximation2012}, we can take into account that $S_n \in \mathcal S$, therefore we do not need every matrix $\text{diag}([\epsilon, 1]^K)$ in $\Omega^\epsilon_k$ but only those that are also in $\text{diag}(\text{conv}_\epsilon(\mathcal S))$ where $\text{conv}(\mathcal S)$ is the convex hull of $\mathcal S$ composed with $\max(\epsilon, \cdot)$ for every coordinate. Indeed, when update rates are manipulated, they are summed via integrals or floored by an $\epsilon > 0$. The resulting vectors belongs to $\text{conv}_\epsilon(\mathcal S)$ at every step of the proof. Therefore, the conclusion of the theorem can be changed with \(\Omega^\epsilon_{k, \mathcal S} := \text{diag}(\text{conv}(\mathcal S) \bigcap [\epsilon, 1]^K)\). This makes it possible to use the Theorem in our asynchronous and semi-asynchronous cases.

The full proof is included in Section \ref{sec:fullproof}.

\hypertarget{sec:convfpasync}{
\subsection{Convergence of a Fictitious Play procedure in identical interest stochastic games}
\label{sec:convfpasync}}

In this subsection, we first characterize the internally chain transitive sets of $\ref{eq:sbrd}$ and $\ref{eq:sabrd}$ before using this characterization to prove a second time the convergence of FP in identical interest stochastic games.

\begin{lemma}[Internally Chain Transitive Sets]\label{lem:internallychaintransitive}
If for all $t$, $\crateu t = 1$ and if $L$ is internally chain transitive either for~\ref{eq:sabrd} then $$L \subseteq \left\{(x, u) | \begin{aligned} \forall s \in \states\ \forall i \in I,\ \stfs u {x_s} = u_s \\ \land\ x^i_s \in \argmax{y^i \in A^i} \stfs u {y^i, x^{-i}_s}\end{aligned} \right\}$$
\end{lemma}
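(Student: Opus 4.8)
The plan is to show that any internally chain transitive (ICT) set $L$ of the autonomous differential inclusion obtained from \ref{eq:sbrd} (equivalently \ref{eq:sabrd}) with $\crateu t \equiv 1$ is contained in the equilibrium set
\[A := \left\{(x,u)\ \middle|\ \forall s\in\states,\ \forall i\in I,\ \stfs u {x_s} = u_s \ \land\ x^i_s \in \argmax{y^i\in A^i}\stfs u {y^i, x^{-i}_s}\right\}.\]
I would proceed in three stages: first confine $L$ to the region $B := \{(x,u)\mid \forall s,\ \stfs u {x_s}\ge u_s\}$; then use a monotone (Lyapunov-type) function to upgrade $B$ to the equalities $\stfs u {x_s}=u_s$; and finally read off the best-response condition. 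Throughout I use that $L$ is compact and invariant, and that the continuous-time bounds of Appendix~\ref{app:continuous_proof} hold since $\crateu t\equiv1$ satisfies \ref{hyp:cont}.

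Stage 1 (confining $L$ to $B$). Because $\delta<1$ and $\crateu t\equiv1$, Lemma~\ref{ineqgu} gives, for every solution and every $s$, $\payt s t-u_s(t)\ge -A_0\exp((\delta-1)(t-1))$, whose right-hand side tends to $0$; the constant $A_0$ depends only on the bounds of Lemma~\ref{lem:bounded}, hence is uniform over the compact set $L$. Thus for every $\eta>0$ there is a time $T_\eta$, independent of the starting point in $L$, after which $\min_{s\in\states}(\payt s t-u_s(t))\ge -\eta$. Now fix $(x,u)\in L$ and suppose $c:=-\min_{s}(\stfs u {x_s}-u_s)>0$. Applying the ICT property with both endpoints equal to $(x,u)$, with $T=T_{c/2}$ and $\epsilon$ small, the last trajectory piece $\sax_n$ is a solution run for time $t_n>T_{c/2}$ with $\sax_n(0)\in L$ and $\|\sax_n(t_n)-(x,u)\|\le\epsilon$; by the uniform bound its endpoint satisfies $\min_s(\payt s {t_n}-u_s(t_n))\ge -c/2$, so by continuity $\min_s(\stfs u {x_s}-u_s)\ge -c/2$ as $\epsilon\to0$, contradicting $=-c$. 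Hence $L\subseteq B$.

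Stage 2 and 3 (equalities and best response). Consider $W(x,u):=\sum_{s\in\states}u_s$. Along any solution $\dot W=\sum_s\crateu t(\payt s t-u_s)=\sum_s(\payt s t-u_s)$, which is $\ge0$ on $B$ and strictly positive whenever some $\payt s t>u_s$; it vanishes exactly on $\Lambda_0:=\{(x,u)\in B\mid \stfs u {x_s}=u_s\ \forall s\}$ (this is the clean variant of the Lyapunov function $\sum_s \stfs u {x_s}$ sketched in the main text). So $-W$ is strictly decreasing off $\Lambda_0$ on the invariant set $L\subseteq B$, and the ICT–Lyapunov principle yields $L\subseteq\Lambda_0$. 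Given this, take any solution $\phi\subseteq L$: on it $\payt s t=u_s(t)$ for all $s$, hence $\dot u_s=\crateu t(\payt s t-u_s)=0$, $u_s$ is constant, and $\diff{\pay s}=\dot u_s=0$. Combined with Lemma~\ref{lem:diffgamma}, $0=\diff{\pay s}=\delta\sum_{s'}P_{ss'}(x_s)\dot u_{s'}+\ratest s t\sum_i\optgap=\ratest s t\sum_i\optgap$; since the update factor is positive ($\ratest s t\ge\ratelb>0$, and $=1$ for \ref{eq:sbrd}) and each $\optgap\ge0$, this forces $\optgap=0$ for every $i$, i.e. $x^i_s\in\argmax{y^i\in A^i}\stfs u {y^i,x^{-i}_s}$. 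Together with $\stfs u {x_s}=u_s$ this places every point of $\phi$, hence all of $L$, in $A$.

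The main obstacle is the invocation of the ICT–Lyapunov principle in Stage 2. The off-the-shelf statement (Benaïm–Hofbauer–Sorin) concludes $L\subseteq\Lambda_0$ under the hypothesis that $W(\Lambda_0)$ has empty interior; but on $\Lambda_0$ the vector $u$ is the discounted MDP value determined by $x$, so $W(\Lambda_0)=\{\sum_s u_s(x)\}$ may be a nondegenerate interval and the empty-interior hypothesis can fail. What rescues the inclusion is that $\dot W$ is strictly positive \emph{pointwise} off $\Lambda_0$ (not merely along orbits), which precludes the homoclinic-type chain recurrence that the empty-interior condition is meant to exclude. I would therefore either reprove $L\subseteq\Lambda_0$ directly from chain transitivity — arguing that a point with $\dot W>0$ cannot be chain recurrent for a strictly monotone Lyapunov function — or invoke the no-proper-attractor characterization of ICT sets together with the strict monotonicity of $W$. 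Reconciling a possible continuum of Lyapunov values with the strict monotonicity that forbids non-equilibrium recurrence is the delicate point; Stages 1 and 3 are elementary given Lemmas~\ref{ineqgu}, \ref{lem:bounded} and \ref{lem:diffgamma}.
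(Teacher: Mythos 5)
Your Stage~1 is essentially the paper's first step: the paper also rules out $\min_s(\stfs u {x_s}-u_s)<0$ on $L$ by running the ICT chain from $(x,u)$ back to itself and applying Lemma~\ref{ineqgu} (it propagates the bound through every piece of the chain, while you use only the last piece together with a bound uniform over the compact set $L$; both variants rest on Lemmas~\ref{ineqgu} and~\ref{lem:bounded}). Your Stage~3 endgame (on an invariant set where $\payt s t=u_s(t)$, Lemma~\ref{lem:diffgamma} forces $\ratest s t\sum_i\optgap=0$, hence $\optgap=0$ since $\ratest s t\geq \ratelb>0$) also matches the paper's conclusion. The problem is Stage~2.

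There you correctly diagnose that, with target set $\Lambda_0:=\{(x,u)\in B \mid \stfs u {x_s}=u_s\ \forall s\}$, the empty-interior hypothesis of Proposition~3.27 of \cite{benaimStochasticApproximationsDifferential2005} can fail ($W(\Lambda_0)$ is typically a nondegenerate interval), but your proposed rescue does not work: pointwise strictness of $\dot W$ off the target set does \emph{not} prevent an ICT set from leaving it. Counterexample: on a circle, let $\Lambda$ be a closed arc of rest points and let the complementary arc flow from one endpoint of $\Lambda$ to the other; any continuous $V$ strictly decreasing along the flowing arc (and interpolating back along $\Lambda$) is a Lyapunov function for $\Lambda$ with $\dot V<0$ pointwise off $\Lambda$, yet the \emph{whole circle} is internally chain transitive, because chains climb back up in $V$ by $\epsilon$-jumps along the arc of rest points --- exactly the pathology that the empty-interior condition on $V(\Lambda)$ is there to exclude. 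So strict pointwise monotonicity buys nothing by itself; one must control the set of Lyapunov values on the target. The paper does this by taking the target to be the full equilibrium set $A$ (not $\Lambda_0$) and then showing that $V(A)$ has empty interior \emph{via Sard's theorem}, which is what Proposition~3.27 needs. (Your own $W=\sum_s u_s$ would also serve as a strict Lyapunov function relative to $A$: at a point of $\Lambda_0\setminus A$ one has $\dot u_{s'}=0$ for every $s'$ but $\diff{\pay s}=\ratest s t\sum_i\optgap>0$ for some $s$, so $W$ strictly increases immediately afterwards; the issue is never the choice of $W$ but the value set on the target.) This Sard / empty-interior step for the equilibrium-value set is the genuinely missing idea in your proposal; without it, or a substitute such as semi-algebraicity of the set of stationary equilibrium payoffs, neither of the two routes you sketch closes Stage~2.
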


\begin{proof}\ 

We define:
\begin{align*}
  A := & \left\{(x, u) \left|
          \begin{aligned}\forall s \in \states\ \forall i \in I, \ \stfs u {x_s} = u_s  \\ \land\  x^i_s \in \argmax{y^i \in A^i} \stfs u {y^i, x^{-i}_s}\\\end{aligned}\right.
        \right\} \\
  B := & \left\{(x, u) \left|
          \forall s \in \states\ \stfs u {x_s} \geq u_s\right.
        \right\}
\end{align*}

We first show that $L \subseteq B$. In order to do that, we take an element of $L$ and show that any path starting from this element is brought towards $B$, leading to the fact that the element is necessarily already in $B$ (by definition of internal chain transitivity).

\newcommand{\minf}{\zeta}

Let $(x, u) \in L$ and suppose that $(x, u) \not\in B$, that is: \[-\minf := \min_{s\in \states}\stfs u {x_s} - u_s < 0\]

Then for the case of \ref{eq:sbrd}, for any $T > 0$, there exists $n \in \mathbb N$, solutions of \ref{eq:sbrd} $(x_1, u_1), \ldots (x_n, u_n)$ and $t_1, \ldots, t_n$ greater than $T$ as in Definition~\ref{def:ict} for $\epsilon = \minf/2$.

Then $\min_{s\in \states}\stfs {u_1(0)} {x_{1,s}(0)} - u_{1,s}(0) \geq -\minf -\minf/2$.

Now we can use Lemma~\ref{ineqgu} with $\crateu t = 1$, for all $s$:
\optmult{\stfs {u_1(t_1)} {x_{1,s}(t_1)} - u_{1,s}(t_1) \geq \\ (\stfs {u_1(t_1)} {x_{1,s}(t_1)} - u_{1,s}(t_1)) \exp((\delta-1)t_1) \geq \\ (-\frac{3}{2}\minf) \exp((\delta-1)T)}
So for $T$ big enough, then for all $s$: 
\[\stfs {u_1(t_1)} {x_{1,s}(t_1)} - u_{1,s}(t_1) \geq -\minf/4\]
Iteratively, we get:
 \[\stfs {u_n(t_n)} {x_{n, s}(t_n)}-u_{n, s}(t_n) \geq -\minf/4\] which is contradictory to the fact that $\min_{s\in \states}\stfs u {x_s} - u_s = -\minf$.

For \ref{eq:sabrd}, we have the exact same proof.

So $L \subseteq B$.

We can now use a more classic argument to show that $L \subseteq A$ with a Lyapunov function now that the ambient space can be restricted to $B$. Let us define $V(x, u):= \sum\limits_{s\in \states} \stfs u {x_s}$. Then, $V$ is a Lyapunov function for set $A$ with ambient space $B$. Indeed, on $B$, $\diff {u_s} \geq 0$, so $\diff {\stfs u {x_s}} \geq 0$ (with Lemma~\ref{lem:diffgamma}). Therefore $\diff {\stfs u {x_s}} = 0$ for every $s$ if and only if $(x, u) \in A$. Moreover, $V(A)$ has empty interior thanks to Sard's Theorem.

So we can use Proposition 3.27 of \cite{benaimStochasticApproximationsDifferential2005}: it applies in case the Lyapunov function is defined on invariant set. So $L$ is contained in~$A$.

\end{proof}

\paragraph{A second proof of Theorem~\ref{thm:discrasync} using continuous time}

Below, we prove a second time Theorem~\ref{thm:discrasync} in the autonomous case (that is, $\durate = 1$) to show that our extension to the stochastic approximations framework is directly useful.

\begin{proof}[Proof of Theorem~\ref{thm:discrasync}]

For systems~(\ref{eq:safp}) we now need to apply Theorem~\ref{thm:asyncapprox}. Variable $Y_n$ is $0$ in our case because there is no noise. $S_{n+1}$ is the next state variable and it has distribution $P_{S_n}(a_n)$. We check the assumptions:
\begin{itemize}
    \item (i) is guaranteed because every variable of the system is bounded.
    \item (ii) is guaranteed because the best-response map is marchaud and the derivative of $u$ is continuous.
    \item for (iii) and (vi) we use $\gamma(n) = 1/n$, so every assumption is trivial to verify.
    \item (iv) and (v) comes from the definition of a play and the ergodicity hypothesis on the game
\end{itemize}

Therefore the affine interpolation of a sequence of fictitious play for stochastic games under our assumption is an asymptotic pseudo-trajectory, which implies that its limit set is internally chain transitive by Theorem 4.3 of \cite{benaimStochasticApproximationsDifferential2005}.

For system~\ref{eq:sfp} Lemma~\ref{lem:sfpict} states that the limit set is internally chain transitive.

Then Lemma~\ref{lem:internallychaintransitive} concludes the proof: the limit set is internally chain transitive and consequently included in the set of equilibria.

\end{proof}

\subsection{Convergence of FP in zero-sum stochastic games}

We consider \ref{eq:sbrd} and \ref{eq:sabrd} in zero-sum stochastic games in the autonomous case, that is for $\alpha(t) = \alpha^\star$

\begin{proof}[Proof of Theorem~\ref{thm:discrzs}]

The previous proof checks all the hypothesis necessary to apply the stochastic approximation framework we presented in this section. Therefore, a characterization of internally chain transitive sets will be sufficient to conclude.

In the proof of Lemma~\ref{lem:energy_zerosum}, we showed that function $\max{w_s(t)-\frac{1}{\beta_-}4\delta M \alpha^\star}$ is a Lyapunov function. Therefore, the set where the duality gap is lower or equal than $4\delta M\alpha^\star$ is a internally chain transitive set. Furthermore, in the proof of Lemma~\ref{lem:zs_final}, we defined a function $g$ which is a Lyapunov function relative to the previous internally chain transitive set.

\end{proof}

\subsection{Different Priors and Team Games}\label{sec:priors_team_games}

In this subsection, we suppose that every player has its own $u^i$ estimates. We are going to show that in this case, internally chain transitive sets are included into the set where the estimates $u^i$ are equal (up to a constant) for every $i$.

Indeed, suppose that $G$ is now a team game. Then every $r^i_s$ can be written $r^i_s = r_s + M_i$ with the convention that $M_1 = 0$ and $r_s = r^1_s$. Then let us show that any internally chain transitive set $L$ is included in $\{(x, u)\ |\ u^i_s = u^1_s + M_i \ \forall i, s \}$.

Define $V^i(x, u) = \arg\max_s |u^i_s - u^1_s - M_i|$

Let $s$ that maximizes $|u^i_s - u^1_s - M_i|$ so that $V^i(x, u) = |u^i_s - u^1_s - M_i|$.

Then if $u^i_s> u^1_s - M_i$, then $V^i(x(t), u(t))$ can be differentiated for almost every $t$ (using the same techniques as in Section~\ref{app:continuous_proof}):

\begin{equation*}
\begin{aligned}
\diff {V^i} & = \alpha(t) (f^i_{s, u^i}(x_s(t)) - f^1_{s, u^i}(x_s(t)) - u^i_s(t) + u^1_s(t) \\
& \leq \alpha(t) ((1-\delta) M_i + \delta V^i(x, u) + \delta M_i - u^i_s(t) + u^1_s(t)
& \leq \alpha(t)(\delta-1)V^i(x, u)
\end{aligned}
\end{equation*}

And similar calculations for the case $u^i_s \leq u^1_s - M_i$ give the same results.

Therefore $V^i$ is a Lyapunov function and $L \subseteq {V^i}^{-1}(\{0\})$, hence the result.

\hypertarget{sec:fullproof}{
  \section{Proof of Theorem~\ref{thm:asyncapprox}}
  \label{sec:fullproof}
}

In this subsection, we show a proof of Theorem~\ref{thm:asyncapprox}. It is a modification of Theorem 3.1 of \cite{perkinsAsynchronousStochasticApproximation2012}. In order to carry the proof, we first need a general theorem found in \cite{benaimStochasticApproximationsDifferential2005}:

\begin{theorem}[Linear interpolation are asyptotic pseudo-trajectories]\label{thm:linapt}
  Consider the stochastic approximation process
\begin{equation}
  \sax_{n+1}-\sax_n \in \gamma_n\left[F(\sax_n)+Y_{n+1}+d_{n+1}\right]
  \label{eq:full:discrete}
\end{equation}

under the assumptions:
\begin{enumerate}[label=(\roman*)]
  \item For all $T> 0$
  \begin{equation}
    \lim_{n\rightarrow \infty}\sup_{k}\left\{\left\|\sum_{i=n}^{k-1}\gamma_{i+1}Y_{i+1}\right\|; k=n+1,\ldots,m(\tau_n+T)\right\}=0
    \label{eq:full:limsup}
  \end{equation}

  where $\tau_0=0$, $\tau_n=\sum_{i=1}^n\\gamma_i$ and $m(t)=\sup\{k\geq 0; t \geq \tau_k\}$,
  \item $\tau_n \xrightarrow[n\rightarrow \infty]{} \infty$ and $\gamma_n \xrightarrow[n\rightarrow \infty]{} 0$
  \item $\sup_n\|\sax_n\|=\mathcal Y < \infty$
  \item $F$ is a Marchaud map
  \item $d_n \rightarrow 0$ as $n\rightarrow \infty$ and $\sup_n\|d_n\|=d<\infty$
\end{enumerate}
Then a linear interpolation of the iterative process $\{\sax_n\}_{n\in\mathcal N}$ given by (\ref{eq:full:discrete}) is an asymptotic pseudo-trajectory of the differential inclusion
\begin{equation}
  \diff x \in F(x)
  \label{eq:full:di}
\end{equation}
\end{theorem}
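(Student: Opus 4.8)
The plan is to reduce the statement to the known characterization that \emph{perturbed solutions} of the differential inclusion \eqref{eq:full:di} are asymptotic pseudo-trajectories, which is Theorem~4.2 of \cite{benaimStochasticApproximationsDifferential2005}. Recall that $\overline\sax(\cdot)$ is a perturbed solution of $\diff \sax \in F(\sax)$ if it is absolutely continuous and there exist a locally integrable $U(\cdot)$ and a function $\delta(t)\to 0$ with
\[
\dot{\overline\sax}(t) - U(t) \in F^{\delta(t)}(\overline\sax(t)) \quad \text{a.e.}, \qquad \lim_{t\to\infty}\ \sup_{0\le h\le T}\Big\|\int_t^{t+h}U(s)\,ds\Big\| = 0 \ \ \forall T>0,
\]
where $F^{\delta}$ is the $\delta$-enlargement of the graph of $F$. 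Thus the whole task is to exhibit the affine interpolation $\overline\sax$ of $\{\sax_n\}$ as such a perturbed solution; assumptions (i)--(v) are precisely what is needed, so the argument reproduces Proposition~1.3 together with Theorem~4.2 of \cite{benaimStochasticApproximationsDifferential2005}.

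First I would set the time change $\tau_n=\sum_{i=1}^n\gamma_i$ exactly as in the statement and define $\overline\sax$ by $\overline\sax(\tau_n)=\sax_n$ with linear interpolation on each interval $[\tau_n,\tau_{n+1}]$; then for almost every interior time $t$ the derivative $\dot{\overline\sax}(t)$ is the corresponding difference quotient, which by \eqref{eq:full:discrete} lies in $F(\sax_n)+Y_{n+1}+d_{n+1}$. I would take $U$ to be the piecewise-constant function equal to $Y_{n+1}$ on $[\tau_n,\tau_{n+1})$ and absorb everything else---namely the vanishing perturbation $d_{n+1}$ and the gap between $F(\sax_n)=F(\overline\sax(\tau_n))$ and $F(\overline\sax(t))$---into the enlargement $F^{\delta(t)}$. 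The integral condition on $U$ is then exactly hypothesis (i): the finite supremum in \eqref{eq:full:limsup}, taken over $k\le m(\tau_n+T)$, equals $\sup_{0\le h\le T}\|\int_{\tau_n}^{\tau_n+h}U(s)\,ds\|$ up to a single boundary step of length $\gamma_{n+1}\to 0$, hence it tends to $0$.

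To legitimize the enlargement I would bound the discretization error $\|\overline\sax(t)-\sax_n\|$ on $[\tau_n,\tau_{n+1}]$. Using that $F$ is Marchaud (Definition~\ref{def:marchaud}), its linear growth together with the uniform bound $\sup_n\|\sax_n\|=\mathcal Y<\infty$ of (iii) makes the drift part of the increment $O(\gamma_{n+1})$; the perturbation part is $O(\gamma_{n+1}d)$ by (v); and the noise part over a single step is controlled by \eqref{eq:full:limsup}. Since $\gamma_n\to 0$ by (ii), all three tend to $0$ uniformly in $t$, so $\overline\sax(t)\to\sax_n$ uniformly. Because $F$ has a closed graph with convex compact values, this closeness together with $d_{n+1}\to 0$ yields $F(\sax_n)+d_{n+1}\subseteq F^{\delta(t)}(\overline\sax(t))$ for a suitable $\delta(t)\to 0$. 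Hence $\overline\sax$ is a bounded perturbed solution, and Theorem~4.2 of \cite{benaimStochasticApproximationsDifferential2005} gives that it is an asymptotic pseudo-trajectory in the sense of Definition~\ref{def:apt}.

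The step I expect to be the crux is the uniform control of the discretization error in the presence of noise that is \emph{only} bounded in the averaged sense of \eqref{eq:full:limsup}: the individual increments $\sax_{n+1}-\sax_n$ need not be small merely because $\gamma_n\to 0$, since $Y_{n+1}$ may be large. The resolution is never to estimate $Y_{n+1}$ pointwise but to route it entirely through $U$ and invoke only the cumulative bound \eqref{eq:full:limsup}, so that the genuinely vanishing pieces (the drift and $d_n$) are all that feed $\delta(t)$. A secondary technical point is verifying that the passage from the discrete sum in \eqref{eq:full:limsup} to the continuous integral of $U$ loses only a negligible boundary term, which holds because each step has length $\gamma_{n+1}\to 0$.
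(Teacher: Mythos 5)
Your proposal is correct and takes essentially the same route as the paper: the paper does not prove this theorem itself but imports it from \cite{benaimStochasticApproximationsDifferential2005} (Proposition~1.3 together with Theorem~4.2 --- the affine interpolation is a bounded perturbed solution, and bounded perturbed solutions are asymptotic pseudo-trajectories), which is exactly the argument you reconstruct, including the absorption of $d_{n+1}$ and of the interpolation error into the $\delta$-enlargement and the control of the single-step noise via assumption~(i) with $k=n+1$.
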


\begin{proof}[Proof of Theorem~\ref{thm:asyncapprox}]

  We are going to use Theorem~\ref{thm:linapt} and the four conditions must be verified for stochastic process~\ref{eq:discreteasyncperkins} so as its linear interpolation is an asymptotic pseudo-trajectory of \ref{eq:meanf}.

  To do this, we first define the discrete time system that Theorem~\ref{thm:linapt} will be applied to. We define $\tilde M_n := \operatorname{diag}(\max\{1_{s\in I_n}\frac{\gamma_{s^\sharp_n}}{\overline \gamma_n}, \epsilon\})$. Note that, consequently, $\tilde M_n \in \text{diag}(\text{conv}(\mathcal S) \bigcap [\epsilon, 1]^K) = \Omega^\epsilon_{k, \mathcal S}$. 
   We select $f_n \in F(x_n)$ in the differential inclusion so as for every $n$, $\sax_{n+1}=\sax_n + \overline \gamma_{n+1}M_{n+1}\left[f_n+Y_{n+1}+d_{n+1}\right]$. 
   Then define $\overline Y_{n+1} := f_n(M_{n+1}-\tilde M_{n+1})+M_{n+1}V_{n+1}$, that is to say that $\overline Y_{n+1}$ 
   is the noise $Y_{n+1}$ plus the error induced by the fact that every state is updated 
   at a minimum $\epsilon$ rate. Then we have 
   $\sax_{n+1}=\sax_n + \overline \gamma_{n+1}\left[\tilde M_{n+1} f_n + \overline Y_{n+1}+\overline d_{n+1}\right]$\LB{$\overline d_n$}.

  So $\sax_{n+1}-\sax_n \in \overline \gamma_{n+1}\left(\Omega^\epsilon_{K, \mathcal S}\cdot F(\sax_n) + \overline Y_{n+1} + \overline d_{n+1}\right)$.

  And now we verify assumptions of Theorem~\ref{thm:linapt}:

  \begin{enumerate}[label=(\roman*)]
    \item For $T>0$:
    \begin{equation*}
    \begin{multlined}
      \sup_{k}\left\{\left\|\sum_{i=n}^{k-1}\overline \gamma_{i+1} \overline Y_{i+1}\right\|; k = n+1,\ldots,\overline m(\overline \tau_n+T)\right\} \\ \leq \sup_{k}\left\{\left\|\sum_{i=n}^{k-1}\overline \gamma_{i+1} M_{i+1} Y_{i+1}\right\|; k = n+1,\ldots,\overline m(\overline \tau_n+T) \right\} \\ + \sup_{k}\left\{ \left\|\sum_{i=n}^{k-1}\overline \gamma_{i+1} f_i (M_{i+1} - \tilde M_{i+1})\right\|; k = n+1,\ldots,\overline m(\overline \tau_n+T)\right\}
    \end{multlined}
    \end{equation*}

    The first part of the sum goes to $0$ via classical Kushner-Clark condition and assumptions (iii) and (vi), the proof is detailled in Lemma 3.3 of \cite{perkinsAsynchronousStochasticApproximation2012}. Regarding the second part, it is exactly Lemma 3.6 of \cite{perkinsAsynchronousStochasticApproximation2012} and this applies because of assumptions (iii), (iv) and (v).
    \item This is assumption (iii).
    \item This is assumption (i) of Theorem~\ref{thm:linapt}.
    \item The map is  $\overline F(\sax) := \Omega^\epsilon_{k, \mathcal S} \cdot F(\sax)$ and it is Marchaud because $F$ is Marchaud (assumption (ii)) and $\Omega^\epsilon_{k, \mathcal S}$ is compact (so every property of Definition~\ref{def:marchaud} holds).
    \item This is assumption (vii).
  \end{enumerate}

  So Theorem~\ref{thm:linapt} applies and gives the desired result.

\end{proof}



\section{Technical Lemmas}
\label{app:technical}
Discrete-time Grönwall can be found in the literature with various assumptions. For the sake of completeness, we include here a version that matches the assumptions we have in our paper, with the associated proof. It is a differential version with error terms.

{
\NewDocumentCommand{\g}{O{n}}{g_{#1}}
\NewDocumentCommand{\y}{O{n}}{y_{#1}}
\NewDocumentCommand{\bn}{O{n}}{b_{#1}}
\NewDocumentCommand{\vn}{O{n}}{v_{#1}}
\NewDocumentCommand{\prodg}{O{n}}{\Pi_{k=0}^{#1}(1+\g[k])}
\NewDocumentCommand{\sumb}{O{n}}{\sum_{k=0}^{#1} \bn[k]}

\begin{lemma}[Discrete-Time Grönwall]
Let $\{\y\}$, $\{\g\}$, $\{\bn\}$ sequences of real numbers such that $1 > 1+\g > 0$ for all $n$ and:
$$\y[n+1]-\y \leq \g[n+1] \y + \bn[n+1]$$

Then $\y \leq y_0 \prodg + \sum_{k=0}^n \bn[k]$.
\label{lem:discretegronwall}
\end{lemma}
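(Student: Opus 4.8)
The plan is to treat this as the discrete analogue of Grönwall's inequality and prove it by the integrating-factor (equivalently, telescoping) trick. First I would rewrite the hypothesis $y_{n+1}-y_n \le g_{n+1}y_n + b_{n+1}$ in the multiplicative form
\[
  y_{n+1} \le (1+g_{n+1})\,y_n + b_{n+1},
\]
which is where the sign condition $1+g_n > 0$ first matters: it guarantees that the partial products $Q_n := \prod_{k=1}^n(1+g_k)$ (with $Q_0 := 1$) are strictly positive, so I may divide by $Q_{n+1} = (1+g_{n+1})Q_n$ without reversing inequalities. This produces the telescoping estimate
\[
  \frac{y_{n+1}}{Q_{n+1}} - \frac{y_n}{Q_n} \le \frac{b_{n+1}}{Q_{n+1}}.
\]

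Next I would sum this inequality from $0$ to $n-1$, collapsing the left-hand side and yielding the sharp bound
\[
  y_n \le y_0\,Q_n + \sum_{j=1}^n b_j \prod_{k=j+1}^n(1+g_k).
\]
From here the stated (looser) inequality follows by invoking the upper bound $1+g_k < 1$ a second time: every finite product $\prod_{k=i}^n(1+g_k)$ of such factors is at most $1$, so each tail product in the error sum may be replaced by $1$, leaving the sum $\sum_k b_k$, while the coefficient of $y_0$ is kept as the product of the factors $1+g_k$. Equivalently, one can bypass the division and argue by a direct induction on $n$, the inductive step being exactly the substitution above followed by discarding the factors bounded by $1$.

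The computation is entirely routine; the only genuine point to watch is the bookkeeping of the index ranges (which $g_k$ enter which product) and, relatedly, the direction in which the factors $1+g_k \in (0,1)$ are used: they must \emph{shrink} both the propagated initial value $y_0$ and the accumulated perturbations $b_j$, which is precisely what licenses the collapse of the weighted error sum to $\sum_k b_k$. Had the factors exceeded $1$ the products would amplify the $b_j$ and no such clean bound would hold, so this monotone-contraction property, rather than any hard estimate, is the crux of the argument.
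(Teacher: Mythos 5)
Your route is essentially the paper's own: the paper works with the same integrating factor, showing that $v_n := \bigl(y_n - \sum_{k=0}^n b_k\bigr)/\prod_{k=0}^n(1+g_k)$ is non-increasing, which is just a repackaging of your telescoping of $y_n/\prod_k(1+g_k)$ followed by relaxation of the weighted error sum. So there is no methodological divergence to report; the computation is routine either way.

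There is, however, one step in your argument that does not follow from the stated hypotheses, and it is exactly the step you identify as the crux: replacing each tail product $\prod_{k=j+1}^n(1+g_k)\le 1$ by $1$ in the error sum only \emph{increases} the sum when $b_j\ge 0$; for $b_j<0$ the inequality goes the wrong way. The lemma as stated does not assume $b_n\ge 0$, and without that assumption the conclusion is actually false: take $g_k\equiv -\frac{1}{2}$, $y_0=0$, $b_1=-1$, all other $b_k=0$, with equality in the recursion; then $y_1=-1$ and $y_2=-\frac{1}{2}$, while the claimed bound gives $y_2\le \sum_{k=0}^2 b_k=-1$. So your proof (like the lemma itself) tacitly needs $b_n\ge 0$, or at least nonnegativity of the partial sums $\sum_{k\le n}b_k$. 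You are in good company: the paper's own proof has the same hidden requirement --- its final inequality, of the form $\frac{\sum_{k\le n}b_k}{\prod_{k\le n}(1+g_k)}\bigl(1-\frac{1}{1+g_{n+1}}\bigr)\le 0$, needs $\sum_{k\le n}b_k\ge 0$ since the second factor is negative --- and in the paper's application the $b_k$ are of the form $D/k^2+C/k^2\ge 0$, so nothing downstream breaks. Still, a careful writeup should add the sign hypothesis explicitly; your phrase that the contraction factors ``shrink \dots the accumulated perturbations'' is precisely where the sign of $b_j$ is silently used. (A smaller, statement-level quirk affecting both proofs: the recursion never involves $g_0$ or $b_0$, yet both appear in the stated bound, so matching your sharp bound $y_n \le y_0\prod_{k=1}^n(1+g_k)+\sum_{j=1}^n b_j\prod_{k=j+1}^n(1+g_k)$ to the stated conclusion also requires fixing the index conventions.)
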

\begin{proof}
    We define $\vn := \frac{\y - \sum_{k=0}^n \bn[k] }{\prodg}$. We show that $\vn$ is decreasing:

    $$\begin{aligned}
        \vn[n+1] - \vn  & = \frac{\y[n+1]-\y}{\prodg[n+1]} + \frac{\y}{\prodg[n+1]} - \frac{\y}{\prodg} - \frac{\sumb[n+1]}{\prodg[n+1]} + \frac{\sumb}{\prodg} \\
        & \leq \frac{\g[n+1] \y + \bn[n+1]}{\prodg} + \frac{\y}{\prodg}\left(\frac{1}{1+ \g[n+1]} -1\right) - \frac{\sumb[n+1]}{\prodg[n+1]} + \frac{\sumb}{\prodg}  \\
        & \leq \frac{\g[n+1] \y + \bn[n+1]}{\prodg} + \frac{\y}{\prodg} \frac{-\g[n+1]}{1+\g[n+1]}  - \frac{\sumb[n+1]}{\prodg[n+1]} + \frac{\sumb}{\prodg}  \\
        & \leq \frac{\bn[n+1]}{\prodg[n+1]} - \frac{\sumb[n+1]}{\prodg[n+1]} + \frac{\sumb}{\prodg} \\
        & \leq \frac{\sumb}{\prodg}\left(1 - \frac{1}{1+\g[n+1]}\right) \\
        & \leq 0
    \end{aligned}$$

    And $\vn[0] = y_0$, hence the result.

\end{proof}
}

\begin{lemma}[Bound on $\durate$]\label{lem:bounddiscralpha} Under hypothesis \ref{hyp:discr}, for all $n$, $\frac{\durate}{\sdurate} \leq \frac{1}{n+1}$.
\end{lemma}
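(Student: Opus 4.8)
The plan is to read the bound off directly from the non-increasing property of $\durate$ assumed in \ref{hyp:discr}. Recall that $\sdurate = \sum_{k=0}^n \durate[k]$ is a sum of exactly $n+1$ nonnegative terms. Since \ref{hyp:discr} guarantees that $\durate[k] \geq \durate$ for every $k \leq n$ (the sequence $\durate$ is non-increasing), each of these $n+1$ summands is at least $\durate$, so I would lower-bound $\sdurate \geq (n+1)\,\durate$.

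It then remains only to divide. Because $\durate > 0$ by \ref{hyp:discr}, both $\durate$ and $\sdurate$ are strictly positive, so the inequality $\sdurate \geq (n+1)\,\durate$ rearranges to $\frac{\durate}{\sdurate} \leq \frac{1}{n+1}$, which is exactly the claim. The entire content of the argument sits in the monotonicity step; the only point requiring care is the index convention, namely that the sum defining $\sdurate$ runs from $k=0$ to $k=n$ and hence contains $n+1$ terms (had it started at $k=1$, one would obtain only the weaker bound $\frac{1}{n}$). I do not expect any genuine obstacle here, as the estimate is a one-line consequence of non-increasingness together with positivity.
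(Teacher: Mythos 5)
Your proof is correct, but it is not the argument the paper gives. You bound the sum directly: by monotonicity, every one of the $n+1$ terms in $\sigma_n=\sum_{k=0}^n\alpha_k$ is at least $\alpha_n$, so $\sigma_n\geq (n+1)\alpha_n$, and positivity lets you divide. The paper instead proceeds by induction on $n$: the base case uses $\alpha_0/\sigma_0=1$, and the inductive step writes $\frac{\alpha_{n+1}}{\sigma_{n+1}}\leq \frac{\alpha_n}{\sigma_n}\cdot\frac{\sigma_n}{\sigma_{n+1}}\leq\frac{1}{n+1}\bigl(1-\frac{\alpha_{n+1}}{\sigma_{n+1}}\bigr)$, then solves this self-referential inequality to get $\frac{\alpha_{n+1}}{\sigma_{n+1}}\leq\frac{1}{n+2}$. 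Both arguments use exactly the same two ingredients from \ref{hyp:discr} (non-increasingness and positivity) and yield the same bound; yours is shorter and makes the reason for the bound transparent, while the paper's inductive formulation only needs the one-step relation $\sigma_{n+1}=\sigma_n+\alpha_{n+1}$ and the comparison $\alpha_{n+1}\leq\alpha_n$, rather than comparing $\alpha_n$ against all earlier terms at once — a structural difference, though of no real consequence here. One point in your favor: your remark about the index convention is well taken, since the paper's main text defines $\sigma_n=\sum_{k=0}^n\alpha_k$ while the appendix restatement of the system writes $\sigma_n=\sum_{k=1}^n\alpha_k$; the stated bound $\frac{1}{n+1}$ (and the paper's own base case $\alpha_0/\sigma_0=1$) requires the former convention, which is the one you adopted.
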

\begin{proof}
By induction: for $n=0$, $\frac{\durate}{\sdurate} = 1$. Now for $n+1$: \begin{equation*}\frac{\durate[n+1]}{\sdurate[n+1]} \leq \frac{\durate[n]}{\sdurate[n]}\frac{\sdurate[n]}{\sdurate[n+1]} \leq \frac{1}{n+1}\left(1 - \frac{\durate[n+1]}{\sdurate[n+1]}\right)\end{equation*}

As a consequence:
\begin{equation*}
    \frac{\durate[n+1]}{\sdurate[n+1]}\frac{n+2}{n+1} \leq \frac{1}{n+1}
\end{equation*}
And the result follows.
\end{proof}

\end{document}